\documentclass[draftcls,onecolumn,11pt]{IEEEtran}    %One column
\usepackage[T1]{fontenc}

\usepackage{amsmath}
\usepackage{amssymb}
\usepackage{stmaryrd}
\usepackage{epic,eepic}
\usepackage{theorem}
\usepackage{pifont}  %needed by dingautolist
\usepackage{euscript}
\usepackage{calc}
\usepackage[titles]{tocloft}
%\usepackage{showkeys}

%\usepackage[sectionbib]{natbib}
% Cross-reference package (Natural BiB)
%\usepackage{chapterbib} % Put References at the end of each chapter

\usepackage[usenames]{color}          % Need the color package
\usepackage{xcolor}
\definecolor{Brown}{rgb}{0.55,0.0,0.10}
\definecolor{dgreen}{rgb}{0.00,0.56,0.00}
\definecolor{vertmoinsfonce}{rgb}{0.00,0.50,0.00}
\definecolor{vert}{rgb}{0.00,0.60,0.00}
\definecolor{llightggray}{rgb}{0.97,0.97,0.97}
\definecolor{lightggray}{rgb}{0.9,0.9,0.9}
\definecolor{ggray}{rgb}{0.5,0.5,0.5}
\definecolor{darkggray}{rgb}{0.25,0.25,0.25}
\definecolor{ddarkggray}{rgb}{0.1,0.1,0.1}
\definecolor{bleu}{rgb}{0.00,0.00,1.00}
\definecolor{darkblue}{rgb}{0,0,0.7}

\usepackage{epsf}           % dessin
\usepackage{graphicx}       % dessin
\usepackage{epsfig}         % dessin
\usepackage{pstricks}
\usepackage{pstricks-add}
\usepackage{pst-plot}
\usepackage{dsfont}
\usepackage{pst-circ}
\usepackage{pst-math}
%\usepackage{subfigure}

%\RequirePackage[dvips,colorlinks,hyperindex]{hyperref}
%% info sur hyperref :
%% http://tug.ctan.org/cgi-bin/ctanPackageInformation.py?id=hyperref
%\hypersetup{
%linktocpage=true,
%citecolor=Brown,
%linkcolor=dgreen
%}
%
\tolerance 8000
%\evensidemargin 0.0cm
%\textwidth 16.1cm
%\topmargin -1.1cm
%\headheight 9.8mm
%\textheight 21.8cm
%\parindent 2em
%\parskip 6pt
%\baselineskip 1.0cm
%--------------------------------------------------------------
%\setcounter{secnumdepth}{5} %for subsubsections numbered report style
%\setcounter{tocdepth}{5} %for subsubsections appear table of contents

%--------------------------------------------------------------
\theoremheaderfont{\color{black}\normalfont\bfseries}

\newtheorem{lemma}{Lemma}
\newtheorem{theorem}{Theorem}[section]
\newtheorem{definition}[theorem]{Definition}
\newtheorem{corollary}[theorem]{Corollary}
\newtheorem{proposition}[theorem]{Proposition}

%[in_counter]

%\newtheorem{theorem}{Theorem}[chapter]

%\newtheorem{lemma}[theorem]{Lemma}

%\newtheorem{corollary}[theorem]{Corollary}

%\newtheorem{proposition}[theorem]{Proposition}

\theoremstyle{plain}{\theorembodyfont{\rmfamily}%
}
\theoremstyle{plain}{\theorembodyfont{\rmfamily}%
}
\theoremstyle{plain}{
\theorembodyfont{\rmfamily}

	\newtheorem{remark}[theorem]{Remark}

	}

%--------------------------------------------------------------------------

%

%--------------------------------------------------------------
%
%\newenvironment{deflist}[1][\quad\quad]%
%{\begin{list}{}{\renewcommand{\makelabel}[1]{\textrm{##1~}\hfil}%
%\settowidth{\labelwidth}{\textrm{#1~}}%
%\setlength{\leftmargin}{\labelwidth+\labelsep}}}
%{\end{list}}

%\renewcommand{\chaptername}{Chapitre}
%\renewcommand{\thechapter}{\arabic{chapter}}

% for article "A PARALLEL SPLITTING METHOD FOR COUPLED MONOTONE INCLUSIONS"

%{\ensuremath{\text{{\rm{\O}}}}}

\newcommand{\cav}{\ensuremath{\operatorname{cav}}}
%\newcommand{\kern}{\ensuremath{\operatorname{ker}}}

% for article in JMIV

%\newcommand{\cart}{\ensuremath{\mbox{\LARGE{$\times$}}}}

\newcommand{\argmin}{\ensuremath{\operatorname{argmin}}}

\newcommand{\argmax}{\ensuremath{\operatorname{argmax}}}

\newcommand{\R}{\mathbb{R}}

\newcommand{\N}{\mathbb{N}}

\newcommand{\E}{\mathbb{E}}
\newcommand{\Q}{\mathbb{Q}}

\font\dsrom=dsrom10 scaled 1200 \def \indic{\textrm{\dsrom{1}}}
\newcommand{\UN}{\indic}

\newcommand{\DD}{\mathcal{D}}
\newcommand{\QQ}{\mathcal{Q}}

\newcommand{\PP}{\mathcal{P}}
\newcommand{\EE}{\mathcal{E}}
\newcommand{\mc}{\mathcal}

% for article in Siopt

%
{\Large\normalsize}
{\Large\normalsize}

%%%Nouvelles commandes pour les figures
%\input{Includes/FiguresCommands.tex}
\input{FiguresCommands.tex}

\ifCLASSINFOpdf
  % \usepackage[pdftex]{graphicx}
  % declare the path(s) where your graphic files are
  % \graphicspath{{../pdf/}{../jpeg/}}
  % and their extensions so you won't have to specify these with
  % every instance of \includegraphics
  % \DeclareGraphicsExtensions{.pdf,.jpeg,.png}
\else
  % or other class option (dvipsone, dvipdf, if not using dvips). graphicx
  % will default to the driver specified in the system graphics.cfg if no
  % driver is specified.
  % \usepackage[dvips]{graphicx}
  % declare the path(s) where your graphic files are
  % \graphicspath{{../eps/}}
  % and their extensions so you won't have to specify these with
  % every instance of \includegraphics
  % \DeclareGraphicsExtensions{.eps}
\fi

\begin{document}

% paper title
% can use linebreaks \\ within to get better formatting as desired

\title{Information-Theoretic Limits of Strategic Communication}
%\title{Strategic Communication for the Coordination}
%\title{Strategic Communication for the Coordination of Autonomous Devices}

%other titles: 
%Strategic Communication for the Coordination of Autonomous Devices
%Strategic Transmission for the Coordination of Autonomous Devices
%Fundamental Limits of Strategic Communication
%Strategic Communication: an Interplay between Coordination and Persuasion
%

%
%
% author names and IEEE memberships
% note positions of commas and nonbreaking spaces ( ~ ) LaTeX will not break
% a structure at a ~ so this keeps an author's name from being broken across
% two lines.
% use \thanks{} to gain access to the first footnote area
% a separate \thanks must be used for each paragraph as LaTeX2e's \thanks
% was not built to handle multiple paragraphs
%

%\author{Ma\"{e}l Le Treust and Tristan Tomala}
%\author{\IEEEauthorblockN{Ma\"{e}l Le Treust}\\
%\IEEEauthorblockA{
%ETIS, UMR 8051 / ENSEA, Université Cergy-Pontoise, CNRS,\\% F-95000, Cergy
%%ETIS, CNRS UMR8051, ENSEA, Université de Cergy-Pontoise,\\
%6, avenue du Ponceau,\\
%95014 CERGY-PONTOISE CEDEX,\\
%FRANCE\\
%Email: mael.le-treust@ensea.fr}
%}

\author{\IEEEauthorblockN{Ma\"{e}l Le Treust\IEEEauthorrefmark{1} and 
Tristan Tomala \IEEEauthorrefmark{2}}\\
\IEEEauthorblockA{\IEEEauthorrefmark{1}
ETIS UMR 8051, Université Paris Seine, Université Cergy-Pontoise, ENSEA, CNRS,\\
6, avenue du Ponceau, 95014 Cergy-Pontoise CEDEX, FRANCE\\
Email: mael.le-treust@ensea.fr}\\
\thanks{\IEEEauthorrefmark{1} Maël Le Treust gratefully acknowledges the support of the Labex MME-DII (ANR11-LBX-0023-01) and the DIM-RFSI of the Région Île-de-France, under grant EX032965.}
\IEEEauthorblockA{\IEEEauthorrefmark{2}
HEC Paris, GREGHEC UMR 2959\\
1 rue de la Libération, 78351 Jouy-en-Josas CEDEX, FRANCE\\
Email: tomala@hec.fr}
\thanks{\IEEEauthorrefmark{2} Tristan Tomala gratefully acknowledges the support of the HEC foundation and ANR/Investissements d'Avenir under grant ANR-11-IDEX-0003/Labex Ecodec/ANR-11-LABX- 0047.} \thanks{This work was presented in part at the 54th Allerton Conference, Monticello, Illinois, Sept. 2016 \cite{LeTreustTomala(Allerton)16}; the XXVI Colloque Gretsi, Juan-Les-Pins, France, Sept. 2017 \cite{LeTreustTomala(Gretsi)17}; the International Zurich Seminar on Information and Communication, Switzerland, Feb. 2018 \cite{LeTreustTomala(IZS)18}. The authors would like to thank Institute Henri Poincaré (IHP) in Paris, France, for hosting numerous research meetings.}}

%% The paper headers
%\markboth{Journal of \LaTeX\ Class Files,~Vol.~6, No.~1, January~2007}%
%{Shell \MakeLowercase{\textit{et al.}}: Bare Demo of IEEEtran.cls for Journals}

% make the title area
\vspace{-0.9cm}

\maketitle

% For peer review articles, you can put extra information on the cover
% page as needed:
% \ifCLASSOPTIONpeerreview
% \begin{center} \bfseries EDICS Category: 3-BBND \end{center}
% \fi
%
% For peerreview articles, this IEEEtran command inserts a page break and
% creates the second title. It will be ignored for other modes.

\IEEEpeerreviewmaketitle

%\newpage

%\newpage

%\tableofcontents

%%%%%%%%%%%%%%%%%%%%%%%%%%%%%%%%%%%%%%%%%%%%%%%%%%%%%%%%%%%%%%%%%%%%%%%%%%%%%%%%%%%%%%%%%%%%%%%%%%%%%%%%%%%%%%%%%%%%%%%%%%%%%%%%%%%%%%%%%%%%%%%%%%%%%%%%%%%%%%%%%%%%%%%%%%%%%%%%%%%%%%%%%%%%%%%%%%%%%%%%%%%%%%%%%%%%%%%%%%%%%%%%%%%%%%%%%%%%%%%%%%%%%%%%%%%%%%%%%%%%%%%%%%%%%%%%%%%%%%%%%%%%%%%%%%%%%%%%%%%%%%%%%%%%%%%%%%%%%%%%%%%%%%%%%%%%%%%%

\vspace{-1.2cm}

\begin{abstract}

In this article, we investigate strategic information transmission over a noisy channel. This problem has been widely investigated in Economics, when the communication channel is perfect. Unlike in Information Theory, both encoder and decoder have distinct objectives and choose their encoding and decoding strategies accordingly. This approach radically differs from the conventional Communication paradigm, which assumes transmitters are of two types: either they have a common goal, or they act as opponent, e.g. jammer, eavesdropper. We formulate a point-to-point source-channel coding problem with state information, in which the encoder and the decoder choose their respective encoding and decoding strategies in order to maximize their long-run utility functions. This strategic coding problem is at the interplay between Wyner-Ziv's scenario and the Bayesian persuasion game of Kamenica-Gentzkow. We characterize a single-letter solution and we relate it to the previous results by using the concavification method. This confirms the benefit of sending encoded data bits even if the decoding process is not supervised, e.g. when the decoder is an autonomous device. Our solution has two interesting features: it might be optimal not to use all channel resources; the informational content impacts the encoding process, since utility functions capture preferences on source symbols.

\end{abstract}

%%%%%%%%%%%%%%%%%%%%%%%%%%%%%%%%%%%%%%%%%%%%%%%%%%%%%%%%%%%%%%%%%%%%%%%%%%%%%%%%%%%%%%%%%%%%%%%%%%%%%%%%%%%%%%%%%%%%%%%%%%%%%%%%%%%%%%%%%%%%%%%%%%%%%%%%%%%%%%%%%%%%%%%%%%%%%%%%%%%%%%%%%%%%%%%%%%%%%%%%%%%%%%%%%%%%%%%%%%%%%%%%%%%%%%%%%%%%%%%%%%%%%%%%%%%%%%%%%%%%%%%%%%%%%%%%%%%%%%%%%%%%%%%%%%%%%%%%%%%%%%%%%%%%%%%%%%%%%%%%%%%%%%%%%%

\section{Introduction}\label{sec:Introduction}

What are the limits of information transmission between autonomous devices? The internet of things (IoT) considers pervasive presence in the environment of a variety of devices that are able to interact and coordinate with each other in order to create new applications/services and reach common goals. Unfortunately the goals of wireless devices are not always common, for example adjacent access points in crowded downtown areas, seeking to transmit at the same time, compete for the use of bandwidth. Such situations require new efficient techniques to coordinate the actions of the devices whose objectives are \emph{neither aligned, nor antagonistic}. This question differs from the classical paradigm in Information Theory which assumes that devices are of two types: transmitters pursue the common goal of transferring information, while opponents try to mitigate the communication (e.g. the jammer corrupts the information, the eavesdropper infers it, the warden detects the covert transmission). In this work, we seek to provide an information-theoretic look at the intrinsic limits of the strategic communication between interacting autonomous devices having non-aligned objectives.

%What if the devices were able to interact while having not aligned, nor antagonistic objectives?
%What would happen if the devices could interact without alignment or conflicting objectives?
%the devices' objective are not always common
%The devices may also have different objectives, 
%We are interested in the communication between devices with not aligned, nor antagonistic.
%autonomous decision-makers
% investigate a sender-receiver game where the sender observes a source and 
%. sender and the receiver choose their respective strategy in order to maximize their own utility function. The question is how the sender should \emph{disclose optimally} his observation

The problem of ``strategic information transmission'' has been well studied in the Economics literature since the seminal paper by Crawford-Sobel \cite{CrawfordSobel1982StrategicInformation}. In their model, a better-informed sender transmits a signal to a receiver, who takes an action which impacts both sender/receiver's utility functions. The problem consists in determining the  \emph{optimal information disclosure} policy given that the receiver' best-reply action will impact the sender's utility, see \cite{Forges94} for a survey. In \cite{KamenicaGentzkow11}, Kamenica-Gentzkow define the ``Bayesian persuasion game'' where the sender \emph{commits} to an information disclosure policy before the game starts. This subtle change of rules of the game induces a very different equilibrium solution related to \emph{Stackelberg equilibrium} \cite{stackelberg-book-1934}, instead of \emph{Nash equilibrium} \cite{Nash51}. This problem was later referred to as ``information design'' in \cite{BergemannMorris16}, \cite{Taneva16}, \cite{BergemannMorris17}. In most of the articles in the Economics literature, the transmission between the sender and the receiver is noise-free; except the following ones \cite{TsakasTsakas2017}, \cite{Blume}, \cite{HernandezVonStengel14} where the noisy transmission is investigated in a finite block-length regime. Noise-free transmission does not require the use of block coding techniques for compressing the information and mitigating the noise effects. Interestingly, Shannon's mutual information is widely accepted as a \emph{cost of information} for the problem of ``rational inattention'' in \cite{Sims03} and for the problem of ``costly persuasion'' in \cite{GentzkowKamenica14}; in both cases there is no explicit reference to the coding problem.

Entropy and mutual information appear endogenously in repeated games with finite automata and bounded recall \cite{NeymanOkada99}, \cite{NeymanOkada00}, \cite{NeymanOkada09}, with private observation \cite{GossnerVieille02}, or with imperfect monitoring \cite{GossnerTomala06}, \cite{GossnerTomala07}, \cite{GossnerLarakiTomala09}. In \cite{GossnerHernandezNeyman06}, the authors investigate a sender-receiver game with common interests by formulating a coding problem and by using tools from Information Theory. In their model, the sender observes an infinite source of information and communicates with the receiver through a perfect channel of fixed alphabet. This result was later refined by Cuff in  \cite{Cuff(ImplicitCoordination)11} and referred to as the  ``coordination problem'' in several articles  \cite{CuffPermuterCover10}, \cite{CuffSchieler11}, \cite{LeTreust(EmpiricalCoordination)17}, \cite{CerviaLuzziLeTreustBloch(IT)18}.

\begin{figure}[!ht]
\begin{center}
\psset{xunit=0.9cm,yunit=0.9cm}
\begin{pspicture}(0,-1.1)(8.5,1.7)
\pscircle(0,0.5){0.45}
\psframe(2,0)(3,1)
\pscircle(5,0.5){0.45}
\psframe(7,0)(8,1)
\psline[linewidth=1pt]{->}(0.5,0.5)(2,0.5)
\psline[linewidth=1pt]{->}(3,0.5)(4.5,0.5)
\psline[linewidth=1pt]{->}(5.5,0.5)(7,0.5)
\psline[linewidth=1pt]{->}(8,0.5)(9,0.5)
\psline[linewidth=1pt]{->}(0,0)(0,-0.5)(7.5,-0.5)(7.5,0)
\rput[u](1,-0.2){$Z^{n}$}
\rput[u](1,0.8){$U^{n}$}
\rput[u](3.75,0.8){$X^n$}
\rput[u](6.25,0.8){$Y^n$}
\rput[u](8.5,0.8){$V^n$}
\rput(0,0.5){$\PP$}
\rput(5,0.5){$\mc{T}$}
\rput(2.5,0.5){$\EE$}
\rput(7.5,0.5){$\DD$}
\rput(2.5,1.5){$\phi_{\textsf{e}}(u,v)$}
\rput(7.5,1.5){$\phi_{\textsf{d}}(u,v)$}
\end{pspicture}
\caption{The information source is i.i.d. $\PP(u,z)$ and the channel $\mc{T}(y|x)$ is memoryless. The encoder $\EE$ and the decoder $\DD$ are endowed with distinct utility functions $\phi_{\textsf{e}}(u,v) \in \R$ and $\phi_{\textsf{d}}(u,v) \in \R $. }
\label{fig:StrategicEmpiricalCoordination}
\end{center}
\end{figure}

In the literature of Information Theory, the Bayesian persuasion game is investigated in   \cite{AkyolLangbortBasarIEEE17} for Gaussian source and channel with Crawford-Sobel's quadratic cost functions. The authors compute the linear equilibrium's encoding and decoding strategies and relate their results to the literature on ``decentralized stochastic control''. In \cite{SaritasYukselGezici17}, \cite{SaritasYukselGezici(TAC)17}, the authors extend the model of Crawford-Sobel to multidimentional sources and noisy channels. They determine whether the optimal encoding policies are linear or based on  quantification. Sender-receiver games are also investigated in \cite{AkyolLangbortBasar(GSIP)17}, \cite{FarokhiTeixeiraLangbort17} for the problem of ``strategic estimation'' involving self-interested sensors; and in \cite{MarecekShortenYu15} for the ``strategic congestion control'' problem. In \cite{DughmiXu16}, \cite{Dughmi17}, \cite{DughmiKempeQiang16}, the authors investigate the computational aspects of the Bayesian persuasion game when the signals are noisy. In \cite{BerryTse(ShannonMetNash)11}, \cite{PerlazaTandonPoorHan15} the interference channel coding problem is formulated  as a game in which the users, i.e. the pairs of encoder/decoder, are allowed to use \emph{any} encoding/decoding strategy. The authors compute the set of Nash equilibria for linear deterministic and Gaussian channels.

The non-aligned devices' objectives are captured by utility functions, defined in a similar way as the distortion function for lossy source coding. Coding for several distortion measures is investigated for ``multiple descriptions coding'' in  \cite{GamalCover82}, for the lossy version of ``Steinberg's common reconstruction'' problem in \cite{LapidothMalarWigger14}, for an alternative measure of ``secrecy'' in \cite{Yamamoto88}, \cite{Yamamoto97}, \cite{SchielerCuff(RateDistortion14)}, \cite{SchielerCuff(Henchman)16}. In  \cite{Lapidoth97}, Lapidoth investigates the ``mismatch source coding problem'' in which the sender is constrained by Nature, to encode according to a different distortion than the decoder's one. The coding problems in these works consider two types of transmitters: either they pursue a common goal, or they behave as opponents.

The problem of information transmission has therefore been addressed from two complementary point of views: the economists consider noiseless environment and non-aligned objectives whereas the information theorists consider noisy environment and the objectives are either aligned or opposed. In this work, we propose a novel framework in order to investigate both aspects simultaneously, by considering noisy environment and non-aligned transmitters' objectives. We formulate the \emph{strategic coding problem} by considering a joint source-channel scenario in which the decoder observes a state information à la Wyner-Ziv \cite{wyner-it-1976}. This model generalizes the framework we introduced in our previous work, in \cite{LeTreustTomala17}. The encoder and the decoder are endowed with distinct utility functions and they play a multi-dimensional version of the Bayesian persuasion game of Kamenica-Gentkow in \cite{KamenicaGentzkow11}. We point out two essential features of the strategic coding problem:
\begin{itemize}
\item[$\bullet$] Each source symbol has a different impact on encoder/decoder's utility functions; so it's optimal to encode each symbol differently.
\item[$\bullet$] In the noiseless version of the Bayesian persuasion game in \cite{KamenicaGentzkow11}, the optimal information disclosure policy requires a fixed amount of information bits. When the channel capacity is larger than this amount, it is optimal not to use all the channel resource.
\end{itemize}
The contributions of this article are as follows:
\begin{itemize}
\item[$\bullet$] We characterize the single-letter solution of the strategic coding problem and we relate it to Wyner-Ziv's rate-distortion function \cite{wyner-it-1976} and to the separation result by Merhav-Shamai \cite{MerhavShamai03}. This characterization is based on the set of probability distributions that are achievable for the related coordination problem, under investigation in \cite[Theorem IV.2]{LeTreust(ISIT-TwoSided)15}.
\item[$\bullet$] We reformulate our single-letter solution in terms of a concavification as in Kamenica-Gentkow \cite{KamenicaGentzkow11}, taking into account the information constraint imposed by the noisy channel. This provides an alternative point of view on the problem: given an encoding strategy, the decoder computes its posterior beliefs and chooses a best-reply action accordingly. Knowing this in advance, the encoder discloses the information optimally such as to induce the posterior beliefs corresponding to its optimal actions.
\item[$\bullet$] We reformulate our solution in terms of a concavification of a Lagrangian, so as to relate with the \emph{cost of information} considered for rational inattention in \cite{Sims03} and for the costly persuasion in \cite{GentzkowKamenica14}. We also provide a bi-variate concavification where the information constraint is integrated along an additional dimension.
\item[$\bullet$] One technical novelty is the characterization of the posterior beliefs induced by Wyner-Ziv's coding. This confirms the benefit of sending encoded data bits to a autonomous decoder, even  if the decoding process is not controlled. In fact, we prove that Wyner-Ziv's coding reveals the \emph{exact} amount of information needed, no less no more. This property also holds for Shannon's Lossy source coding, as demonstrated in \cite{LeTreustTomala17}. 
\item[$\bullet$] When the channel is perfect and has a large input alphabet, our coding problem is equivalent to several i.i.d. copies of the one-shot problem, whose optimal solution is given by our characterization without the information constraint. This noise-free setting is related to the problem of ``persuasion with heterogeneous beliefs'' under investigation in \cite{AlonsoCamara(JET)2016} and \cite{LaclauRenou17}.
\item[$\bullet$] We illustrate our results by considering an example with binary source, states and decoder's actions. We explain the concavification method and we analyse the impact of the channel noise on the set of achievable posterior beliefs.
\item[$\bullet$] Surprisingly, the decoder's state information has two opposite effects on the optimal encoder's utility: it enlarges the set of decoder's posterior beliefs, so it may increase encoder's utility; it reveals partial information to the decoder, so it forces some decoder's best-reply actions that might be sub-optimal for the encoder, hence it may decrease encoder's utility.
\end{itemize}

The strategic coding problem is formulated in Sec. \ref{sec:StrategicCoding}. The encoding and decoding strategies and the utility functions are defined in Sec. \ref{sec:StrategyUtility}. The persuasion game is introduced in Sec. \ref{sec:PersuasionGame}. Our coding result and the four different characterizations are stated in Sec. \ref{sec:Characterization}. The first one is a linear program under an information constraint, formulated in Sec. \ref{sec:LinearProgram}. The main Theorem is stated in Sec.\ref{sec:MainResult}. In Sec. \ref{sec:Concavification}, we reformulate our solution in terms of three different concavifications, related to the previous results. Sec. \ref{sec:ExampleBinary} provides an example based on a binary source, binary states and binary decoder's actions. The solution is provided without information constraint in Sec. \ref{sec:ExampleConcavificationNo} and with information constraint in Sec. \ref{sec:concavificationIC}. The proofs are stated in App \ref{sec:ProofConcavification} - \ref{sec:ProofLemmaPosteriors}.

%%%%%%%%%%%%%%%%%%%%%%%%%%%%%%%%%%%%%%%%%%%%%%%%%%%%%%%%%%%%%%%%%%%%%%%%%%%%%%%%%%%%%%%%%%%%%%%%%%%%%%%%%%%%%%%%%%%%%%%%%%%%%%%%%%%%%%%%%%%%%%%%%%%%%%%%%%%%%%%%%%%%%%%%%%%%%%%%%%%%%%%%%%%%%%%%%%%%%%%%%%%%%%%%%%%%%%%%%%%%%%%%%%%%%%%%%%%%%%%%%%%%%

\section{Strategic Coding Problem}\label{sec:StrategicCoding}

\subsection{Coding Strategies and Utility Functions}\label{sec:StrategyUtility}

We consider the i.i.d. distribution of information source/state $\PP(u,z)$ and the memoryless channel  distribution $\mc{T}(y|x)$ depicted in Fig. \ref{fig:StrategicEmpiricalCoordination}. Uppercase letters $U$ denote the random variables, lowercase letters $u$ denote the realizations and calligraphic fonts $\mc{U}$ denote the alphabets. Notations $U^n$, $X^n$, $Y^n$, $Z^n$, $V^n$ stand for sequences of random variables of information source $u^n=(u_1,\ldots,u_n)\in\mc{U}^n$, decoder's state information $z^n\in\mc{Z}^n$, channel inputs $x^n\in\mc{X}^n$, channel outputs $y^n\in\mc{Y}^n$ and decoder's actions $v^n\in\mc{V}^n$, respectively. The sets $\mc{U}$, $\mc{Z}$, $\mc{X}$, $\mc{Y}$, $\mc{V}$ have finite cardinality and the notation $\Delta(\mc{X})$ stands for the set of  probability distributions over $\mc{X}$, i.e. the probability simplex. With a slight abuse of notation, we denote by $\QQ(x)\in \Delta(\mc{X})$ the probability \emph{distribution}, as in \cite[pp. 14]{cover-book-2006}. For example the joint probability distribution $\QQ(x,v)   \in \Delta(\mc{X}\times \mc{V})$ decomposes as: $\QQ(x,v)  =\QQ(v) \times \QQ(x|v) =\QQ(x) \times \QQ(v|x) $. The distance between two probability distributions $\QQ(x)$ and $\PP(x)$ is based on $L^1$ norm, denoted by: $||\QQ - \PP||_{1}= \sum_{x\in\mc{X}} |\QQ(x) - \PP(x)|$. Notation $U  -\!\!\!\!\minuso\!\!\!\!-X    -\!\!\!\!\minuso\!\!\!\!-  Y$ stands for the Markov chain property corresponding to $\PP(y|x,u) = \PP(y|x)$, for all $(u,x,y)$. The encoder and the decoder and denoted by $\EE$ and $\DD$.
%The encoder, denoted by $\EE$, observes a sequence of source symbols $u^n\in \mc{U}^n$ and chooses at random a sequence of channel inputs $x^n\in \mc{X}^n$. The decoder, denoted by $\DD$, observes a sequence of channel outputs  $y^n\in \mc{Y}^n$ and state information  $z^n\in \mc{Z}^n$ before choosing at random a sequence of actions $v^n\in \mc{V}^n$. 

\begin{definition}[Encoding and Decoding Strategies]\label{def:Code}$\;$\\
$\bullet$ The encoder $\EE$ chooses an encoding strategy $\sigma$ and the decoder $\DD$ chooses a decoding strategy $\tau$, defined by:
\begin{align}
\sigma& : \mc{U}^{n} \longrightarrow \Delta(\mc{X}^n)  ,\label{eq:EncodingFunction}\\
\tau& : \mc{Y}^n \times   \mc{Z}^n  \longrightarrow  \Delta( \mc{V}^n)  . \label{eq:DecodingFunction}
\end{align}
Both strategies $(\sigma,\tau)$ are stochastic.\\
$\bullet$ The strategies $(\sigma,\tau)$ induces a joint probability distribution $\PP_{\sigma,\tau} \in\Delta(\mc{U}^{n} \times\mc{Z}^{n} \times\mc{X}^{n}\times\mc{Y}^{n}  \times\mc{V}^{n} )$ over the $n$-sequences of symbols, defined by:
%\begin{small}
\begin{align}
 &\prod_{i=1}^n\PP\big(u_i,z_i \big) \times\sigma\big(x^n\big| u^n \big)%\nonumber \\ &
 \times \prod_{i=1}^n \mc{T}\big(y_i \big|x_i\big) \times \tau\big(v^n \big| y^n ,z^n\big).
\end{align}
%\end{small}
%&\prod_{i=1}^n\PP\Big(u_i,z_i \Big) \times \PP_{\sigma} \Big(X^n\Big| U^n \Big)\nonumber \\
% &\times \prod_{i=1}^n \mc{T}\Big(Y_i \Big|X_i\Big) \times \PP_{\tau} \Big(V^n \Big| Y^n ,Z^n\Big).
\end{definition}

The encoding and decoding strategies $(\sigma,\tau)$ correspond to the problem of {joint source-channel coding with decoder's state information} studied in \cite{MerhavShamai03}, based on Wyner-Ziv's setting in \cite{wyner-it-1976}. Unlike these previous works, we investigate the case where the encoder and the decoder are autonomous decision-makers, who choose their own encoding $\sigma$ and decoding $\tau$ strategies. In this work, the encoder and the decoder have non-aligned objectives captured by distincts utility functions, depending on the source symbol $U$, the decoder's state information $Z$ and on decoder's action $V$.

\begin{definition}[Utility Functions]\label{def:Utilities} 
$\bullet$ The single-letter utility functions of $\EE$ and $\DD$ are defined by:
\begin{align}
\phi_{\textsf{e}} : \mc{U} \times \mc{Z} \times \mc{V} \longrightarrow \R,\\
\phi_{\textsf{d}} : \mc{U} \times \mc{Z} \times \mc{V} \longrightarrow \R.
\end{align}
%\begin{definition}[long-run utilities]\label{def:Utilities} 
$\bullet$ The long-run utility functions $\Phi_{\textsf{e}}^n(\sigma,\tau)$ and $\Phi_{\textsf{d}}^n(\sigma,\tau)$ are evaluated with respect to the probability distribution $\PP_{\sigma,\tau}$ induced by the strategies $(\sigma,\tau)$:
\begin{align}
\Phi_{\textsf{e}}^n(\sigma,\tau) =& \E_{\sigma,\tau} \Bigg[ \frac{1}{n} \sum_{i=1}^n \phi_{\textsf{e}}(U_i,Z_i,V_i) \Bigg] \nonumber\\
=& \sum_{u^n,z^n,v^n}\PP_{\sigma,\tau}\big(u^n,z^n,v^n \big) \cdot  \Bigg[  \frac{1}{n} \sum_{i=1}^n \phi_{\textsf{e}}(u_i,z_i,v_i) \Bigg],\\
\Phi_{\textsf{d}}^n(\sigma,\tau) =& \sum_{u^n,z^n,v^n}\PP_{\sigma,\tau}\big(u^n,z^n,v^n \big) \cdot  \Bigg[  \frac{1}{n} \sum_{i=1}^n \phi_{\textsf{d}}(u_i,z_i,v_i) \Bigg].
\end{align}
\end{definition}

\subsection{Bayesian Persuasion Game}\label{sec:PersuasionGame}

We investigate the strategic communication between autonomous devices who choose the encoding $\sigma$ and decoding $\tau$ strategies in order to \emph{maximize} their own long-run utility functions $\Phi_{\textsf{e}}^n(\sigma,\tau)$ and $\Phi_{\textsf{d}}^n(\sigma,\tau)$. We assume that the encoding strategy $\sigma$ is designed and observed by the decoder in advance, i.e. before the transmission starts; then the decoder \emph{is free to choose any decoding strategy} $\tau$. This framework corresponds to the \emph{Bayesian persuasion game} \cite{KamenicaGentzkow11}, in which the encoder commits to its strategy $\sigma$ and announces it to the decoder, who chooses strategy $\tau$ accordingly. We assume that the strategic communication takes place as follows:
\begin{itemize}
\item[$\bullet$] The encoder $\EE$ chooses and announces an encoding strategy $\sigma$ to the decoder $\DD$. 
\item[$\bullet$] The sequences $(U^n, Z^n,X^n,Y^n)$ are drawn according to the probability distribution: $\prod_{i=1}^n\PP(u_i,z_i) \times \sigma(x^n|u^n)\times \prod_{i=1}^n\mc{T}(y_i|x_i)$.
\item[$\bullet$] The decoder $\DD$ knows $\sigma$, observes the sequences of symbols $(Y^n,Z^n)$  and draws a sequence of actions $V^n$ according to $\tau(v^n|y^n,z^n)$.
\end{itemize}
%Since the encoding strategy $\sigma$ is fixed before the transmission starts, it cannot be modified afterwards. 
By knowing $\sigma$ in advance, the decoder $\DD$ can compute the set of best-reply decoding strategies. 
\begin{definition}[Decoder's Best-Replies]\label{def:BestReply} 
For any encoding strategy $\sigma$, the set of best-replies decoding strategies $ \textsf{BR}_{\textsf{d}}(\sigma)$, is defined by:
\begin{align}
 \textsf{BR}_{\textsf{d}}(\sigma) =& \bigg\{\tau ,\text{ s.t. } \;  \Phi_{\textsf{d}}^n(\sigma, \tau) \geq \Phi_{\textsf{d}}^n(\sigma, \widetilde{\tau})  , \; \forall \widetilde{\tau} \neq \tau \bigg\}.
\end{align}
\end{definition}
In case there are several best-reply strategies, we assume that the decoder chooses the one that minimizes encoder's utility: $\min_{\tau \in \textsf{BR}_{\textsf{d}}(\sigma)} \Phi_{\textsf{e}}^n(\sigma, \tau)$, so that encoder's utility is robust to the exact specification of decoder's strategy. 

The coding problem under investigation consists in maximizing the encoder's long-run utility:
\begin{align}
\sup_{\sigma}\min_{\tau \in \textsf{BR}_{\textsf{d}}(\sigma)} \Phi_{\textsf{e}}^n(\sigma, \tau). \label{eq:GameProblem}
\end{align}
Problem \eqref{eq:GameProblem} raises the following interesting question: is it optimal for an autonomous decoder to \emph{extract} the encoded information? We provide a positive answer to this question 
by showing that the actions induced by Wyner-Ziv's decoding $\tau^{\textsf{wz}}$ coincide with those induced by any best-reply $\tau^{\textsf{BR}}\in\textsf{BR}_{\textsf{d}}(\sigma^{\textsf{wz}})$ to Wyner-Ziv's encoding $\sigma^{\textsf{wz}}$, for a large fraction of stages. We characterize a single-letter solution to \eqref{eq:GameProblem}, by refining Wyner-Ziv's result for source coding with decoder's state information \cite{wyner-it-1976}. 

%Is Wyner-Ziv's decoding $\tau^{\textsf{wz}}\in\textsf{BR}_{\textsf{d}}(\sigma^{\textsf{wz}})$ a best-reply to Wyner-Ziv's encoding $\sigma^{\textsf{wz}}$? 

\begin{remark}[Stackelberg v.s. Nash Equilibrium]
The optimization problem of \eqref{eq:GameProblem} is a \emph{Bayesian persuasion game} \cite{KamenicaGentzkow11}, \cite{GentzkowKamenica14} also referred to as \emph{Information Design} problem \cite{BergemannMorris16}, \cite{Taneva16}, \cite{BergemannMorris17}. This corresponds to a Stackelberg equilibrium \cite{stackelberg-book-1934} in which the encoder is the leader and the decoder is the follower, unlike the Nash equilibrium  \cite{Nash51}  in which the two devices choose their strategy simultaneously.
\end{remark}

\begin{remark}[Equal Utility Functions]
When assuming that the encoder and decoder have a common objective, i.e. have equal utility function $\phi_{\textsf{e}} =\phi_{\textsf{d}}$, our problem boils down to the classical approach of Wyner-Ziv \cite{wyner-it-1976} and Merhav-Shamai \cite{MerhavShamai03}, in which both strategies $(\sigma,\tau)$ are chosen \emph{jointly}, in order to maximize the utility function:
\begin{align}
\sup_{\sigma}\min_{\tau \in \textsf{BR}_{\textsf{d}}(\sigma)} \Phi_{\textsf{e}}^n(\sigma, \tau) = 
\max_{(\sigma,\tau)} \Phi_{\textsf{e}}^n(\sigma, \tau), \label{eq:GameProblemEqual}
\end{align}
or to minimize a distortion function. 
\end{remark}

%\alert{In this article, the encoder $\EE$ and the decoder $\DD$ \emph{choose autonomously}
%This approach differs from the setting of  \cite{wyner-it-1976} and \cite{MerhavShamai03}, in which both strategies $(\sigma,\tau)$ are chosen \emph{jointly}, in order to minimize the distortion level. 
%
%
%In communication systems, the encoding strategy $\sigma$ is designed in advance, before the transmission starts. This framework corresponds to a Persuasion Game \cite{KamenicaGentzkow11}, in which the encoder commits to its strategy $\sigma$,  observed by the decoder before the transmission starts. 
%
%One major assumption of this work is that the decoder \emph{is free to choose any decoding strategy} $\tau$. Since the decoder is an autonomous device, there is no guarantee that it will implement a prescribed strategy. Fortunately, the decoder's utility function $\phi_{\textsf{d}}(u,z,v)$ provide enough structure to the problem and it is possible to investigate an optimal decoding strategy. }

%%%%%%%%%%%%%%%%%%%%%%%%%%%%%%%%%%%%%%%%%%%%%%%%%%%%%%%%%%%%%%%%%%%%%%%%%%%%%%%%%%%%%%%%%%%%%%%%%%%%%%%%%%%%%%%%

\section{Characterizations}\label{sec:Characterization}

\subsection{Linear Program with Information Constraint}\label{sec:LinearProgram}

Before stating our main result, we define the encoder's optimal utility  $\Phi_{\textsf{e}}^{\star}$.\begin{definition}[Target Distributions]\label{def:Characterization} 
We consider an auxiliary random variable $W\in \mc{W}$ with $|\mc{W}| = \min\big(|\mc{U}|+1, |\mc{V}|^{|\mc{Z}|}\big)$. The set $\Q_0$ of target probability distributions is defined by:
\begin{align}
\Q_0 =& \bigg\{  \PP(u,z) \times \QQ(w | u) ,  \quad \text{s.t.}, 
%\nonumber\\&
\quad\;\;\max_{\PP(x)} I( X; Y )  -   I( U ;W |Z )   \geq 0  \bigg\}.\label{eq:SetQ0}
\end{align} 
We define the set $\Q_2\big(\QQ(u,z,w)\big)$ of single-letter best-replies of the decoder:
\begin{align}
\Q_2\big(\QQ(u,z,w)\big) =& \argmax_{\QQ(v |z,w)}\E_{\QQ(u,z,w)\atop \times \QQ(v |z,w)  } \bigg[ \phi_{\textsf{d}}(U,Z,V) \bigg].\label{eq:SetQ2}
\end{align} 
The encoder's optimal utility $\Phi_{\textsf{e}}^{\star} $ is given by:
\begin{align}
\Phi_{\textsf{e}}^{\star} =&  \sup_{\QQ(u,z,w) \in \Q_0} \min_{\QQ(v |z,w) \in  \atop \Q_2(\QQ(u,z,w))} \E_{\QQ(u,z,w) \atop \times \QQ(v |z,w)} \bigg[\phi_{\textsf{e}}(U,Z,V)\bigg].\label{eq:PhiOptimalZ}
\end{align}
%The optimal utility $\Phi_2^{\star} $ of $\DD$ corresponds to the distributions $\QQ(u,z,w)$ and $\QQ(v |z,w) $ that are optimal for \eqref{eq:PhiOptimalZ}.
\end{definition}
We discuss the above definitions.\\
$\bullet$ The information constraint  \eqref{eq:SetQ0} of the set $\Q_0$ involves the channel capacity $\max_{\PP(x)} I( X; Y )$ and the Wyner-Ziv's information rate $ I( U ;W |Z ) = I( U ;W ) - I(W;Z )$, stated in \cite{wyner-it-1976}. It corresponds to the separation result by Shannon \cite{shannon-bell-1948}, extended to the Wyner-Ziv setting by Merhav-Shamai in \cite{MerhavShamai03}. \\
$\bullet$ For the clarity of the presentation, the set $\Q_2\big(\QQ(u,z,w)\big)$ contains stochastic functions $\QQ(v |z,w)$, even if for the linear problem \eqref{eq:SetQ2}, some optimal $\QQ(v |z,w)$ are deterministic. If there are several optimal $\QQ(v |z,w)$, we assume the decoder chooses the one that minimize encoder's utility: $\min_{\QQ(v |z,w) \in  \atop \Q_2(\QQ(u,z,w))} \E \big[\phi_{\textsf{e}}(U,Z,V)\big]$, so that encoder's utility is robust to the exact specification of $\QQ(v |z,w)$.\\
$\bullet$ The supremum over $\QQ(u,z,w) \in \Q_0$ is not a maximum since the  function $\min_{\QQ(v |z,w) \in  \atop \Q_2(\QQ(u,z,w))} \E \big[\phi_{\textsf{e}}(U,Z,V)\big]$ is not continuous with respect to $\QQ(u,z,w)$.  \\
$\bullet$ In \cite[Theorem IV.2]{LeTreust(ISIT-TwoSided)15}, the author shows that the sets $\Q_0$ and $\Q_2$  correspond to the target probability distributions $\QQ(u,z,w)\times \QQ(v|z,w)$ that are achievable for the problem of \emph{empirical coordination}, see also \cite{CuffPermuterCover10}, \cite{LeTreust(EmpiricalCoordination)17}. As noticed in \cite{LeTreustBloch(ISIT)16} and \cite{LeTreustBloch(StateLeakageCoordination)18}, the tool of Empirical Coordination allows us to characterize the ``core of the decoder's knowledge'', that captures what the decoder can infer about all the random variables of the problem.\\
$\bullet$ The value $\Phi_{\textsf{e}}^{\star} $ corresponds to the Stackelberg equilibrium of an auxiliary one-shot  game in which the decoder chooses $\QQ(v |z,w)$, knowing in advance that the encoder has chosen $ \QQ(w | u)\in \Q_0$ and the utility functions are: $\E\big[\phi_{\textsf{e}}(U,Z,V)\big]$ and $\E\big[\phi_{\textsf{d}}(U,Z,V)\big]$.

\begin{remark}[Equal Utility Functions]
When assuming that the decoder's utility function is equal to the encoder's utility function: $\phi_{\textsf{d}}(u,z,v) = \phi_{\textsf{e}}(u,z,v)$, then the set $\Q_2\big(\QQ(u,z,w)\big)$ is equal to $ \argmax_{\QQ(v |z,w)}\E\big[ \phi_{\textsf{e}}(U,Z,V) \big]$. Thus, we have:
\begin{align}
 \min_{\QQ(v |z,w) \in  \atop \Q_2(\QQ(u,z,w))} \E_{\QQ(u,z,w) \atop \times \QQ(v |z,w)} \bigg[\phi_{\textsf{e}}(U,Z,V)\bigg] 
 =&  \max_{\QQ(v |z,w)} \E_{\QQ(u,z,w) \atop \times \QQ(v |z,w)} \bigg[\phi_{\textsf{e}}(U,Z,V)\bigg].
\end{align} 
Hence, the encoder's optimal utility $\Phi_{\textsf{e}}^{\star} $ is equal to:
\begin{align}
\Phi_{\textsf{e}}^{\star} =&  \sup_{\QQ(u,z,w) \in \Q_0} \min_{\QQ(v |z,w) \in  \atop \Q_2(\QQ(u,z,w))} \E_{\QQ(u,z,w) \atop \times \QQ(v |z,w)} \bigg[\phi_{\textsf{e}}(U,Z,V)\bigg]\\
=&  \sup_{\QQ(u,z,w) \in \Q_0} \max_{\QQ(v |z,w)} \E_{\QQ(u,z,w) \atop \times \QQ(v |z,w)} \bigg[\phi_{\textsf{e}}(U,Z,V)\bigg]\label{eq:PhiOptimalZr0}\\
=&  \max_{\QQ(u,z,w) \in \Q_0,\atop \QQ(v |z,w)}\E_{\QQ(u,z,w) \atop \times \QQ(v |z,w)} \bigg[\phi_{\textsf{e}}(U,Z,V)\bigg].\label{eq:PhiOptimalZr}
\end{align}
The supremum in \eqref{eq:PhiOptimalZr0} is replaced by a maximum in \eqref{eq:PhiOptimalZr} due to the compacity of $\Q_0$ and the continuity of function $\max_{\QQ(v |z,w)} \E\bigg[\phi_{\textsf{e}}(U,Z,V)\bigg]$ with respect to $\QQ(u,z,w)$. 

If we consider that the utility function is equal to minus the distortion function: $ \phi_{\textsf{e}}(u,z,v)= - d(u,v)$ as in \cite[Definition 1]{MerhavShamai03}, then we recover the \emph{distortion-rate} function corresponding to \cite[Theorem 1]{MerhavShamai03}: 
\begin{align}
\Phi_{\textsf{e}}^{\star} =& -  \min_{\QQ(u,z,w) \in \Q_0,\atop \QQ(v |z,w)}\E_{\QQ(u,z,w) \atop \times \QQ(v |z,w)} \bigg[d(U,V)\bigg].
\end{align}
%Theorem \ref{theo:MaxMinStackelberg} generalizes \cite[Theorem 1]{MerhavShamai03} to the case of non-aligned
\end{remark}

%%%%%%%%%%%%%%%%%%%%%%%%%%%%%%%%%%%%%%%%%%%%%%%%%%%%%%%%%%%%%%%%%%%%%%%%%%%%%%%%%%%%%%%%%%%%%%%%%%%%%%%%%%%%%%%%

\subsection{Main Result}\label{sec:MainResult}

We introduce the notation $\N^{\star}=\N\setminus\{0\}$ and we characterize the encoder's long-run optimal utility \eqref{eq:GameProblem} by using $\Phi_{\textsf{e}}^{\star} $.% of \eqref{eq:PhiOptimalZ}.

\begin{theorem}[Main Result]\label{theo:MaxMinStackelberg}
The long-run optimal utility of the encoder satisfies:
\begin{align}
&\forall \varepsilon>0,   \; \exists \bar{n}\in \N^{\star}, \;\forall n\geq \bar{n}, \qquad \sup_{\sigma} \min_{\tau \in \textsf{BR}_{\textsf{d}}(\sigma)} \Phi_{\textsf{e}}^n(\sigma, \tau)  \geq  \Phi_{\textsf{e}}^{\star}  -  \varepsilon,\label{eq:Achievability}\\
&\forall n \in \N, \qquad  \sup_{\sigma}\min_{\tau \in \textsf{BR}_{\textsf{d}}(\sigma)} \Phi_{\textsf{e}}^n(\sigma, \tau)  \leq  \Phi_{\textsf{e}}^{\star}.\label{eq:Converse}
\end{align}
%\begin{align}
%&\forall n \in \N,\; \forall \sigma,\qquad  \min_{\tau \in \textsf{BR}_{\textsf{d}}(\sigma)} \Phi_{\textsf{e}}^n(\sigma, \tau)  \leq  \Phi_{\textsf{e}}^{\star},\label{eq:Converse}\\
%&\forall \varepsilon>0,   \; \exists \bar{n}, \;\forall n\geq \bar{n}, \; \exists \sigma, \;  \min_{\tau \in \textsf{BR}_{\textsf{d}}(\sigma)} \Phi_{\textsf{e}}^n(\sigma, \tau)  \geq  \Phi_{\textsf{e}}^{\star}  -  \varepsilon.\label{eq:Achievability}
%\end{align}
\end{theorem}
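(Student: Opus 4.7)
The plan is to prove the two inequalities in Theorem~\ref{theo:MaxMinStackelberg} separately. The converse \eqref{eq:Converse} will follow from a standard single-letterization along the lines of the Merhav--Shamai converse \cite{MerhavShamai03}, together with a best-reply argument for the decoder. The achievability \eqref{eq:Achievability} will combine the empirical coordination achievability of \cite[Theorem IV.2]{LeTreust(ISIT-TwoSided)15} (i.e.\ Wyner--Ziv binning plus channel coding) with a careful analysis of the posterior beliefs $\PP(U_i\mid Y^n, Z^n)$ induced at the decoder, in order to argue that \emph{every} decoder best-reply behaves, in the long run, like a single-letter best-reply in $\Q_2$.

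\textbf{Converse.} Given any pair $(\sigma, \tau)$ with $\tau \in \textsf{BR}_{\textsf{d}}(\sigma)$, I would introduce a uniform time-sharing index $Q$ over $\{1,\ldots,n\}$ and set $W = (Y^n, Z^{Q-1}, Q)$, following the Wyner--Ziv identification. Standard manipulations produce the required Markov structure and the bound
\begin{align}
I(U;W\mid Z) \;\leq\; \frac{1}{n} I(U^n; Y^n \mid Z^n) \;\leq\; \frac{1}{n} I(X^n; Y^n) \;\leq\; \max_{\PP(x)} I(X;Y),
\end{align}
so the induced $\QQ(u,z,w)$ lies in $\Q_0$; a Carath\'eodory argument gives the cardinality bound $|\mathcal{W}|\leq \min(|\mathcal{U}|+1,|\mathcal{V}|^{|\mathcal{Z}|})$. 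Since $\tau$ is a long-run best-reply, positionwise optimality forces the induced conditional $\QQ(v\mid z,w)$ to maximize $\E[\phi_{\textsf{d}}(U,Z,V)]$, hence to lie in $\Q_2(\QQ(u,z,w))$. Thus $\Phi_{\textsf{e}}^n(\sigma, \tau) = \E[\phi_{\textsf{e}}(U,Z,V)] \leq \Phi_{\textsf{e}}^\star$, which yields \eqref{eq:Converse}.

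\textbf{Achievability.} Pick $\QQ(u,z,w) \in \Q_0$ with $\min_{\QQ(v|z,w)\in\Q_2}\E[\phi_{\textsf{e}}]\geq \Phi_{\textsf{e}}^\star-\varepsilon/2$, and perturb it slightly so that $I(U;W\mid Z) < \max_{\PP(x)} I(X;Y)$ strictly. I would then construct $\sigma$ as a Wyner--Ziv code combined with a capacity-achieving channel code for the bin index (the Merhav--Shamai joint source--channel scheme). By typicality, the decoder recovers the intended $W^n$ with error probability $\to 0$. The key technical step, matching the posterior-beliefs contribution announced in the introduction, is to show that for a fraction $1-\varepsilon$ of the positions,
\begin{align}
\PP_{\sigma}(U_i\mid Y^n, Z^n) \;\approx\; \QQ(U\mid W=w_i,\, Z=z_i)
\end{align}
in total variation. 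This implies that the positionwise argmax defining any decoder best-reply $\tau$ must lie in $\Q_2(\QQ(u,z,w))$ on the typical set. Averaging the single-letter $\phi_{\textsf{e}}$ over positions and using the uniform bound $|\phi_{\textsf{e}}|\leq M$ to absorb the exceptional positions gives $\Phi_{\textsf{e}}^n(\sigma, \tau) \geq \min_{\Q_2} \E[\phi_{\textsf{e}}] - \varepsilon/2 \geq \Phi_{\textsf{e}}^\star - \varepsilon$, \emph{uniformly} over $\tau \in \textsf{BR}_{\textsf{d}}(\sigma)$, which is exactly what the outer $\min_\tau$ in \eqref{eq:Achievability} requires.

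\textbf{Main obstacle.} The hard part is this uniform-over-$\tau$ statement in the achievability. We must handle ties in the decoder's single-letter argmax: different elements of $\textsf{BR}_{\textsf{d}}(\sigma)$ may assign different actions on positions where several symbols tie in conditional $\phi_{\textsf{d}}$-value, and we must ensure that all such choices remain inside $\Q_2(\QQ(u,z,w))$. Establishing the posterior approximation sharply enough to control ties, rather than merely guaranteeing the correct decoding of $W^n$, requires going beyond the standard Wyner--Ziv error-probability analysis; this is precisely where the ``exact amount of information'' property of Wyner--Ziv coding highlighted in the introduction enters the proof.
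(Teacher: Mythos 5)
Your overall two-inequality structure matches the paper's: a converse via a single-letterization with an auxiliary $W$, and an achievability via Wyner--Ziv joint source--channel coding plus posterior-belief control. However there is one genuine error and one incomplete step.

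\textbf{Converse: the auxiliary variable is wrong.} You set $W=(Y^n,Z^{Q-1},Q)$, i.e.\ only \emph{past} state symbols, following the classical Wyner--Ziv identification. But here the decoder acts on $\tau_i(v_i\mid y^n,z^n)$, which depends on the \emph{entire} $Z^n$. With your $W$, the single-letter map $\tilde\tau(v\mid w,z)$ cannot reproduce $\tau_i(v_i\mid y^n,z^n)$ because $z_{i+1}^n$ is missing. Averaging $\tau_i$ over $z_{i+1}^n$ does not help either, because the required Markov chain $U_i-\!\!\!\!\minuso\!\!\!\!-(Y^n,Z^i)-\!\!\!\!\minuso\!\!\!\!-Z_{i+1}^n$ fails: $Y^n$ depends on $X^n=\sigma(U^n)$, which couples $U_i$ with $Z_{i+1}^n$ through $Y^n$. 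The paper instead takes $W=(Y^n,Z^{-T},T)$ with $Z^{-T}=(Z_1,\ldots,Z_{T-1},Z_{T+1},\ldots,Z_n)$, so that $(W,Z_T)$ carries \emph{exactly} the decoder's information and $\tilde\tau(v\mid w,z)=\tau_i(v_i\mid y^n,z^n)$ is a direct deterministic identification. The information-constraint chain then still closes (the i.i.d.\ source gives $I(U_i;Z^{-i},U^{i-1}\mid Z_i)=0$), and the Markov chain $W-\!\!\!\!\minuso\!\!\!\!-U_T-\!\!\!\!\minuso\!\!\!\!-Z_T$ still holds.

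\textbf{Achievability: the tie-breaking obstacle is named but not closed.} You correctly flag that ties in the decoder's positionwise argmax are the hard part, but perturbing only to make the information constraint strict is insufficient. The paper resolves this by optimizing over an auxiliary value $\widehat{\Phi_{\textsf{e}}}$ that restricts to distributions $\QQ(u,z,w)$ satisfying \emph{both} a strict information constraint \emph{and} the condition that $\mathcal{V}^{\star}\bigl(z,\QQ(u\mid z,w)\bigr)$ is a singleton for every $(z,w)$; a separate density lemma (Lemma~\ref{lemma:RestrictedSplitting}, invoking the earlier companion paper) shows $\widehat{\Phi_{\textsf{e}}}=\Phi_{\textsf{e}}^{\star}$ when the capacity is positive. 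With the singleton condition in place, the K--L control on $\PP_\sigma(U_i\mid Y^n,Z^n)$ in Proposition~\ref{prop:WynerZivCoding} (via Pinsker and a double Markov inequality) forces \emph{every} best-reply to select the unique single-letter action on a $1-\gamma$ fraction of stages, which is exactly the uniformity over $\tau\in\textsf{BR}_{\textsf{d}}(\sigma)$ you need. The paper also treats the zero-capacity case separately (Lemma~\ref{lemma:ZeroCapacity}), since the perturbation argument presumes positive capacity.
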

The proofs of the achievability \eqref{eq:Achievability} and the converse \eqref{eq:Converse} results are given in App. \ref{sec:AchievabilityProof} and \ref{sec:ConverseProof}. When removing decoder's state information $\mc{Z} = \emptyset$, we recover our previous result in \cite[Theorem 4.3]{LeTreustTomala17}. As a consequence, Theorem \ref{theo:MaxMinStackelberg} characterizes the limit behaviour of long-run optimal utility of the encoder:
\begin{align}
\lim_{n \to +\infty} \sup_{\sigma}\min_{\tau \in \textsf{BR}_{\textsf{d}}(\sigma)} \Phi_{\textsf{e}}^n(\sigma, \tau)  =  \Phi_{\textsf{e}}^{\star}.\label{eq:limit}
\end{align}

% equal to minus the distortion function $\phi_{\textsf{e}}(u,z,v) = \phi_{\textsf{d}}(u,z,v) = - d(u,v)$. Theorem \ref{theo:MaxMinStackelberg} is equivalent to the separation result by Merhav-Shamai \cite{MerhavShamai03}, for Wyner-Ziv's source coding \cite{wyner-it-1976} and Shannon's channel coding \cite{shannon-bell-1948}. Indeed, the set $\Q_2\big(\QQ(u,z,w)\big)= \argmax_{\QQ(v |z,w)}\E_{\QQ(u,z,w)\atop \times \QQ(v |z,w)  } \bigg[ \phi_{\textsf{e}}(U,Z,V) \bigg]$ of single-letter best-replies of the decoder
%

%%%%%%%%%%%%%%%%%%%%%%%%%%%%%%%%%%%%%%%%%%%%%%%%%%%%%%%%%%%%%%%%%%%%%%%%%%%%%%%%%%%%%%%%%%%%%%%%%%%%%%%%%%%%%%%%

\subsection{Concavification}\label{sec:Concavification}

The concavification of a function $f$ is the smallest concave function $\cav f : \mc{X} \rightarrow \R \cup \{- \infty\}$ that majorizes $f$ on $X$. In this section, we reformulate the encoder's optimal utility $\Phi_{\textsf{e}}^{\star}$ in terms of a  concavification, similarly to \cite[Corollary 1]{KamenicaGentzkow11} and \cite[Definition 4.2]{LeTreustTomala17}. This alternative approach simplifies the optimization problem in \eqref{eq:PhiOptimalZ}, by plugging the decoder's posterior beliefs and best-reply actions into the encoder's utility function. This also provides a nice interpretation: the goal of the strategic communication is to \emph{control the posterior beliefs} of the decoder knowing it will take a best-reply action afterwards.

Before the transmission, the decoder holds a prior belief corresponding to the source's statistics $\PP(u)\in\Delta(\mc{U})$. After observing the pair of symbols $(w,z)\in\mc{W} \times \mc{Z}$, the decoder updates its posterior belief $\QQ(\cdot|z,w)\in \Delta(\mc{U})$ according to Bayes rule: $\QQ(u|z,w) = \frac{\PP(u,z)\QQ(w|u)}{\sum_{u'}\PP(u',z)\QQ(w|u')}$, for all $(u,z,w)\in\mc{U} \times\mc{W} \times \mc{Z}$. 
\begin{definition}[Best-Reply Action]\label{def:BestReply}
For each symbol $z\in \mc{Z}$ and belief $p\in\Delta(\mc{U})$, the decoder chooses the best-reply action $v^{\star}(z,p)$ that belongs to the set $\mc{V}^{\star}(z,p)$, defined by:
\begin{align}
\mc{V}^{\star}(z,p) = & \argmin_{v \in \argmax \E_{p}\big[\phi_{\textsf{d}}(U,z,v)\big] }\E_{p}\bigg[\phi_{\textsf{e}}(U,z,v)\bigg].\label{eq:Vstar}
\end{align}
\end{definition}
If several actions are best-replies to symbol $z\in \mc{Z}$ and belief $p\in\Delta(\mc{U})$, the decoder chooses one of the worst action for encoder's utility. This is a reformulation of the minimum in \eqref{eq:PhiOptimalZ}.
\begin{definition}[Robust Utility Function]\label{def:RobustUtility}
For each symbol $z\in \mc{Z}$ and belief $p\in\Delta(\mc{U})$, the encoder's \emph{robust} utility function is defined by:
\begin{align}
\psi_{\textsf{e}}(z,p) &=   \E_p\big[ \phi_{\textsf{e}}(U,z,v^{\star}(z,p))\big].\label{eq:functionPsi}
\end{align}
\end{definition}
\begin{definition}[Average Utility and Average Entropy]\label{def:AverageUtility}
For each belief $p\in\Delta(\mc{U})$, we define the average encoder's utility function $\Psi_{\textsf{e}}(p)$ and average entropy function $h(p)$:
%averaged over the state information $Z$:
\begin{align}
\Psi_{\textsf{e}}(p) =& \sum_{u,z} p(u) \cdot \PP(z|u)\cdot  \psi_{\textsf{e}}\bigg(z,\frac{p(u) \cdot  \PP(z|u)}{\sum_{u'}p(u') \cdot  \PP(z|u')}\bigg),\label{eq:FunctionPsip}\\
h(p) =&    \sum_{u,z} p(u) \cdot \PP(z|u)\cdot \log_2 \frac{\sum_{u'}p(u') \cdot  \PP(z|u')}{p(u) \cdot  \PP(z|u)}.\label{eq:FunctionHp}
\end{align}
The conditional probability distribution $\PP(z|u)$ is given by the information source.
\end{definition}

\begin{lemma}[Concavity]\label{lemma:Concavity}
The average entropy $h(p)$ is concave in $p \in \Delta(\mc{U})$.
\end{lemma}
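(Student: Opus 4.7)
The plan is to first recognize that $h(p)$ is a familiar information-theoretic quantity in disguise, and then obtain concavity from a standard mixture argument combined with ``conditioning reduces entropy''.

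First, I would note the identity $h(p) = H_p(U|Z)$, where the joint distribution on $\mc{U} \times \mc{Z}$ is $p(u)\PP(z|u)$ (with $\PP(z|u)$ the fixed source-side channel). Indeed, expanding $H_p(U|Z) = \sum_{u,z} p(u,z) \log_2 \tfrac{p(z)}{p(u,z)}$ with $p(u,z)=p(u)\PP(z|u)$ and marginal $p(z) = \sum_{u'} p(u')\PP(z|u')$ reproduces formula \eqref{eq:FunctionHp} exactly. So the claim is that $p \mapsto H_p(U|Z)$ is concave on $\Delta(\mc{U})$ when the conditional $\PP(z|u)$ is held fixed.

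Next, for $p_1, p_2 \in \Delta(\mc{U})$ and $\lambda \in [0,1]$, set $p = \lambda p_1 + (1-\lambda) p_2$. I would introduce an auxiliary binary variable $T \in \{1,2\}$ with $\PP(T=1)=\lambda$, $\PP(T=2)=1-\lambda$, define $\PP(U=u \mid T=i) = p_i(u)$, and draw $Z$ conditionally on $U$ via the fixed kernel $\PP(z|u)$, independently of $T$. Then $T - U - Z$ forms a Markov chain, the marginal of $U$ is $p$, and the induced joint of $(U,Z)$ equals $p(u)\PP(z|u)$. Conditioning reduces entropy gives
\begin{equation*}
h(p) \;=\; H(U|Z) \;\geq\; H(U|Z,T) \;=\; \lambda\, H(U \mid Z, T=1) + (1-\lambda)\, H(U \mid Z, T=2) \;=\; \lambda\, h(p_1) + (1-\lambda)\, h(p_2),
\end{equation*}
which is exactly the concavity inequality.

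There is no serious obstacle here: the only ``trick'' is spotting that the formula for $h(p)$ is precisely $H(U|Z)$ for the joint law $p(u)\PP(z|u)$, after which the mixture-with-auxiliary-variable argument is standard. An alternative route, if one prefers to avoid introducing $T$, is to write $h(p) = H(U) + H(Z|U) - H(Z)$ where $H(Z|U) = \sum_u p(u) H(\PP(\cdot|u))$ is linear in $p$, and then combine the concavity of $H(U)$ in $p$ with the fact that $-H(Z)$ is \emph{not} concave in $p$ --- so this direct approach actually fails, which is why the mixture/Markov-chain argument above is the clean way to proceed.
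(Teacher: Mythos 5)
Your proof is correct, and it takes a genuinely different (and, as it happens, cleaner) route from the paper. The paper argues via the decomposition $h(p) = H(U|Z) = H(U) - I(U;Z)$, asserting that $I(U;Z)$ is \emph{convex} in $p\in\Delta(\mc{U})$ and concluding concavity as ``concave minus convex.'' But the standard result (Cover--Thomas, Thm.~2.7.4, and also El Gamal--Kim) is that $I(U;Z)$ is \emph{concave} in the input law $p(u)$ for a fixed channel $\PP(z|u)$, so the paper's sign-counting argument does not go through as written. Your auxiliary-variable argument avoids this entirely: introducing $T$ with $T -\!\!\!\!\minuso\!\!\!\!- U -\!\!\!\!\minuso\!\!\!\!- Z$, the chain $h(p)=H(U|Z)\geq H(U|Z,T)=\lambda h(p_1)+(1-\lambda)h(p_2)$ relies only on ``conditioning reduces entropy'' and is airtight. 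Your closing remark is also on point: the alternative decomposition $h(p)=H(U)+H(Z|U)-H(Z)$ gives concave $+$ linear $+$ convex, which is indeterminate, and this is precisely the obstruction that the paper's proof runs into (since $-I(U;Z)=H(Z|U)-H(Z)$ is exactly that linear-plus-convex piece). In short, your mixture/Markov-chain argument is a correct, self-contained proof; the paper's one-line proof appeals to a convexity claim with the wrong sign, even though the lemma itself is true.
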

\begin{proof}[Lemma \ref{lemma:Concavity}]
The average entropy $h(p)$ in \eqref{eq:FunctionHp} is a reformulation of the conditional entropy $H(U|Z)$ as a function of the probability distribution $p\in \Delta(\mc{U})$, for a fixed $\PP(z|u)$. The mutual information $I(U;Z)$ is convex in $p\in \Delta(\mc{U})$ (see \cite[pp. 23]{ElGammalKim(book)11}),  and the entropy $H(U)$ is concave in  $p\in \Delta(\mc{U})$. Hence the conditional entropy $h(p) = H(U|Z)= H(U) - I(U;Z)$ is concave in $p\in \Delta(\mc{U})$.
\end{proof}
\begin{theorem}[Concavification]\label{theo:Concavification}
The solution $\Phi_{\textsf{e}}^{\star} $ of \eqref{eq:PhiOptimalZ} is the concavification of $ \Psi_{\textsf{e}}(p)$ evaluated at the prior distribution $\PP(u)$, under an information constraint:
\begin{align}
\Phi_{\textsf{e}}^{\star}  =&  \sup\bigg\{ \sum_{w} \lambda_w \cdot \Psi_{\textsf{e}}(p_w)   \quad \text{ s.t. } \quad \sum_{w} \lambda_w \cdot p_w  = \PP(u) \in \Delta(\mc{U}),\nonumber\\
&\qquad\qquad\qquad\qquad\qquad \text{ and }\quad \sum_{w} \lambda_w \cdot h(p_w)  \geq H(U|Z)  - \max_{\PP(x)}I(X;Y) \bigg\},\label{eq:SplittingFormulation}
\end{align}
where the supremum is taken over $\lambda_w\in [0,1]$ summing up to 1 and $p_w\in \Delta(\mc{U})$, for each $w\in \mc{W}$ with $|\mc{W}| = \min\big(|\mc{U}|+1, |\mc{V}|^{|\mc{Z}|}\big)$.
%, we have $p_w\in \Delta(\mc{U})$, $\lambda_w \in [0,1]$ and . The auxiliary random variable $W\in \mc{W}$ plays the role of the splitting index and has a support of cardinality $|\mc{W}| = \min\big(|\mc{U}|+1, |\mc{V}|^{|\mc{Z}|}\big)$.
\end{theorem}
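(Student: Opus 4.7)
The plan is to set up a bijection between joint distributions $\QQ(u,z,w)\in\Q_0$ and convex combinations $\{(\lambda_w,p_w)\}_w$ of beliefs on $\mc{U}$, and then to verify that the objective and information constraint of \eqref{eq:PhiOptimalZ} translate term-by-term into those of \eqref{eq:SplittingFormulation}. Concretely, any $\QQ(u,z,w)=\PP(u,z)\,\QQ(w|u)$ factors as $\QQ(u,z,w)=\lambda_w\,p_w(u)\,\PP(z|u)$ via $\lambda_w=\QQ(w)$ and $p_w(u)=\QQ(u|w)$, thanks to the Markov chain $Z-U-W$ built into $\Q_0$. The identity $\sum_w \QQ(u,w)=\PP(u)$ then yields the barycentre constraint $\sum_w \lambda_w\,p_w=\PP(u)$, and the correspondence is one-to-one.

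I then translate the two ingredients of $\Q_0$. Under that Markov chain, the conditional law of $(U,Z)$ given $W=w$ is $p_w(u)\PP(z|u)$, so by \eqref{eq:FunctionHp}, $H(U|Z,W=w)=h(p_w)$, whence $I(U;W|Z)=H(U|Z)-\sum_w \lambda_w h(p_w)$. The information constraint in \eqref{eq:SetQ0} thus becomes exactly $\sum_w\lambda_w h(p_w)\geq H(U|Z)-\max_{\PP(x)}I(X;Y)$. For the objective, the inner minimum of \eqref{eq:PhiOptimalZ} decouples over $(z,w)$: the decoder's worst best-reply to the posterior $q_{z,w}(u)=p_w(u)\PP(z|u)/\sum_{u'} p_w(u')\PP(z|u')$ gives utility $\psi_{\textsf{e}}(z,q_{z,w})$ by Definitions \ref{def:BestReply} and \ref{def:RobustUtility}. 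Taking expectation and using $\QQ(z,w)=\lambda_w\sum_u p_w(u)\PP(z|u)$, I recognise \eqref{eq:FunctionPsip}:
\begin{align*}
\sum_{z,w}\QQ(z,w)\,\psi_{\textsf{e}}(z,q_{z,w})=\sum_w \lambda_w\,\Psi_{\textsf{e}}(p_w),
\end{align*}
so that once $|\mc{W}|$ is left free, the two programs have identical feasible sets and values.

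The main remaining obstacle is the cardinality bound $|\mc{W}|=\min(|\mc{U}|+1,|\mc{V}|^{|\mc{Z}|})$. For the $|\mc{U}|+1$ bound I would apply the Fenchel--Eggleston strengthening of Carath\'eodory's theorem to the image of $\Delta(\mc{U})$ under the continuous map $p\mapsto\bigl(p(u_1),\ldots,p(u_{|\mc{U}|-1}),h(p),\Psi_{\textsf{e}}(p)\bigr)\in\R^{|\mc{U}|+1}$: any feasible convex combination can be rewritten with at most $|\mc{U}|+1$ atoms while preserving the $|\mc{U}|-1$ marginal equalities, the lower bound on average entropy, and the value of the objective. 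For the $|\mc{V}|^{|\mc{Z}|}$ bound I would group the $p_w$'s by the best-reply function $z\mapsto v^{\star}(z,q_{z,w})$ they induce; on each such class, $\Psi_{\textsf{e}}$ coincides with an affine function of $p$ and the corresponding region is convex in $p$ because Bayesian updating pulls back half-spaces to half-spaces. Merging the $p_w$'s of a class into their barycentre preserves both the objective and the marginal (the tie-breaking rule in \eqref{eq:Vstar} is handled by working on the closure of open best-reply regions), and the concavity of $h$ given by Lemma \ref{lemma:Concavity} only enlarges $\sum_w\lambda_w h(p_w)$ and hence relaxes the information constraint. Since there are at most $|\mc{V}|^{|\mc{Z}|}$ best-reply functions, this yields the second bound.
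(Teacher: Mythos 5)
Your proposal follows the same route as the paper's proof in Appendix~\ref{sec:ProofConcavification}: identify $\lambda_w=\QQ(w)$, $p_w=\QQ(u|w)$, invoke the Markov chain $Z-U-W$ and the fixed $\PP(z|u)$ to recognise $\QQ(u|w,z)$, $h(p_w)=H(U|Z,W=w)$ and $\sum_w\lambda_w\Psi_{\textsf{e}}(p_w)$ term-by-term, and conclude that the two programs coincide. The only difference is that you give a self-contained treatment of the cardinality bound $|\mc{W}|=\min(|\mc{U}|+1,|\mc{V}|^{|\mc{Z}|})$ (via Fenchel--Eggleston on the map $p\mapsto(p,h(p),\Psi_{\textsf{e}}(p))$ and via grouping by the induced best-reply function $z\mapsto v^{\star}$), whereas the paper folds that bound into Definition~\ref{def:Characterization} and delegates its justification to the remarks following the theorem and to \cite{LeTreustTomala17}; both arguments are the standard ones and are correct.
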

The proof of Theorem \ref{theo:Concavification} is stated in App. \ref{sec:ProofConcavification}. The ``splitting Lemma'' by Aumann and Maschler \cite{AM95}, also called ``Bayes plausibility'' in \cite{KamenicaGentzkow11}, ensures that there is a one-to-one correspondance between the conditional distribution $\QQ(w|u) =  \frac{\lambda_w \cdot p_w(u)}{\PP(u)}$ and the parameters $(\lambda_w,p_w)_{w\in \mc{W}}$, also referred to as the ``splitting of the prior belief''. Formulation \eqref{eq:SplittingFormulation} provides an alternative point of view on the encoder's optimal utility \eqref{eq:PhiOptimalZ}.\\
$\bullet$ The optimal solution $\Phi_{\textsf{e}}^{\star} $ can be found by the concavification
method \cite{AM95}. In Sec \ref{sec:ExampleBinary}, we provide an example that illustrates the optimal splitting and the corresponding expected utility. \\
$\bullet$ When the channel is perfect and has a large input alphabet $|\mc{X}|\geq \min(|\mc{U}|,|\mc{V}|^{|\mc{Z}|})$, the strategic coding problem is equivalent to several i.i.d. copies of the one-shot problem, whose optimal solution is given by our characterization without the information constraint \eqref{eq:SplittingFormulation}. This noise-free setting is related to the problem of persuasion with heterogeneous beliefs under investigation in \cite{AlonsoCamara(JET)2016} and \cite{LaclauRenou17}.\\
$\bullet$ The information constraint $\sum_{w} \lambda_w \cdot h(p_w)  \geq H(U|Z)  - \max_{\PP(x)}I(X;Y)$ in \eqref{eq:SplittingFormulation} is a reformulation of $I(U;W|Z) \leq  \max_{\PP(x)}I(X;Y)$ in \eqref{eq:SetQ0}: 
\begin{align}
\sum_{w} \lambda_w \cdot h(p_w) =& \sum_{w} \lambda_w \cdot H(U|Z,W=w)\\
=& H(U|Z,W).
\end{align} 
%$\bullet$ The information constraint  \eqref{eq:SetQ0} of the set $\Q_0$ involves the channel capacity $\max_{\PP(x)} I( X; Y )$ and the lossy source coding rate of Wyner-Ziv in \cite{wyner-it-1976}: $ I( U ;W |Z ) = I( U ;W ) - I(W;Z )$. It corresponds to the separation result by Shannon \cite{shannon-bell-1948}, extended to Wyner-Ziv setting by Merhav-Shamai in \cite{MerhavShamai03}. \\
$\bullet$ The dimension of the problem \eqref{eq:SplittingFormulation}  is $|\mc{U}|$. Caratheodory's Lemma (see \cite[Corollary 17.1.5, pp. 157]{rockafellar1970convex} and \cite[Corollary A.4, pp. 39]{LeTreustTomala17}) provides the cardinality bound: $|\mc{W}| = |\mc{U}|+1$. \\
$\bullet$ The cardinality of $\mc{W}$ is also restricted to the vector of recommended actions $|\mc{W}|  = |\mc{V}|^{|\mc{Z}|}$, telling to the decoder which action to play in each state. Otherwise assume that two posteriors $p_{w_1}$ and $p_{w_2}$  induce the same vectors of actions $v^1=(v_1^1,\ldots,v_{|\mc{Z}|}^1) = v^2=(v_1^2,\ldots,v_{|\mc{Z}|}^2)$. Then, both posteriors $p_{w_1}$ and $p_{w_2}$ can be replaced by their average: 
\begin{align}
\widetilde{p}  = \frac{\lambda_{w_1} \cdot p _{w_1} + \lambda_{w_2}\cdot p _{w_2}}{\lambda_{w_1} + \lambda_{w_2}},
\end{align}
without changing the utility and still satisfying the information constraint:
\begin{align}
& h(\widetilde{p})  \geq \frac{\lambda_{w_1} \cdot h(p _{w_1}) + \lambda_{w_2}\cdot h(p _{w_2})}{\lambda_{w_1} + \lambda_{w_2}}\label{eq:EntropyTilde}\\
\Longrightarrow& \sum_{w \neq w_1, \atop w \neq w_2} \lambda_{w} \cdot h(p_w) + (\lambda_{w_1} + \lambda_{w_2})\cdot h(\widetilde{p}) \geq H(U|Z) - \max_{\PP(x)}I(X;Y).
\end{align}
Inequality \eqref{eq:EntropyTilde} comes from the concavity of $h(p)$, stated in Lemma \ref{lemma:Concavity}.

%Then for each $z$, the two posteriors  $\QQ(u|z,w_1)$, $\QQ(u|z,w_2)$ could be replaced by their average posterior:
%\begin{align}
%\widetilde{\QQ}(u|z)  = \QQ(w_1|z)\QQ(u|z,w_1) + \QQ(w_2|z)\QQ(u|z,w_2), 
%\end{align}

Following the arguments of \cite[Theorem 5.1]{LeTreustTomala17}, the splitting under information constraint of Theorem \ref{theo:Concavification} can be reformulated in terms of Lagrangian and in terms of a general concavification $\tilde{\Psi}_{\textsf{e}}(p,\nu)$ defined by: 
\begin{align}
\tilde{\Psi}_{\textsf{e}}(p,\nu) &= 
\begin{cases}
\Psi_{\textsf{e}}(p), \text{ if } \nu \leq h(p),\\
- \infty, \text{ otherwise, }  
\end{cases}
\end{align}

\begin{theorem}\label{theo:Concavification2}
The optimal solution $\Phi_{\textsf{e}}^{\star}$ reformulates as:
\begin{align}
\Phi_{\textsf{e}}^{\star}  =& \inf_{t\geq 0 } \bigg\{ \cav \Big[\Psi_{\textsf{e}} + t \cdot h\Big]\big(\PP(u)\big) - t \cdot \Big(H(U|Z) - \max_{\PP(x)}I(X;Y)\Big)\bigg\}\label{eq:Lagrangian}\\
=&\cav \tilde{\Psi}_{\textsf{e}}\Big(\PP(u),H(U|Z) - \max_{\PP(x)}I(X;Y)\Big).\label{eq:GeneralCav}
\end{align}
\end{theorem}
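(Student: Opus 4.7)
The plan is to derive both equalities directly from the splitting representation of Theorem \ref{theo:Concavification}, treating the Lagrangian form \eqref{eq:Lagrangian} as a duality statement and the bi-variate form \eqref{eq:GeneralCav} as a packaging of the information constraint into an extra coordinate. Throughout, write $C=\max_{\PP(x)}I(X;Y)$ and $H_0 = H(U|Z) - C$.

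For \eqref{eq:Lagrangian}, the first step is to recognize that Theorem \ref{theo:Concavification} expresses $\Phi_{\textsf{e}}^\star$ as the supremum of $\sum_w \lambda_w \Psi_{\textsf{e}}(p_w)$ over splittings $(\lambda_w,p_w)$ of $\PP(u)$ satisfying the linear inequality $\sum_w \lambda_w h(p_w) \geq H_0$. Dualizing this inequality with multiplier $t\geq 0$ yields the weak-duality bound
\begin{align}
\Phi_{\textsf{e}}^\star \leq \inf_{t\geq 0} \sup_{(\lambda_w,p_w)} \Bigl\{ \sum_w \lambda_w \bigl(\Psi_{\textsf{e}}(p_w) + t\cdot h(p_w)\bigr) - t\cdot H_0 \Bigr\} = \inf_{t\geq 0}\Bigl\{ \cav[\Psi_{\textsf{e}}+t\cdot h]\bigl(\PP(u)\bigr) - t\cdot H_0\Bigr\},
\end{align}
since the inner supremum, taken over all splittings of $\PP(u)$ with no information constraint, is by definition the concavification of $\Psi_{\textsf{e}}+t\cdot h$ evaluated at $\PP(u)$. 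Strong duality in the reverse direction is the main technical step. I would invoke the concave duality principle as in \cite[Theorem 5.1]{LeTreustTomala17}: the perturbation function $F(\eta) = \sup\{\sum_w\lambda_w\Psi_{\textsf{e}}(p_w) : \sum_w\lambda_w p_w=\PP(u), \sum_w\lambda_w h(p_w)\geq H_0-\eta\}$ is concave in $\eta$ (using the concavity of $h$ from Lemma \ref{lemma:Concavity} to combine two feasible splittings), and its Fenchel conjugate is $\inf_{t\geq 0}\{\cav[\Psi_\textsf{e}+t\cdot h](\PP(u)) - t\cdot H_0 + t\eta\}$. Evaluating at $\eta=0$ gives the desired equality, provided $F$ is finite and upper semi-continuous in a neighborhood of $0$, which follows from the boundedness of $\Psi_{\textsf{e}}$ and from Slater-type feasibility (the splitting $w=u$, $p_w=\delta_u$ has $h(p_w)=H(U|Z)$, producing slack $C>0$ in generic cases; boundary cases with $C=0$ are handled by continuity in $C$).

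For \eqref{eq:GeneralCav}, the argument is more combinatorial. Any splitting of the augmented point $(\PP(u),H_0)$ by triples $(\lambda_w,p_w,\nu_w)$ contributing finitely to $\sum_w \lambda_w \tilde{\Psi}_{\textsf{e}}(p_w,\nu_w)$ must satisfy $\nu_w \leq h(p_w)$ for every $w$; conversely, given any splitting $(\lambda_w,p_w)$ of $\PP(u)$ with $\sum_w\lambda_w h(p_w)\geq H_0$, one can pick $\nu_w\in[0,h(p_w)]$ summing to $H_0$, so the two classes of splittings are in correspondence and deliver the same value $\sum_w\lambda_w\Psi_{\textsf{e}}(p_w)$. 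Taking the supremum in Theorem \ref{theo:Concavification} thus matches the supremum defining $\cav\tilde{\Psi}_{\textsf{e}}(\PP(u),H_0)$. The only subtlety here is that $\tilde{\Psi}_{\textsf{e}}$ takes the value $-\infty$ off the hypograph of $h$, so one should justify that the concavification is still well defined and attained through splittings into at most $|\mc{U}|+2$ atoms by Caratheodory, paralleling the cardinality argument following Theorem \ref{theo:Concavification}.

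The hard part will be the strong-duality step in \eqref{eq:Lagrangian}: verifying that the concave perturbation function $F$ is indeed concave and upper semi-continuous, and that no duality gap opens at the boundary of the feasible set (for instance, when the information constraint is tight for all optimal splittings). I expect the cleanest route is to reduce to the finite-dimensional setting via the cardinality bound on $|\mc{W}|$, making the splitting set a compact polytope, after which the minimax theorem applies directly and removes any need for an explicit Slater argument.
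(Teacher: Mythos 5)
Your overall strategy matches what the paper does: \eqref{eq:Lagrangian} is a Lagrangian-duality statement and \eqref{eq:GeneralCav} is a bi-variate repackaging of the information constraint, and the paper itself discharges both by citing \cite[Proposition A.2]{LeTreustTomala17} and \cite[Lemma A.1]{LeTreustTomala17}, which encode exactly the two arguments you outline. Your derivation of weak duality for \eqref{eq:Lagrangian} (inner sup over unconstrained splittings is the concavification of $\Psi_{\textsf{e}}+t\cdot h$) is correct, and the correspondence between constrained splittings of $\PP(u)$ and splittings of the augmented point $(\PP(u),H_0)$ in your treatment of \eqref{eq:GeneralCav} is essentially the right bookkeeping.

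There is, however, a concrete error in your Slater argument. You claim the full-revelation splitting $w=u$, $p_w=\delta_u$ gives $h(p_w)=H(U|Z)$. It does not: from \eqref{eq:FunctionHp}, $h(\delta_u)=\sum_z \PP(z|u)\log_2 1=0$, and more generally $h(p)$ ranges from $0$ (at the vertices of $\Delta(\mc{U})$) up to $H(U|Z)$ (at $p=\PP(u)$). Your proposed Slater point therefore gives $\sum_w\lambda_w h(p_w)=0$, which satisfies the constraint $\geq H(U|Z)-C$ only if $C\geq H(U|Z)$, precisely the regime where the constraint is vacuous. The strictly feasible point you actually want is the opposite extreme, the degenerate no-information splitting with a single $w$ and $p_w=\PP(u)$, which gives $\sum_w\lambda_w h(p_w)=h(\PP(u))=H(U|Z)>H(U|Z)-C$ whenever $C>0$. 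With that correction the Slater argument goes through; alternatively, as you note, one can sidestep Slater entirely by reducing to the compact convex hull of the graph $\{(h(p),\Psi_{\textsf{e}}(p)):p\in\Delta(\mc{U})\}$ via Caratheodory and invoking finite-dimensional convex duality directly, which is in effect what \cite[Proposition A.2]{LeTreustTomala17} does. Two smaller imprecisions: in the bi-variate direction, the second coordinate $\nu_w$ should range over $(-\infty,h(p_w)]$, not $[0,h(p_w)]$, since $H_0=H(U|Z)-\max_{\PP(x)}I(X;Y)$ can be negative and the splitting constraint is $\sum_w\lambda_w\nu_w=H_0$ (a $\lambda$-weighted equality, not an unweighted sum).
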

Equation \eqref{eq:Lagrangian} is the concavification of a Lagrangian that integrates the information constraint. The proof follows directly from \cite[Proposition A.2, pp. 37]{LeTreustTomala17}.\\
Equation \eqref{eq:GeneralCav} corresponds to a bi-variate concavification where the information constraint requires an additional dimension. The proof follows directly from \cite[Lemma A.1, pp. 36]{LeTreustTomala17}.

%
%%%%%%%%%%%%%%%%%%%%%%%%%%%%%%%%%%%%%%%%%%%%%%%%%%%%%%%%%%%%%%%%%%%%%%%%%%%%%%%%%%%%%%%%%%%%%%%%%%%%%%%%%%%%%%%%%%%%%%%%%%%%%%%%%%%%%%%%%%%%%%%%%%%%%%%%%%%%%%%%%%%%%%%%%%%%%%%%%%%%%%%%%%%%%%%%%%%

%%%%%%%%%%%%%%%%%%%%%%%%%%%%%%%%%%%%%%%%%%%%%%%%%%%%%%%%%%%%%%%%%%%%%%%%%%%%%%%%%%%%%%%%%%%%%%%%%%%%%%%%%%%%%%%%%%%%%%%%%%%%%%%%%%%%%%%%%%%%%%%%%%%%%%%%%%%%%%%%%%%%%%%%%%%%%%%%%%%%%%%%%%%%%%%%%%%
%\newpage

\section{Example with Binary Source and State}\label{sec:ExampleBinary}

We consider a binary source $U \in \{u_1,u_2\}$ with probability $\PP(u_2)=p_0 \in[0,1]$. The binary state information $Z\in \{z_1,z_2\}$ is drawn according to the conditional probability distribution $\PP(z|u)$ with parameter $\delta_1\in [0,1]$ and $\delta_2\in [0,1]$, as depicted in Fig. \ref{fig:SignalingZ}. For the clarity of the presentation, we consider a binary auxiliary random variable $W \in \{w_1,w_2\}$, even if this choice might be sub-optimal since: $ |\mc{W}| =\min\big(|\mc{U}|+1, |\mc{V}|^{|\mc{Z}|}\big) =3$. The solutions provided  below implicitly refer to this special case of $ |\mc{W}|=2$. The random variable $W$ is drawn according to the conditional probability distribution $\QQ(w|u)$ with parameters $\alpha \in [0,1]$ and $\beta\in [0,1]$. The joint probability distribution $\PP(u,z)\times \QQ(w|u)$ is represented by Fig. \ref{fig:SignalingZ}.
\begin{figure}[!ht]
\begin{center}
\psset{xunit=0.9cm,yunit=0.9cm}
\begin{pspicture}(-2.5,0)(3,3)
\rput(0,0.5){$u_2$}
\rput(0,2.5){$u_1$}
%\rput(-2.5,0.5){$(1-p)$}
%\rput(-2.5,2.5){$p$}
\psdots(0.5, 0.5)(0.5, 2.5)(-0.5, 0.5)(-0.5, 2.5)(3.5,0.5)(3.5,2.5)(-3.5,0.5)(-3.5,2.5)
\psline[linewidth=1pt]{<-}(-3.5,2.5)(-0.5,2.5)
\psline[linewidth=1pt]{<-}(-3.5,0.5)(-0.5,0.5)
\psline[linewidth=1pt]{<-}(-3.5,0.5)(-0.5,2.5)
\psline[linewidth=1pt]{<-}(-3.5,2.5)(-0.5,0.5)
\psline[linewidth=1pt]{->}(0.5,0.5)(3.5,0.5)
\psline[linewidth=1pt]{->}(0.5,2.5)(3.5,2.5)
\psline[linewidth=1pt]{->}(0.5,0.5)(3.5,2.5)
\psline[linewidth=1pt]{->}(0.5,2.5)(3.5,0.5)
\rput(4,0.5){$w_2$}
\rput(4,2.5){$w_1$}
\rput(-4,0.5){$z_2$}
\rput(-4,2.5){$z_1$}
\rput(2,3){$1 - \alpha$}
\rput(2,0){$1 - \beta$}
\rput(0,3){$1-p_0$}
\rput(0,0){$p_0$}
\rput(0.8,1.9){$ \alpha$}
\rput(0.8,1.1){$ \beta$}
\rput(-2,3){$1 - \delta_1$}
%\rput(-2,0){$1$}
\rput(-0.8,1.9){$ \delta_1$}
\rput(-2,0){$1 -  \delta_2$}
\rput(-0.8,1.1){$ \delta_2$}
%\rput(0.8,1.1){$ \delta$}
%%%%%%%%%%%%%%%
\end{pspicture}
\caption{Joint probability distribution $\PP(u,z)\times \QQ(w|u)$ depending on parameters $p_0\in[0,1]$, $\delta_1\in [0,1]$, $\delta_2\in [0,1]$, $\alpha\in [0,1]$ and $\beta\in [0,1]$.}
\label{fig:SignalingZ}
\end{center}
\end{figure}
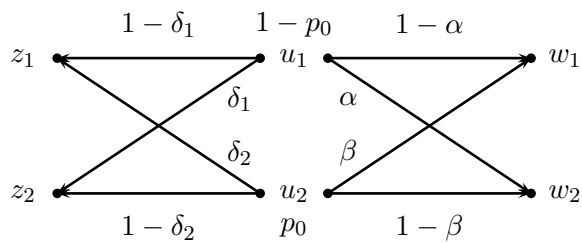
The utility functions of the encoder $\phi_{\textsf{e}}(u,v)$ and decoder $\phi_{\textsf{d}}(u,v)$ are given by Fig. \ref{fig:Utility1}, \ref{fig:Utility2} and do not depend on the state $z$. In this example, the goal of the decoder is to choose the action $v$ that matches the source symbol $u$, whereas the goal of encoder is to persuade the decoder to take the action $v_2$. 
\begin{figure}[ht!]
\begin{minipage}{0.495\linewidth}
\begin{center}
\psset{xunit=1cm,yunit=1cm}
\begin{pspicture}(0,0.2)(2,2.2)
\psframe(0,0)(2,2)
\psline(1,0)(1,2)
\psline(0,1)(2,1)
\rput(-0.3,0.5){$u_2$}
\rput(-0.3,1.5){$u_1$}
\rput(0.5,2.3){$\textcolor[rgb]{0,0.6,0}{v_1}$}
\rput(1.5,2.3){$\textcolor[rgb]{0,0,1}{v_2}$}
\rput(0.5,1.5){$\textcolor[rgb]{0,0.6,0}{0}$}
\rput(0.5,0.5){$\textcolor[rgb]{0,0.6,0}{0}$}
\rput(1.5,1.5){$\textcolor[rgb]{0,0,1}{1}$}
\rput(1.5,0.5){$\textcolor[rgb]{0,0,1}{1}$}
\end{pspicture}
\caption{Utility function of the encoder $\phi_{\textsf{e}}(u,v)$.}\label{fig:Utility1}
\end{center}
\end{minipage}\hfill
\begin{minipage}{0.495\linewidth}
\begin{center}
\psset{xunit=1cm,yunit=1cm}
\begin{pspicture}(0,0.2)(2,2.2)
\psframe(0,0)(2,2)
\psline(1,0)(1,2)
\psline(0,1)(2,1)
\rput(-0.3,0.5){$u_2$}
\rput(-0.3,1.5){$u_1$}
\rput(0.5,2.3){$\textcolor[rgb]{0,0.6,0}{v_1}$}
\rput(1.5,2.3){$\textcolor[rgb]{0,0,1}{v_2}$}
\rput(0.5,1.5){$\textcolor[rgb]{0,0.6,0}{9}$}
\rput(0.5,0.5){$\textcolor[rgb]{0,0.6,0}{4}$}
\rput(1.5,1.5){$\textcolor[rgb]{0,0,1}{0}$}
\rput(1.5,0.5){$\textcolor[rgb]{0,0,1}{10}$}
\end{pspicture}
\caption{Utility function of the decoder $\phi_{\textsf{d}}(u,v)$.}\label{fig:Utility2}
\end{center}
\end{minipage}\hfill
\end{figure}
After receiving the pair of symbols $(w,z)$, the decoder updates its \emph{posterior belief} $\PP(\cdot|w,z)\in \Delta(\mc{U})$, according to Bayes rule. We denote by $p\in \Delta(\mc{U})$ the decoder's belief and we denote by $\gamma=0.6$ the \emph{belief threshold} at which the decoder changes its action. When the decoder's belief is exactly equal to the threshold $p(u_2) = \gamma = 0.6$, the decoder is indifferent between the two actions $\{v_1,v_2\}$ so it chooses $v_1$, i.e. the worst action for the encoder.  Hence the decoder chooses a best-reply action $v_1^{\star}$ or $v_2^{\star}$ depending on the interval $[0,0.6]$ or $]0.6,1]$ in which lies the belief $p(u_2)\in[0,1]$, see Fig. \ref{fig:BestReply}.
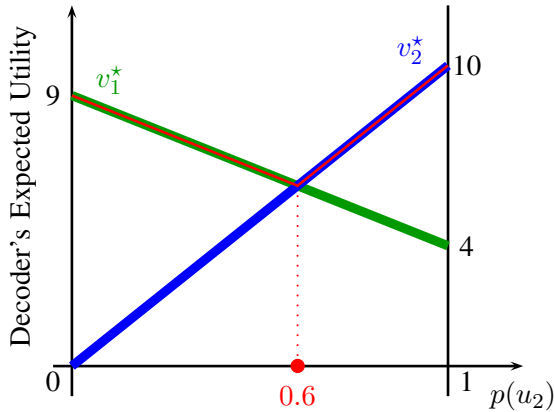
\begin{figure}[!ht]
\begin{center}
%\begin{footnotesize}
\psset{xunit=0.5cm,yunit=0.4cm}
\begin{pspicture}(-1,-1)(10,12)
\psline[linewidth=1pt]{->}(0,-1)(0,12)
\psline[linewidth=1pt]{->}(-0.5,0)(12,0)
\psline[linewidth=1pt](10,-1)(10,12)
\rput(-0.5,-0.5){0}
\rput(10.5,-0.5){1}
\rput(12,-1){$p(u_2)$}
\rput{90}(-1.3,6){\normalsize\textcolor[rgb]{0.00,0.00,0.00}{Decoder's Expected Utility}}
\rput[u](1,9.5){$\textcolor[rgb]{0,0.6,0}{v_1^{\star}}$}
\psline[linecolor=vert, linewidth=3.5pt](0,9)(10,4)
\rput(-0.5,9){9}
\rput(10.5,4){4}
\rput(9,10.5){$\textcolor[rgb]{0,0,1}{v_2^{\star}}$}
\psline[linecolor=blue, linewidth=3.5pt](0,0)(10,10)
\rput(10.5,10){10}
\psdots[linecolor = red,dotscale=1.5](6 ,0)
\rput(6,-1){$\textcolor[rgb]{1,0,0}{0.6}$}
\psline[linecolor=red, linewidth=1pt](0,9)(6,6)(10,10)
\psline[linestyle=dotted,linecolor=red](6,0)(6,6)
\end{pspicture}
%\end{footnotesize}
\caption{The decoder's best-reply action $v^{\star}$ depends on its belief $p\in\Delta(\mc{U})$: it plays $\textcolor[rgb]{0,0.6,0}{v_1^{\star}}$ if $p(u_2)\in[0,0.6]$ and $\textcolor[rgb]{0,0,1}{v_2^{\star}}$ if $p(u_2)\in]0.6,1]$.}\label{fig:BestReply}
\end{center}
\end{figure}
Since the utility functions given by Fig. \ref{fig:Utility1}, \ref{fig:Utility2} do not depends on $Z$, we denote by $\psi_{\textsf{e}}(p)$ the robust utility function of the encoder (see Definition \ref{def:RobustUtility}), given by:
\begin{align}
\psi_{\textsf{e}}(p) &=  \min_{v \in \argmax \atop p(u_1)\cdot\phi_{\textsf{d}}(u_1,v) +  p(u_2) \cdot\phi_{\textsf{d}}(u_2,v)} p(u_1) \cdot\phi_{\textsf{e}}(u_1,v) + p(u_2) \cdot\phi_{\textsf{e}}(u_2,v),\\
&= \UN\Big(p(u_2) \in]0.6,1]\Big).\label{eq:FunctionPsi1}
\end{align}
%which is represented by the horizontal orange lines in Fig. \ref{fig:PlayerOneUtility0.5_0.7_0.9_0.6}. 

%We denote by $p(u_2)\in[0,1]$ the parameter of a decoder's belief and we denote by $\gamma$

\subsection{Concavification without Information Constraint}\label{sec:ExampleConcavificationNo}

In this section, we assume that the channel capacity is large enough $ \max_{\PP(x)}I(X;Y)\geq \log_2|\mc{U}|$, so we investigate the concavification of $ \Psi_{\textsf{e}}(p)$ (see Definition \ref{def:AverageUtility}), without information constraint:
\begin{align}
\Phi_{\textsf{e}}^{\circ}  =&  \sup\bigg\{ \sum_{w} \lambda_w \cdot \Psi_{\textsf{e}}(p_w)   \quad \text{ s.t. } \quad \sum_{w} \lambda_w \cdot p_w  = \PP(u) \in \Delta(\mc{U}) \bigg\}.\label{eq:SplittingFormulationC}
\end{align}
%\begin{theorem}[Concavification]\label{theo:Concavification}
%The solution $\Phi_{\textsf{e}}^{\star} $ of \eqref{eq:PhiOptimalZ} is the concavification of $ \Psi_{\textsf{e}}(p)$ at $\PP(u)$, under an information constraint:
%\begin{align}
%\Phi_{\textsf{e}}^{\star}  =&  \sup\bigg\{ \sum_{w} \lambda_w \cdot \Psi_{\textsf{e}}(p_w)   \quad \text{ s.t. } \quad \sum_{w} \lambda_w \cdot p_w  = \PP(u) \in \Delta(\mc{U}),\nonumber\\
%&\qquad\qquad\qquad\qquad\qquad \text{ and }\quad \sum_{w} \lambda_w \cdot h(p_w)  \geq H(U|Z)  - \max_{\PP(x)}I(X;Y) \bigg\},\label{eq:SplittingFormulation}
%\end{align}
%where the supremum is taken over $\lambda_w\in [0,1]$ summing up to 1 and $p_w\in \Delta(\mc{U})$, for each $w\in \mc{W}$ with $|\mc{W}| = \min\big(|\mc{U}|+1, |\mc{V}|^{|\mc{Z}|}\big)$.
%%, we have $p_w\in \Delta(\mc{U})$, $\lambda_w \in [0,1]$ and . The auxiliary random variable $W\in \mc{W}$ plays the role of the splitting index and has a support of cardinality $|\mc{W}| = \min\big(|\mc{U}|+1, |\mc{V}|^{|\mc{Z}|}\big)$.
%\end{theorem}
The correlation between random variables $(U,Z)$ is fixed whereas the correlation between random variables  $(U,W)$ is chosen strategically by the encoder. This imposes a strong relationship between the three different kinds of posterior beliefs: $\QQ(u|z)$, $\QQ(u|w)$, $\QQ(u|w,z)$. We denote by $p_{01}\in [0,1]$, $p_{02}\in [0,1]$ the belief parameters after observing the state information $Z$ only:
\begin{align}
p_{01} &= \QQ(u_2 | z_1) = \frac{p_0 \cdot \delta_2}{p_0 \cdot \delta_2 + (1-p_0) \cdot (1- \delta_1)},\label{eq:ex-ante1}\\
p_{02} &= \QQ(u_2 | z_2) = \frac{p_0 \cdot (1-\delta_2)}{p_0 \cdot (1-\delta_2)  + (1-p_0) \cdot \delta_1}.\label{eq:ex-ante2}
\end{align}
The beliefs parameters $p_{01}$, $p_{02}$ are given by the source probability distribution $\PP(u,z)\in\Delta(\mc{U} \times \mc{Z})$ and correspond to the horizontal dotted lines in Fig. \ref{fig:PlayerOneUtility0.5_0.7_0.9_0.6}, for $p_0 = 0.5$, $\delta_1 = 0.7$, $\delta_2 = 0.9$. We denote by $q_1\in [0,1]$, $q_2\in [0,1]$ the belief parameters after observing only the symbol $W$ sent by the encoder:
\begin{align}
q_1 &= \QQ(u_2 | w_1) = \frac{p_0 \cdot \beta}{p_0 \cdot \beta + (1-p_0) \cdot (1- \alpha)},\label{eq:ex-ante3}\\
q_2 &= \QQ(u_2 | w_2) = \frac{p_0 \cdot (1-\beta)}{p_0 \cdot (1-\beta)  + (1-p_0) \cdot \alpha}.\label{eq:ex-ante4}
\end{align}
By inverting the system of equations \eqref{eq:ex-ante3} - \eqref{eq:ex-ante4}, we express the cross-over probabilities $\alpha(q_1,q_2)$ and $\beta(q_1,q_2)$ as functions of the target belief parameters $(q_1,q_2)$:
\begin{align}
\begin{cases}
\alpha(q_1,q_2) &= \frac{(1 - q_2) \cdot (q_1   - p_0 )}{(1-p_0) \cdot (q_1 - q_2) }\\
\beta(q_1,q_2) &= \frac{q_1 \cdot ( p_0  - q_2 ) }{   p_0 \cdot(q_1 - q_2) }\\
\end{cases}
\end{align}
\begin{lemma}[Feasible Posteriors]\label{lemma:FeasiblePosteriors}
The parameters $\alpha(q_1,q_2)$ and $\beta(q_1,q_2)$ belong to the interval $[0,1]$ if and only if $q_1 \leq p_0 \leq q_2$ or $q_2 \leq p_0 \leq q_1$.
%The parameters $(q_1,q_2)$ belongs to $[0,1]$ if and only if $q_1 \leq p_0 \leq q_2$ or $q_2 \leq p_0 \leq q_1$.
%corresponds to probability distributions over $\mc{U}$
\end{lemma}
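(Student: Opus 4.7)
The plan is to identify $\alpha$ and $\beta$ with the Bayes-plausibility weights associated to the splitting of the prior $p_0$ into the posteriors $(q_1,q_2)$. I would set $\lambda_1 = (p_0-q_2)/(q_1-q_2)$ and $\lambda_2 = (q_1-p_0)/(q_1-q_2)$, so that $\lambda_1+\lambda_2=1$ and $\lambda_1 q_1 + \lambda_2 q_2 = p_0$; these are the only candidate values for $\QQ(w_1)$ and $\QQ(w_2)$ compatible with the target posteriors $(q_1,q_2)$. From the marginalisation identities $\PP(u_2,w_1) = p_0\beta = \lambda_1 q_1$ and $\PP(u_1,w_2) = (1-p_0)\alpha = \lambda_2(1-q_2)$, I immediately recover the formulas for $\alpha,\beta$ given in the excerpt and, more usefully, the compact rewriting $\beta = \lambda_1 q_1/p_0$ and $\alpha = \lambda_2(1-q_2)/(1-p_0)$.

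From these two identities the characterization follows quickly. Assuming non-degenerately that $0<p_0<1$ and $q_1\neq q_2$, the sign conditions are $\alpha\ge 0 \Leftrightarrow \lambda_2\ge 0$ and $\beta\ge 0 \Leftrightarrow \lambda_1\ge 0$ (with trivial adjustments when $q_1=0$ or $q_2=1$, which force $\beta=0$ or $\alpha=0$). For the upper bounds I would exploit the identity $1-p_0 = \lambda_1(1-q_1) + \lambda_2(1-q_2)$, a direct consequence of $\lambda_1+\lambda_2=1$ and the Bayes-plausibility equation. Then $\alpha\le 1$ rewrites as $\lambda_2(1-q_2) \le \lambda_1(1-q_1)+\lambda_2(1-q_2)$, i.e., $\lambda_1(1-q_1)\ge 0$, which (for $q_1<1$) is simply $\lambda_1\ge 0$; symmetrically $\beta\le 1$ reduces to $\lambda_2 q_2 \ge 0$, i.e., $\lambda_2\ge 0$.

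Combining the four inequalities, $(\alpha,\beta)\in[0,1]^2$ is equivalent to $\lambda_1\ge 0$ and $\lambda_2\ge 0$; since $\lambda_1+\lambda_2=1$ they then automatically lie in $[0,1]$. By the explicit formulas for $\lambda_1,\lambda_2$, this asserts exactly that $p_0$ is a convex combination of $q_1$ and $q_2$, namely $\min(q_1,q_2)\le p_0\le \max(q_1,q_2)$, which is precisely the disjunction $q_1\le p_0\le q_2$ or $q_2\le p_0\le q_1$ in the statement. The main -- and honestly rather minor -- obstacle is bookkeeping the degenerate configurations: $q_1=q_2$ (which forces $q_1=q_2=p_0$ and turns the closed-form expressions into $0/0$), $p_0\in\{0,1\}$, and $q_i\in\{0,1\}$. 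In each of these cases the relevant conditional probabilities are defined by continuity from the joint distribution $\PP(u,z)\times\QQ(w|u)$, and the equivalence between feasibility of $(\alpha,\beta)$ and $p_0$ lying between $q_1$ and $q_2$ persists.
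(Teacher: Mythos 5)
Your proof is correct and arrives at the same conclusion, but it is organized differently from the paper's argument. The paper's proof in Appendix~\ref{sec:ProofLemmaPosteriors} works directly on the explicit closed-form expressions for $\alpha(q_1,q_2)$ and $\beta(q_1,q_2)$: it splits into the two cases $q_1\le q_2$ and $q_1\ge q_2$, and in each case verifies the four sign conditions $0\le\alpha$, $\alpha\le 1$, $0\le\beta$, $\beta\le 1$ one by one, observing that they reduce pairwise to $q_1\le p_0$ and $p_0\le q_2$ (respectively $q_2\le p_0$ and $p_0\le q_1$). You instead factor the whole computation through the Bayes-plausibility weights $\lambda_1=(p_0-q_2)/(q_1-q_2)$ and $\lambda_2=(q_1-p_0)/(q_1-q_2)$, writing $\beta=\lambda_1 q_1/p_0$ and $\alpha=\lambda_2(1-q_2)/(1-p_0)$, and then use the convexity identities $p_0=\lambda_1 q_1+\lambda_2 q_2$ and $1-p_0=\lambda_1(1-q_1)+\lambda_2(1-q_2)$ to show that all four constraints on $(\alpha,\beta)$ are equivalent to $\lambda_1\ge 0$ and $\lambda_2\ge 0$. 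This buys you two things: it avoids the explicit $q_1\lessgtr q_2$ case split (the sign of $q_1-q_2$ is absorbed into the $\lambda$'s), and it makes the redundancy among the four inequalities transparent ($\alpha\ge0\Leftrightarrow\beta\le1$ and $\alpha\le1\Leftrightarrow\beta\ge0$). It also ties the lemma directly to the splitting/Bayes-plausibility viewpoint that the paper itself invokes for Theorem~\ref{theo:Concavification}, whereas the appendix proof is purely coordinate manipulation. The only loose end, which you flag, is the handling of boundary configurations ($p_0\in\{0,1\}$, $q_i\in\{0,1\}$, $q_1=q_2$); these are routine and the paper likewise treats them tacitly.
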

The proof of Lemma \ref{lemma:FeasiblePosteriors} is stated in App. \ref{sec:ProofLemmaPosteriors}. Thanks to the Markov chain property $Z  -\!\!\!\!\minuso\!\!\!\!-U    -\!\!\!\!\minuso\!\!\!\!-  W$ the posterior beliefs $\QQ(u|w,z)$ reformulate in terms of $\QQ(u|w)$:
\begin{align}
\QQ(u|w,z)  =&  \frac{\QQ(u,z,w)  }{\QQ(z,w)  } = \frac{\QQ(u,z|w)  }{\sum_{u'}\QQ(u',z|w)  }= \frac{\QQ(u|w) \PP(z|u)  }{\sum_{u'}\QQ(u'|w) \PP(z|u') }, \qquad \forall (u,z,w)\in \mc{U}\times \mc{Z}\times \mc{W} .\label{eq:ReformulationPosteriors}
\end{align}
\FigDynamic{0.5}{0.7}{0.9}{0.6}
We define the belief parameters $p_1\in[0,1]$, $p_2\in[0,1]$, $p_3\in[0,1]$, $p_4\in[0,1]$ after observing $(W,Z)$ and we express them as functions of $q_1$, $q_2$:
\begin{align}
p_1   &= \QQ(u_2 | w_1,z_1) = \frac{q_1 \cdot \delta_2}{ (1-q_1)\cdot(1-\delta_1)  + q_1 \cdot \delta_2  },\\
p_2   &= \QQ(u_2 | w_1,z_2) = \frac{q_1 \cdot (1-\delta_2)}{ (1-q_1)\cdot \delta_1  +q_1 \cdot (1-\delta_2) },\\
p_3   &= \QQ(u_2 | w_2,z_1) = \frac{q_2 \cdot \delta_2 }{(1-q_2) \cdot (1-\delta_1)  + q_2 \cdot \delta_2 },\\
p_4   &= \QQ(u_2 | w_2,z_2) = \frac{q_2 \cdot (1-\delta_2)}{ (1-q_2) \cdot \delta_1 + q_2 \cdot (1-\delta_2) }.
\end{align}
Fig. \ref{fig:Dynamic0.5_0.7_0.9_0.6} represents $(p_1,p_2,p_3,p_4)$ as a functions of $q_1\in[0,p_0]$ and $q_2\in[p_0,1]$. In fact, the beliefs $q_1$ and $q_2$, are the \emph{key parameters} since they control the decoder's best-reply action $v^{\star}(p)$ through the beliefs $(p_1,p_2,p_3,p_4)$. We define the two following functions of $q\in[0,1]$:
\begin{align}
%\widetilde{\psi}_1(p) &= \min_{v \in \argmax \atop (1-p) \cdot\phi_{\textsf{d}}(u_1,v) +  p \cdot\phi_{\textsf{d}}(u_2,v)} (1-p) \cdot\phi_{\textsf{e}}(u_1,v) + p \cdot\phi_{\textsf{e}}(u_2,v)= \UN\Big(p\in]0.6,1]\Big),\label{eq:FunctionPsi1bis}\\
p_1(q)   &=  \frac{q \cdot \delta_2}{ (1-q)\cdot(1-\delta_1)  + q \cdot \delta_2  },\\
p_2(q)   &=  \frac{q \cdot (1-\delta_2)}{ (1-q)\cdot \delta_1  +q \cdot (1-\delta_2) }.
\end{align}
Given the \emph{belief threshold} $\gamma=0.6$ at which the decoder changes its action, we define the parameters $\nu_1$ and $\nu_2$ such that $p_1(\nu_1) = \gamma$ and $p_2(\nu_2) = \gamma$.
\begin{align}
\gamma = p_1(\nu_1) \;\; \Longleftrightarrow&\;\;  \nu_1 = \frac{\gamma \cdot (1 - \delta_1)}{\delta_2 \cdot (1 - \gamma) + \gamma \cdot (1 - \delta_1)},\\
\gamma = p_2(\nu_2) \;\; \Longleftrightarrow&\;\;  \nu_2 = \frac{\gamma \cdot   \delta_1}{\gamma \cdot   \delta_1  + (1 - \delta_2) \cdot (1 - \gamma) }.
\end{align}
This belief parameters $\nu_1$ and $\nu_2$ allow to reformulate the utility function of the encoder as a function of belief $\QQ(u|w)$ (see Fig. \ref{fig:PlayerOneAuxUtility0.5_0.7_0.9_0.6}), instead of belief $\QQ(u|w,z)$ (see Fig. \ref{fig:PlayerOneUtility0.5_0.7_0.9_0.6}). 
\FigPlayerOneAuxUtility{0.5}{0.7}{0.9}{0.6}

The solution $\Phi_{\textsf{e}}^{\circ}$ corresponds to the concavification of the function $\Psi_{\textsf{e}}$, defined over $q\in[0,1]$:
\begin{align}
\Psi_{\textsf{e}}(q) &= \Big( (1-q)\cdot(1-\delta_1)  + q \cdot \delta_2\Big) \cdot \psi_{\textsf{e}}\big(p_1(q)\big) + \Big( (1-q)\cdot \delta_1  +q \cdot (1-\delta_2) \Big)\cdot\psi_{\textsf{e}}\big(p_2(q)\big),\\
\Phi_{\textsf{e}}^{\circ} &= \cav \Psi_{\textsf{e}}(p_0) = \sup_{\lambda,\atop q,q'}\bigg\{\lambda \cdot \Psi_{\textsf{e}}(q) + (1-\lambda) \cdot \Psi_{\textsf{e}}(q') \quad \text{ s.t. } \quad \lambda \cdot q + (1-\lambda) \cdot q' = p_0 \bigg\}.
\end{align}
\FigPlayerOneUtility{0.5}{0.7}{0.9}{0.6}
Fig. \ref{fig:PlayerOneAuxUtility0.5_0.7_0.9_0.6} represents the utility function $\Psi_{\textsf{e}}(q)$ of the encoder, depending on the belief $q\in[0,1]$. When the belief $q\in [\nu_1,\nu_2]$, then $\psi_{\textsf{e}}\big(p_1(q)\big) = 1$ whereas $\psi_{\textsf{e}}\big(p_2(q)\big)=0$. The optimal splitting is represented  by the square and circle. This indicates that the optimal  posterior beliefs are $(p_1,p_2) = (\gamma, p_2(\nu_1))$ and  $(p_3,p_4) = ( p_1(\nu_2),\gamma)$, as in Fig. \ref{fig:PlayerOneUtility0.5_0.7_0.9_0.6}. 

When the decoder has no state information, the optimal solution  by Kamenica-Gentzkow \cite{KamenicaGentzkow11} is the concavification of the function $\psi_{\textsf{e}}(p)  = \UN\big(p\in]0.6,1]\big)$, corresponding to $\widetilde{\Phi_{\textsf{e}}}$ in Fig. \ref{fig:PlayerOneUtility0.5_0.7_0.9_0.6}. In this example, the decoder's state information $Z$ decreases the optimal utility of the encoder $\widetilde{\Phi_{\textsf{e}}} \geq \Phi_{\textsf{e}}^{\circ}$.

%$\Phi_{\textsf{e}}^{\star}$ is lower than the optimal utility when the decoder has no state information  $\Phi_1^{\circ} \geq \Phi_{\textsf{e}}^{\star}$, investigated by Kamenica-Gentzkow in \cite{KamenicaGentzkow11}. 

\subsection{Concavification with Information constraint}\label{sec:concavificationIC}

In this section, we assume that the channel capacity is equal to: $C= \max_{\PP(x)}I(X;Y)=0.1$, so the information constraint of $\Q_0$ is active. The average entropy stated in \eqref{eq:FunctionHp} reformulates as a function of $q\in[0,1]$:
\begin{align}
h(q) =& \Big( (1-q)\cdot(1-\delta_1)  + q \cdot \delta_2\Big) \cdot H_b\big(p_1(q)\big) + \Big( (1-q)\cdot \delta_1  +q \cdot (1-\delta_2) \Big)\cdot H_b\big(p_2(q)\big),\label{eq:AverageEntropy}
\end{align} 
where $H_b(\cdot)$ denotes the binary entropy. The dark blue region in Fig. \ref{fig:PosteriorsRegion0.5_0.7_0.9_0.6_0.1} represents the set of posterior distributions $(q_1,q_2)$ with $q_1 \leq p_0 \leq q_2$ or $q_2 \leq p_0 \leq q_1$, that satisfy the information constraint:
\begin{align}
&\frac{p_0 - q_2}{q_1 - q_2} \cdot h(q_1) + \frac{q_1 - p_0}{q_1 - q_2} \cdot h(q_2) \geq H(U|Z) -  \max_{\PP(x)}I(X;Y)\\
\Longleftrightarrow & \QQ(w_1) \cdot H(U|Z,W=w_1) + \QQ(w_2) \cdot H(U|Z,W=w_2) \geq H(U|Z) -  \max_{\PP(x)}I(X;Y)\\
\Longleftrightarrow &I(U;W|Z) \leq \max_{\PP(x)}I(X;Y).
\end{align} 
\FigPosteriorsRegion{0.5}{0.7}{0.9}{0.6}{0.1}
The green region in Fig. \ref{fig:PosteriorsRegion0.5_0.7_0.9_0.6_0.1} corresponds to the information constraint $I(U;W) \leq \max_{\PP(x)}I(X;Y)$, i.e. when the decoder does not observe the state information $Z$. The case without decoder's state information $Z$ is investigated in \cite{LeTreustTomala17} and the corresponding information constraint is given by:
\begin{align}
&\frac{p_0 - q_2}{q_1 - q_2} \cdot H_b(q_1) + \frac{q_1 - p_0}{q_1 - q_2} \cdot H_b(q_2) \geq H(U) -  \max_{\PP(x)}I(X;Y)  \\
\Longleftrightarrow &  \QQ(w_1) \cdot H(U|W=w_1) + \QQ(w_2) \cdot H(U|W=w_2)  \geq H(U) -  \max_{\PP(x)}I(X;Y) \\
\Longleftrightarrow & I(U;W) \leq \max_{\PP(x)}I(X;Y).
\end{align} 
Fig. \ref{fig:PosteriorsRegion0.5_0.7_0.9_0.6_0.1} shows that the decoder's state information $Z$ enlarges the set of posterior beliefs $\QQ(u|w)$ compatible with the information constraint of $\Q_0$. 
%One would deduce that this would increase the optimal utility for the encoder $\Phi_1^{\circ} \leq \Phi_{\textsf{e}}^{\star}$ but this is not necessarily the case. 
\FigPlayerOneAuxUtilityEntropyBis{0.5}{0.7}{0.9}{0.6}{0.1}

The optimal posterior beliefs are represented on Fig. \ref{fig:PlayerOneAuxUtilityEntropyBis0.5_0.7_0.9_0.6_0.1}, by the square and the circle.  Due to the restriction imposed by the channel, the optimal posterior beliefs are $(\nu_3, \nu_2)$  instead of $(\nu_1, \nu_2)$, and this reduces the encoder's optimal utility to $\Phi_{\textsf{e}}^{\star} \simeq 0.63$ instead of $\Phi_{\textsf{e}}^{\circ} \simeq 0.64$. In fact, the posterior beliefs $(\nu_1, \nu_2)$ do not satisfy the information constraint: $\lambda h(\nu_1) + (1-\lambda) h(\nu_2) < H(U|Z) -  \max_{\PP(x)}I(X;Y)$, whereas the pair of posterior beliefs $(\nu_3, \nu_2)$ lies at the boundary of the blue region in Fig. \ref{fig:PosteriorsRegion0.5_0.7_0.9_0.6_0.1}.  
\FigPlayerOneUtilityEntropy{0.5}{0.7}{0.9}{0.6}{0.1}
Posterior beliefs $(\nu_3, \nu_2)$ determine the posterior beliefs $(p_1,p_2,p_3,p_4)$, represented in Fig. \ref{fig:PlayerOneUtilityEntropy0.5_0.7_0.9_0.6_0.1}, that satisfy the reformulation of the information constraint:
\begin{align}
\lambda_{w_1,z_1} H_b(p_1) + \lambda_{w_1,z_2} H_b(p_2) + \lambda_{w_2,z_1} H_b(p_3) + \lambda_{w_2,z_2} H_b(p_4) = H(U|Z) - \max_{\PP(x)}I(X;Y),
\end{align}
and provide the corresponding expected utility $\Phi_{\textsf{e}}^{\star} \simeq 0.63$.

\begin{remark}[Impact of the State Information $Z$]
The state information $Z$ at the decoder has two effects:\\
$\bullet$ When the communication is restricted (i.e. $\max_{\PP(x)}I(X;Y)<\log_2|\mc{U}|$), it enlarges the set of posterior beliefs $\QQ(u|w)$, so it may increase the encoder's utility.\\
$\bullet$ Since it reveals partial information to the decoder, it forces the decoder to choose a best-reply actions that might be sub-optimal for the encoder, so it may decrease the encoder's utility.\\
Depending on the problem, the state information $Z$ may increase or decrease the encoder's optimal utility.
\end{remark}

%
%%%%%%%%%%%%%%%%%%%%%%%%%%%%%%%%%%%%%%%%%%%%%%%%%%%%%%%%%%%%%%%%%%%%%%%%%%%%%%%%%%%%%%%%%%%%%%%%%%%%%%%%%%%%%%%%%
%
%
%\section{Conclusion}\label{sec:conclusion}
%
%XXX

%%%%%%%%%%%%%%%%%%%%%%%%%%%%%%%%%%%%%%%%%%%%%%%%%%%%%%%%%%%%%%%%%%%%%%%%%%%%%%%%%%%%%%%%%%%%%%%%%%%%%%%%%%%%%%%%

\appendices

%\newpage

%%%%%%%%%%%%%%%%%%%%%%%%%%%%%%%%%%%%%%%%%%%%%%%%%%%%%%%%%%%%%%%%%%%%%%%%%%%%%%%%%%%%%%%%%%%%%%%%%%%%%%%%%%%%%%%%%%%%%%%%%%%%%%%%%%%%%%%%%%%%%%%%%%%%%%%%%%%%%%%%%%%%%%%%%%%%%%%%%%%%%%%%%%%%%%%%%%%%%%%%%%%%%%%%%%%%%%%%%%%%%%%%%%%%%%%%%%%%%%%%%%%%%%%%%%%%%%%%%%%%%%%%%%

\section{Proof of Theorem \ref{theo:Concavification}}\label{sec:ProofConcavification}

We identify the weight parameters $\lambda_w =\QQ(w)$ and $p_w = \QQ(u|w)\in \Delta(\mc{U})$ and   \eqref{eq:SplittingFormulation} becomes:
\begin{align}
&\sup_{\lambda_w\in [0,1],\atop p_w\in \Delta(\mc{U})}\bigg\{ \sum_{w} \lambda_w \cdot \Psi_{\textsf{e}}(p_w)   \quad \text{ s.t. } \quad \sum_{w} \lambda_w \cdot p_w  = \PP(u) \in \Delta(\mc{U}), \nonumber \\
&\qquad\qquad\qquad\qquad\qquad \text{ and }\quad \sum_{w} \lambda_w \cdot h(p_w)  \geq H(U|Z)  - \max_{\PP(x)}I(X;Y) \bigg\}\label{eq:SplittingFormulation1}\\
=&\sup_{ \QQ(w) , \QQ(u|w)}\bigg\{ \sum_{w} \QQ(w) \cdot \Psi_{\textsf{e}}(\QQ(u|w))   \quad \text{ s.t. } \quad \sum_{w} \QQ(w) \cdot \QQ(u|w)  = \PP(u) \in \Delta(\mc{U}),\nonumber\\
&\qquad\qquad\qquad\qquad\qquad \text{ and }\quad \sum_{w} \QQ(w)  \cdot h(\QQ(u|w) )  \geq H(U|Z)  - \max_{\PP(x)}I(X;Y) \bigg\}\label{eq:SplittingFormulation2}\\
%=&\sup_{ \QQ(w) , \QQ(u|w)}\bigg\{ \sum_{w} \QQ(w) \cdot  \sum_{u,z}  \QQ(u|w)  \cdot \PP(z|u)\cdot  \Psi_{\textsf{e}}\bigg(z,\frac{ \QQ(u|w)  \cdot  \PP(z|u)}{\sum_{u'} \QQ(u'|w)  \cdot  \PP(z|u')}\bigg),\\
%&  \qquad \text{ s.t. } \quad \sum_{w} \QQ(w) \cdot \QQ(u|w)  = \PP(u) \in \Delta(\mc{U}),\\
%&\qquad  \text{ and }\quad \sum_{w} \QQ(w)  \cdot  \sum_{u,z} \QQ(u|w) \cdot \PP(z|u)\cdot \log_2 \frac{\sum_{u'}\QQ(u'|w) \cdot  \PP(z|u')}{\QQ(u|w) \cdot  \PP(z|u)}   \geq H(U|Z)  - \max_{\PP(x)}I(X;Y) \bigg\}\label{eq:SplittingFormulation3}\\
=&\sup_{ \QQ(w) , \QQ(u|w)}\bigg\{ \sum_{w} \QQ(w) \cdot  \sum_{u,z}  \QQ(u|w)  \cdot \PP(z|u)\cdot  \psi_{\textsf{e}}\bigg(z,\QQ(u|w,z) \bigg),\nonumber\\
&  \qquad \text{ s.t. } \quad \sum_{w} \QQ(w) \cdot \QQ(u|w)  = \PP(u) \in \Delta(\mc{U}),\nonumber\\
&\qquad  \text{ and }\quad \sum_{w} \QQ(w)  \cdot  \sum_{u,z} \QQ(u|w) \cdot \PP(z|u)\cdot \log_2 \frac{1}{\QQ(u|w,z) }  \geq H(U|Z)  - \max_{\PP(x)}I(X;Y) \bigg\}\label{eq:SplittingFormulation4}\\
=&\sup_{ \QQ(w) , \QQ(u|w)}\bigg\{ \sum_{w} \QQ(w) \cdot  \sum_{u,z}  \QQ(u|w)  \cdot \PP(z|u)\cdot  \psi_{\textsf{e}}\bigg(z,\QQ(u|w,z) \bigg),\nonumber\\
&  \qquad \text{ s.t. }  \sum_{w} \QQ(w) \cdot \QQ(u|w)  = \PP(u) \in \Delta(\mc{U}),   \text{ and } H(U|W,Z)  \geq H(U|Z)  - \max_{\PP(x)}I(X;Y) \bigg\}\label{eq:SplittingFormulation5}\\
=&\sup_{ \QQ(u,z,w) \in \Q_0}\E_{\QQ(u,z,w) } \bigg[\phi_{\textsf{e}}(U,Z,V^{\star}(z,\QQ(u|w,z)))\bigg]\label{eq:SplittingFormulation6}\\
=&\sup_{ \QQ(u,z,w) \in \Q_0} \min_{\QQ(v |z,w) \in  \atop \Q_2(\QQ(u,z,w))} \E_{\QQ(u,z,w) \atop \times \QQ(v |z,w)} \bigg[\phi_{\textsf{e}}(U,Z,V)\bigg] = \Phi_{\textsf{e}}^{\star}.\label{eq:SplittingFormulation7}
\end{align}
Equation \eqref{eq:SplittingFormulation2} comes from the identification of the weight parameters $\lambda_w =\QQ(w)$ and $p_w = \QQ(u|w)\in \Delta(\mc{U})$.\\
Equation \eqref{eq:SplittingFormulation4} comes from the definitions of $\Psi_{\textsf{e}}(p_w) $ and $h(p_w) $ in \eqref{eq:FunctionPsip} and \eqref{eq:FunctionHp} and from: $\QQ(u|w,z) = \frac{\QQ(u,z|w)}{\QQ(z|w)} = \frac{\QQ(u|w) \cdot  \PP(z|u)}{\sum_{u'}\QQ(u'|w) \cdot  \PP(z|u')}$, due to the  Markov chain property $Z  -\!\!\!\!\minuso\!\!\!\!-U    -\!\!\!\!\minuso\!\!\!\!-  W$ and the fixed distribution of the source $\PP(u,z) $.\\
Equations \eqref{eq:SplittingFormulation5} - \eqref{eq:SplittingFormulation7} are reformulations.
%\end{proof}

%%%%%%%%%%%%%%%%%%%%%%%%%%%%%%%%%%%%%%%%%%%%%%%%%%%%%%%%%%%%%%%%%%%%%%%%%%%%%%%%%%%%%%%%%%%%%%%%%%%%%%%%%%%%%%%%%%%%%%%%%%%%%%%%%%%%%%%%%%%%%%%%%%%%%%%%%%%%%%%%%%%%%%%%%%%%%%%%%%%%%%%%%%%%%%%%%%%%%%%%%%%%%%%%%%%%%%%%%%%%%%%%%%%%%%%%%%%%%%%%%%%%%%%%%%%%%%%%%%%%%%%%%%%%%%%%%%%%%%%%%%%%%%%%%%%%%%

\section{Achievability Proof of Theorem \ref{theo:MaxMinStackelberg}}\label{sec:AchievabilityProof}

This proof is built on Wyner-Ziv's source coding \cite{wyner-it-1976} and the achievability proof stated in \cite[App. A.3.2, pp. 44]{LeTreustTomala17} with one additional feature: we show that the average posterior beliefs induced by Wyner-Ziv's source coding converge to the target posterior beliefs.

%%%%%%%%%%%%%%%%%%%%%%%%%%%%%%%%%%%%%%%%%%%%%%%%%%%%%%%%%%%%%%%%%%%%%%%%%%%%%%%%%%%%%%%%%%%%%%%%%%%%%%%%%%%%%%%%%%%%%%%%%%%%%%%%%%%%%%%%%%%%%%%%%%%

\subsection{Zero Capacity}\label{sec:ZeroCapacity}

We first investigate the case of zero capacity.
\begin{lemma}\label{lemma:ZeroCapacity}
If the channel has zero capacity: $\max_{\PP(x)} I( X; Y ) =0$, then  we have:
\begin{align}
\forall n \in \N,\; \forall \sigma,\qquad  \min_{\tau \in \textsf{BR}_{\textsf{d}}(\sigma)} \Phi_{\textsf{e}}^n(\sigma, \tau)  =  \Phi_{\textsf{e}}^{\star}.
\end{align}
\end{lemma}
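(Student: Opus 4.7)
\textbf{Proof plan for Lemma \ref{lemma:ZeroCapacity}.} The key structural observation is that a discrete memoryless channel with $\max_{\PP(x)}I(X;Y)=0$ satisfies $\mc{T}(y\mid x)=\mc{T}(y\mid x')$ for all $x,x'$. This is because plugging a uniform input distribution into $I(X;Y)=\sum_x \frac{1}{|\mc{X}|} D\bigl(\mc{T}(\cdot\mid x)\,\|\,\bar{\mc T}\bigr)$ and using non-negativity of KL divergence forces $\mc{T}(\cdot\mid x)$ to be independent of $x$. Consequently, for every encoding strategy $\sigma$, the channel output $Y^n$ is independent of $(U^n,Z^n,X^n)$ under the distribution $\PP_{\sigma,\tau}$: its marginal is simply the product of the channel's unconditional output law.

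Using this independence, I would compute the decoder's best-reply set and show it decomposes stage-wise. Fix $\sigma$; any decoding strategy $\tau(v^n\mid y^n,z^n)$ yields $\Phi_{\textsf{d}}^n(\sigma,\tau)=\frac{1}{n}\sum_i \E_{\sigma,\tau}[\phi_{\textsf{d}}(U_i,Z_i,V_i)]$, and since $Y^n\perp (U^n,Z^n)$, the posterior $\PP_{\sigma,\tau}(u_i\mid y^n,z^n)$ collapses to $\PP(u_i\mid z_i)$ by the i.i.d.\ source assumption. Therefore $\tau\in\textsf{BR}_{\textsf{d}}(\sigma)$ if and only if, for almost every $(y^n,z^n)$ and each coordinate $i$, the induced action $V_i$ lies in $\argmax_{v}\E_{\PP(\cdot\mid z_i)}[\phi_{\textsf{d}}(U,z_i,v)]$. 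The minimum over best replies then picks, at each $i$, the action $v^\star(z_i,\PP(\cdot\mid z_i))\in\mc{V}^\star(z_i,\PP(\cdot\mid z_i))$ from Definition \ref{def:BestReply}. Hence
\begin{align}
\min_{\tau\in\textsf{BR}_{\textsf{d}}(\sigma)}\Phi_{\textsf{e}}^n(\sigma,\tau)
= \E\bigl[\phi_{\textsf{e}}\bigl(U,Z,v^\star(Z,\PP(\cdot\mid Z))\bigr)\bigr]
= \Psi_{\textsf{e}}\bigl(\PP(u)\bigr),
\end{align}
which is independent of both $n$ and $\sigma$.

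It remains to identify $\Psi_{\textsf{e}}(\PP(u))$ with $\Phi_{\textsf{e}}^{\star}$. I would invoke Theorem \ref{theo:Concavification}: under zero capacity, the information constraint becomes $\sum_w \lambda_w h(p_w)\geq H(U\mid Z)=h(\PP(u))$. By the concavity of $h$ established in Lemma \ref{lemma:Concavity} together with Jensen's inequality applied to $\sum_w\lambda_w p_w=\PP(u)$, the reverse inequality $\sum_w\lambda_w h(p_w)\leq h(\PP(u))$ always holds, so equality is forced. Because the entropy function $h$ is strictly concave on the interior of $\Delta(\mc{U})$ (inherited from $H(U)$), equality in Jensen imposes $p_w=\PP(u)$ for every $w$ with $\lambda_w>0$. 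The only feasible splitting is therefore the trivial one, and the supremum in \eqref{eq:SplittingFormulation} evaluates to $\Psi_{\textsf{e}}(\PP(u))$.

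The main obstacle is the treatment of the tie-breaking rule in the minimum over best replies, since a priori a best-reply $\tau$ could randomize across different maximizing actions in a way that depends on $y^n$ and on cross-coordinates. I would handle this by noting that $\Phi_{\textsf{e}}^n(\sigma,\tau)$ is linear in the marginal $\tau(v_i\mid y^n,z^n)$, so its minimum over best replies is attained by deterministic, coordinate-wise selections of worst-for-encoder maximizers; this reduces the problem to Definition \ref{def:BestReply} applied pointwise. A secondary subtlety is the case where $h$ fails to be strictly concave on the boundary of $\Delta(\mc{U})$; this is handled by restricting the argument to the support of $\PP(u)$, since any feasible posterior $p_w$ must be absolutely continuous with respect to $\PP(u)$ by the splitting identity $\sum_w \lambda_w p_w=\PP(u)$.
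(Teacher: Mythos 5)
Your reduction of $\min_{\tau\in\textsf{BR}_{\textsf{d}}(\sigma)}\Phi_{\textsf{e}}^n(\sigma,\tau)$ to $\E\big[\phi_{\textsf{e}}(U,Z,v^\star(Z,\PP(\cdot\mid Z)))\big]$ is correct and matches the paper's own computation: zero capacity makes $Y^n$ independent of $(U^n,Z^n)$, collapsing every posterior to $\PP(u_i\mid z_i)$ and decoupling the best-reply choice across coordinates. Where you diverge from the paper is in identifying this quantity with $\Phi_{\textsf{e}}^{\star}$. The paper argues directly from Definition~\ref{def:Characterization}: zero capacity forces $I(U;W\mid Z)=0$ for every $\QQ\in\Q_0$, hence $\QQ(u\mid z,w)=\PP(u\mid z)$ and the objective in \eqref{eq:PhiOptimalZ} is constant over $\Q_0$. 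You instead route through Theorem~\ref{theo:Concavification}, which is admissible, but the step you use to close that route is not.

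The gap is in your claim that $h$ is strictly concave (on the relevant part of $\Delta(\mc{U})$) so that Jensen equality forces the trivial splitting $p_w=\PP(u)$. In general $h(p)=H(U\mid Z)$ is concave but not strictly so, even on the interior of the simplex. For example, take $\mc{U}=\{u_1,u_2,u_3\}$, $\mc{Z}=\{z_1,z_2\}$, with $\PP(z_1\mid u_1)=1$ and $\PP(z_2\mid u_2)=\PP(z_2\mid u_3)=1$; then $h(p)=\big(p(u_2)+p(u_3)\big)\,H_b\!\big(p(u_2)/(p(u_2)+p(u_3))\big)$, which is linear along the interior segment from $(0.2,0.4,0.4)$ to $(0.4,0.3,0.3)$. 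Both endpoints have full support, so your proposed repair (restricting to $\mathrm{supp}\,\PP(u)$) does not rescue the argument, and indeed non-trivial feasible splittings with $p_w\neq\PP(u)$ do exist under the zero-capacity constraint. The conclusion is still correct, but for a weaker reason: equality $\sum_w\lambda_w h(p_w)=h(\PP(u))$ only forces the $z$-conditional posteriors computed from each $p_w$ to coincide with $\PP(\cdot\mid z)$ (equivalently $I(U;W\mid Z)=0$), not $p_w=\PP(u)$. That already suffices, since then $\Psi_{\textsf{e}}(p_w)=\sum_z\big(\sum_u p_w(u)\PP(z\mid u)\big)\,\psi_{\textsf{e}}\big(z,\PP(\cdot\mid z)\big)$ and averaging over $w$ with $\sum_w\lambda_w p_w=\PP(u)$ gives $\sum_w\lambda_w\Psi_{\textsf{e}}(p_w)=\Psi_{\textsf{e}}(\PP(u))$. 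Replacing the strict-concavity step with this observation closes the proof.
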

%The proof of Lemma \ref{lemma:ZeroCapacity} is stated in App. \ref{sec:ZeroCapacity}.

\begin{proof}[Lemma \ref{lemma:ZeroCapacity}]
The zero capacity $\max_{\PP(x)} I( X; Y ) =0$ implies that any probability distribution $ \PP(u,z) \times \QQ(w | u) \in \Q_0$ satisfies $I( U ;W |Z ) =0$, corresponding to the Markov chain property $U -\!\!\!\!\minuso\!\!\!\!- Z -\!\!\!\!\minuso\!\!\!\!- W$, i.e. $\QQ(u|z,w)=\PP(u|z)$ for all $(u,z,w)\in \mc{U}\times  \mc{Z}\times  \mc{W} $. 
\begin{align}
 \Phi_{\textsf{e}}^{\star}=&\sup_{ \QQ(u,z,w) \in \Q_0} \min_{\QQ(v |z,w) \in  \atop \Q_2(\QQ(u,z,w))} \E_{\QQ(u,z,w) \atop \times \QQ(v |z,w)} \bigg[\phi_{\textsf{e}}(U,Z,V)\bigg] \label{eq:LemmaZero1}\\
=& \sup_{ \QQ(u,z,w) \in \Q_0}\E_{\QQ(u,z,w) } \bigg[\phi_{\textsf{e}}(U,Z,V^{\star}(z,\QQ(u|w,z)))\bigg]\label{eq:LemmaZero2}\\
=& \sup_{ \QQ(u,z,w) \in \Q_0}\E_{\QQ(u,z,w) } \bigg[\phi_{\textsf{e}}(U,Z,V^{\star}(z,\PP(u|z)))\bigg]\label{eq:LemmaZero3}\\
=& \E_{\PP(u,z) } \bigg[\phi_{\textsf{e}}(U,Z,V^{\star}(z,\PP(u|z)))\bigg]\label{eq:LemmaZero4}.
\end{align}
Equation \eqref{eq:LemmaZero2} is a reformulation by using the best-reply action $v^{\star}\big(z,p\big)$ of Definition \ref{def:BestReply} for symbol $z\in \mc{Z}$ and the belief $\QQ(u|w,z)$.\\
Equation \eqref{eq:LemmaZero3} comes from Markov chain property $U -\!\!\!\!\minuso\!\!\!\!- Z -\!\!\!\!\minuso\!\!\!\!- W$ that allows to replace the belief $\QQ(u|w,z)$ by $\PP(u|z)$.\\
Equation \eqref{eq:LemmaZero4} comes from removing the random variable $W$ since it has no impact on the function $\phi_{\textsf{e}}(u,z,v^{\star}(z,\PP(u|z)))$.\\ 

For any $n$ and for any encoding strategy $\sigma$, the encoder's long-run utility is given by:
\begin{align}
\min_{\tau \in \textsf{BR}_{\textsf{d}}(\sigma)}\Phi_{\textsf{e}}^n(\sigma, \tau) 
%&=\sum_{u^n,z^n,x^n,\atop y^n,v^n}\prod_{i=1}^n\PP\big(u_i,z_i \big) \times\sigma\big(x^n\big| u^n \big)\times \prod_{i=1}^n \mc{T}\big(y_i \big|x_i\big) \times \tau\big(v^n \big| y^n ,z^n\big) \cdot  \Bigg[  \frac{1}{n} \sum_{i=1}^n \phi_{\textsf{e}}(u_i,z_i,v_i) \Bigg]\label{eq:LemmaZero5}\\
&=\min_{\tau \in \textsf{BR}_{\textsf{d}}(\sigma)}\sum_{u^n,z^n,x^n,\atop y^n,v^n}\prod_{i=1}^n\PP\big(u_i,z_i \big) \times\sigma\big(x^n\big| u^n \big)\times \prod_{i=1}^n \mc{T}\big(y_i\big) \times \tau\big(v^n \big| y^n ,z^n\big) \cdot  \Bigg[  \frac{1}{n} \sum_{i=1}^n \phi_{\textsf{e}}(u_i,z_i,v_i) \Bigg]\label{eq:LemmaZero5}\\
&=\min_{\tau \in \textsf{BR}_{\textsf{d}}(\sigma)}\sum_{u^n,z^n,v^n}\prod_{i=1}^n\PP\big(u_i,z_i \big) \times  \tau\big(v^n \big|z^n\big) \cdot  \Bigg[  \frac{1}{n} \sum_{i=1}^n \phi_{\textsf{e}}(u_i,z_i,v_i) \Bigg]\label{eq:LemmaZero6}\\
&= \frac{1}{n} \sum_{i=1}^n \Bigg[ \sum_{u_i,z_i,v_i} \PP\big(u_i,z_i \big) \times \UN(v_i^{\star}(z_i,\QQ(u|z))) \cdot   \phi_{\textsf{e}}(u_i,z_i,v_i)\Bigg] \label{eq:LemmaZero7}\\
%&= \sum_{u,z,v} \PP\big(u,z \big) \times  \UN(v^{\star}(z,\PP(u|z))) \cdot   \phi_{\textsf{e}}(u_i,z_i,v_i) \label{eq:LemmaZero8}\\
&=  \E_{\PP(u,z) } \bigg[\phi_{\textsf{e}}(U,Z,V^{\star}(z,\PP(u|z)))\bigg].\label{eq:LemmaZero9}
\end{align}
Equation \eqref{eq:LemmaZero5} comes from the zero capacity that imposes that the channel outputs $Y^n$ are independent of the channel inputs $X^n$.\\
Equation \eqref{eq:LemmaZero6} comes from removing the random variables $(X^n,Y^n)$ and noting that the decoder's best-reply $ \tau\big(v^n \big|z^n\big)$ does not depend on $y^n$ anymore, since $y^n$ is independent of $(u^n,z^n)$.\\
Equation \eqref{eq:LemmaZero7} is a reformulation based on the best-reply action $v^{\star}\big(z,\PP(u|z)\big)$ of Definition \ref{def:BestReply}, for the symbol $z\in \mc{Z}$ and the belief $\PP(u|z)$.\\
Equation \eqref{eq:LemmaZero9} comes from the i.i.d. property of $(U,Z)$ and  concludes the proof of Lemma \ref{lemma:ZeroCapacity}.
%More details are provided in \cite[Lemma A.6, pp. 45]{LeTreustTomala17}.
\end{proof}

%%%%%%%%%%%%%%%%%%%%%%%%%%%%%%%%%%%%%%%%%%%%%%%%%%%%%%%%%%%%%%%%%%%%%%%%%%%%%%%%%%%%%%%%%%%%%%%%%%%%%%%%%%%%%%%%%%%%%%%%%%%%%%%%%%%%%%%%%%%%%%%%%%%%%%%%%%%%%%%%%%%%%

\subsection{Positive Capacity}\label{sec:PositiveCapacity}

We now assume that the channel capacity is strictly positive $\max_{\PP(x)} I( X; Y ) >0$. We consider an auxiliary concavification in which the information constraint is satisfied with \emph{strict} inequality and the sets of decoder's best-reply actions are always \emph{singletons}.
\begin{align}
\widehat{\Phi_{\textsf{e}}}  =&  \sup\bigg\{ \sum_{w} \lambda_w \cdot \Psi_{\textsf{e}}(p_w)   \quad \text{ s.t. } \quad \sum_{w} \lambda_w \cdot p_w  = \PP(u) \in \Delta(\mc{U}),\nonumber\\
&\qquad\qquad\qquad\qquad\qquad \text{ and }\quad \sum_{w} \lambda_w \cdot h(p_w)  > H(U|Z)  - \max_{\PP(x)}I(X;Y),\nonumber\\
&\qquad\qquad\qquad\qquad\qquad \text{ and }\quad \forall (z,w)\in \mc{Z}\times\mc{W}, \;\; \mc{V}^{\star}(z,\QQ(u|z,w)) \;\; \text{is a singleton} \;\; \bigg\},\label{eq:SplittingFormulationHat}
\end{align}

\begin{lemma}\label{lemma:RestrictedSplitting}
If $\max_{\PP(x)} I( X; Y ) >0$, then $\widehat{\Phi_{\textsf{e}}} = \Phi_{\textsf{e}}^{\star}$.
\end{lemma}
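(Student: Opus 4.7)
The plan is to show both inequalities. The easy direction $\widehat{\Phi_{\textsf{e}}} \leq \Phi_{\textsf{e}}^{\star}$ follows immediately from Theorem \ref{theo:Concavification}, since every splitting $(\lambda_w,p_w)$ that is admissible for $\widehat{\Phi_{\textsf{e}}}$ is also admissible for $\Phi_{\textsf{e}}^{\star}$ (strict inequality implies weak inequality, and the singleton condition is an extra requirement). The whole difficulty is in the reverse inequality $\widehat{\Phi_{\textsf{e}}} \geq \Phi_{\textsf{e}}^{\star}$, which requires to approximate an arbitrary admissible splitting of $\Phi_{\textsf{e}}^{\star}$ by one that satisfies the strict information constraint and avoids decoder's indifference.

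Fix $\varepsilon>0$ and choose, via Theorem \ref{theo:Concavification}, a splitting $(\lambda_w,p_w)_{w\in\mc{W}}$ feasible for $\Phi_{\textsf{e}}^{\star}$ with $\sum_w \lambda_w \Psi_{\textsf{e}}(p_w) \geq \Phi_{\textsf{e}}^{\star} - \varepsilon$. I will perturb it as
\begin{align*}
\tilde{p}_w = (1-\eta)\bigl(p_w + \delta_w\bigr) + \eta\, \PP(u),
\end{align*}
with $\eta>0$ small and vectors $\delta_w$ in the tangent space to $\Delta(\mc{U})$ satisfying $\sum_w \lambda_w \delta_w = 0$. The barycentric constraint $\sum_w \lambda_w \tilde{p}_w = \PP(u)$ is preserved by construction. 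For the information constraint, concavity of $h$ (Lemma \ref{lemma:Concavity}) gives
\begin{align*}
\sum_w \lambda_w h(\tilde{p}_w) \geq (1-\eta)\sum_w \lambda_w h(p_w + \delta_w) + \eta\, h(\PP(u)),
\end{align*}
and using $h(\PP(u)) = H(U|Z)$ together with the hypothesis $\max_{\PP(x)}I(X;Y)>0$, the bound becomes strictly greater than $H(U|Z) - \max_{\PP(x)}I(X;Y)$ as soon as $\eta$ is small enough and $\|\delta_w\|$ small enough (by continuity of $h$). Hence the strict information constraint of \eqref{eq:SplittingFormulationHat} is satisfied.

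Next, the singleton condition. Decoder's indifference between two actions $v_1,v_2$ at state $z$, given belief $\QQ(u|z,w)$ induced by $\tilde{p}_w$ through \eqref{eq:ReformulationPosteriors}, is a non-trivial linear equation in $\tilde{p}_w$ (the denominators being smooth and positive). The union over the finitely many triples $(z,v_1,v_2)$ and indices $w$ yields a finite collection of lower-dimensional algebraic subsets of $\Delta(\mc{U})^{|\mc{W}|}$. Since this collection has empty interior, we may choose the $\delta_w$ with $\sum_w \lambda_w \delta_w = 0$ of arbitrarily small norm, so that every induced posterior $\QQ(u|z,w)$ lies off all these indifference loci, making $\mc{V}^{\star}(z,\QQ(u|z,w))$ a singleton for every $(z,w)$.

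Finally, the objective value. The main obstacle is that $\Psi_{\textsf{e}}$ need not be continuous at ties, because of the min-over-best-replies convention in \eqref{eq:Vstar}. However, the best-reply correspondence $p \mapsto \argmax_v \E_p[\phi_{\textsf{d}}(U,z,v)]$ has a closed graph, and at a tie point the convention selects the value worst for the encoder; consequently, for any sequence $p_n \to p$, any limit point of $v^{\star}(z,p_n)$ lies in the best-reply set at $p$ and therefore yields encoder's utility at least $\psi_{\textsf{e}}(z,p)$. This shows $\psi_{\textsf{e}}(z,\cdot)$ is lower semi-continuous, and so is $\Psi_{\textsf{e}}$. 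Hence, for $\eta$ and $\|\delta_w\|$ sufficiently small,
\begin{align*}
\sum_w \lambda_w \Psi_{\textsf{e}}(\tilde{p}_w) \geq \sum_w \lambda_w \Psi_{\textsf{e}}(p_w) - \varepsilon \geq \Phi_{\textsf{e}}^{\star} - 2\varepsilon,
\end{align*}
which yields $\widehat{\Phi_{\textsf{e}}} \geq \Phi_{\textsf{e}}^{\star} - 2\varepsilon$. Letting $\varepsilon \to 0$ completes the proof. The core obstacle, as anticipated, is to handle simultaneously the three requirements (barycenter preserved, strict entropy bound, avoidance of ties) while controlling the objective through lower semi-continuity rather than continuity.
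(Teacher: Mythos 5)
The paper does not spell out this proof; it delegates it to Lemma~A.7 of the companion preprint \cite{LeTreustTomala17}, so a line-by-line comparison is not possible. Your proof sketch is, however, essentially the right argument, and the three ingredients you isolate are exactly what is needed: the convex combination with the prior $\PP(u)$ buys strictness in the entropy constraint (using $h(\PP(u))=H(U|Z)$ and $\max_{\PP(x)}I(X;Y)>0$); a generic tangential perturbation $(\delta_w)$ with $\sum_w\lambda_w\delta_w=0$ moves each $\tilde p_w$ off the finitely many indifference hyperplanes $\sum_u \tilde p_w(u)\PP(z|u)\big(\phi_{\textsf d}(u,z,v_1)-\phi_{\textsf d}(u,z,v_2)\big)=0$; and lower semi-continuity of $\Psi_{\textsf e}$ (a correct consequence of the closed-graph property of the best-reply correspondence together with the min-over-ties convention in \eqref{eq:Vstar}) controls the loss in the objective. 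The easy inclusion of feasible sets gives $\widehat{\Phi_{\textsf e}}\le\Phi_{\textsf e}^{\star}$ immediately.

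Two small gaps should be closed to make this fully rigorous. First, the degenerate splitting: if some $\lambda_{w_0}=1$, the barycentric constraint forces $\delta_{w_0}=0$, so $\tilde p_{w_0}=(1-\eta)\PP(u)+\eta\PP(u)=\PP(u)$ is not perturbed at all; if $\PP(\cdot|z)$ lies on an indifference locus for some $z$, your construction cannot escape it. The remedy is to perturb the weights as well, e.g.\ split $\PP(u)$ into $\PP(u)\pm\epsilon d$ each with weight $1/2$ (there is room since $|\mc{W}|\ge 2$), choose $d$ generically, and invoke concavity of $h$ to recover the strict entropy inequality. Second, you should state that $\|\delta_w\|$ must be taken small compared to $\eta\min_u\PP(u)$ so that $\tilde p_w$ remains in $\Delta(\mc{U})$ (the $\eta\PP(u)$ term provides the interior margin only when the prior has full support; symbols with $\PP(u)=0$ can be dropped). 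With these adjustments the argument is complete.
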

For the proof of Lemma \ref{lemma:RestrictedSplitting}, we refers directly to the similar proof of \cite[Lemma A.7, pp. 46]{LeTreustTomala17}. We denote by $Q^n(u,z,w)$ the empirical distribution of the sequence $(u^n,z^n,w^n)$ and we denote by $A_{\delta}$ the set of typical sequences with tolerance $\delta>0$, defined by: %$Q^n(u) = \frac{1}{n}\cdot \sum_{i=1}^n\UN(u = u_i)$
\begin{align}
A_{\delta} = \bigg\{(u^n,z^n,w^n,x^n,y^n), &\quad\text{ s.t. } \quad|| Q^n(u,z,w) - \PP(u,z )\times \QQ(w|u)||_1\leq \delta,\nonumber\\
 &\quad\text{ and } \quad|| Q^n(x,y) - \PP^{\star}(x) \times \mc{T}(y|x)||_1\leq \delta  \bigg\}.
\end{align}
We define $T_{\alpha}(w^n,y^n,z^n)$ and $B_{\alpha,\gamma,\delta}$ depending on parameters $\alpha>0$ and $\gamma>0$:
\begin{align}
T_{\alpha}(w^n,y^n,z^n) =& \bigg\{i \in \{1,\ldots,n\} , \;\;\text{ s.t. }\;\; D\Big(\PP_{\sigma}(U_i|y^n,z^n)\Big|\Big|\QQ(U_i|w_i,z_i)\Big)\leq  \frac{\alpha^2}{2\ln 2} \bigg\},\label{eq:SetTalpha}\\
B_{\alpha,\gamma,\delta}=& \bigg\{ (w^n,y^n,z^n) , \;\;\text{ s.t. } \;\; \frac{|T_{\alpha}(w^n,y^n,z^n)|}{n} \geq 1 - \gamma\;\; \text{ and }\;\;(w^n,y^n,z^n) \in A_{\delta}\bigg\}.\label{eq:SetTalpha}
\end{align}
The notation $B_{\alpha,\gamma,\delta}^c$ stands for the complementary set of $B_{\alpha,\gamma,\delta}\subset \mc{W}^n \times \mc{Y}^n \times \mc{Z}^n$. The sequences $(w^n,y^n,z^n)$ belong to the set $B_{\alpha,\gamma,\delta}$ if: 1) they are typical and 2) if the corresponding posterior belief $\PP_{\sigma}(U_i|y^n,z^n)$ is close in K-L divergence to the target belief $\QQ(U_i|w_i,z_i)$, for a large fraction of stages $i \in \{1,\ldots,n\}$.

The cornerstone of this achievability proof is Proposition \ref{prop:WynerZivCoding}, which refines the analysis of Wyner-Ziv's source coding, by characterizing its posterior beliefs.
\begin{proposition}[Wyner-Ziv's Posterior Beliefs]\label{prop:WynerZivCoding}
If the probability distribution $\PP(u,z)\times \QQ(w|u)$ satisfies:
\begin{align}
\begin{cases}
&\max_{\PP(x)}I(X;Y) - I(U;W|Z) >0,\\
&\mc{V}^{\star}(z,\QQ(u|z,w)) \text{ is a singleton }\forall (z,w)\in \mc{Z}\times\mc{W},
\end{cases}
\end{align}
then
\begin{align}
&\forall \varepsilon>0, \;\forall \alpha>0, \;\forall \gamma>0,\;\exists \bar{\delta}>0,\;\forall \delta< \bar{\delta}, \;\exists \bar{n}\in \N^{\star},\;\forall n\geq  \bar{n},  \exists \sigma, \text{ s.t. } \PP_{\sigma}(B_{\alpha,\gamma,\delta}^c) \leq \varepsilon.\label{eq:propWynerZivCoding}
\end{align}
\end{proposition}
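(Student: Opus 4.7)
The plan is to build the encoder $\sigma$ as a Wyner-Ziv source code concatenated with a capacity-achieving channel code, and then to analyze the induced posterior $\PP_\sigma(U_i \mid Y^n, Z^n)$ via the decodability of $W^n$. Using the strict gap $\max_{\PP(x)} I(X;Y) > I(U;W \mid Z)$, I would pick rates $R_1 = I(U;W) + \eta$ and $R_2 = I(U;W \mid Z) + 2\eta$ with $R_2 < \max_{\PP(x)} I(X;Y)$, all of which is possible for small enough $\eta$. The scheme consists in: (i) drawing $2^{nR_1}$ codewords $W^n(m)$ i.i.d.\ from the marginal $\QQ(w) = \sum_u \PP(u)\QQ(w \mid u)$; (ii) partitioning them uniformly into $2^{nR_2}$ bins; (iii) encoding $u^n$ by picking a codeword jointly typical with $u^n$ (covering lemma) and extracting its bin index; (iv) transmitting that bin index through a capacity-achieving channel code. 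The packing and channel-coding lemmas give a joint error event of vanishing probability, on whose complement the Wyner-Ziv decoder exactly reconstructs the transmitted codeword $W^n$ from $(Y^n, Z^n)$, and on which the sequence $(u^n, z^n, w^n, x^n, y^n)$ lies in $A_\delta$ with probability $\geq 1 - \varepsilon/2$ for $n$ large enough.

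The core step is then the posterior analysis. In the random-coding ensemble, the Markov chain $U_i - (W_i, Z_i) - (W^{-i}, Y^n, Z^{-i})$ holds because codebook entries are i.i.d.\ and the channel is memoryless. Outside the joint error event, $W^n$ is a deterministic function of $(Y^n, Z^n)$, so
\[
\PP_\sigma(U_i \mid Y^n = y^n, Z^n = z^n) \;\approx\; \PP_\sigma(U_i \mid W_i = w_i, Z_i = z_i) \;\approx\; \QQ(U_i \mid w_i, z_i),
\]
where the second step follows from typicality and the conditional independence of $U_i$ from other stages given $(W_i, Z_i)$. Turning this into a KL-divergence bound requires the target $\QQ(\cdot \mid w, z)$ to be bounded away from the boundary, which can be enforced by an arbitrarily small perturbation of $\QQ(w \mid u)$ (at arbitrarily small cost, in the same spirit as the reduction to $\widehat{\Phi_{\textsf{e}}}$ via Lemma~\ref{lemma:RestrictedSplitting}). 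Pinsker's inequality then upgrades the $L^1$-closeness granted by typicality to a KL-closeness of order $\delta^2$. Choosing $\delta$ so that this quantity falls below $\alpha^2/(2\ln 2)$ places index $i$ inside $T_\alpha(w^n, y^n, z^n)$ whenever the stage-$i$ typicality holds; a Markov-style averaging over indices ensures this occurs for at least a $(1-\gamma)$-fraction of stages.

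The main obstacle is the second paragraph: lifting the global exact-recovery guarantee on the good event to a uniform per-index KL bound simultaneously valid for a $(1-\gamma)$-fraction of indices. The delicate point is that $\PP_\sigma(U_i \mid y^n, z^n)$ is an unconditional posterior, so the residual mass from the error event must be absorbed into the $\alpha, \gamma$ parameters without destroying the KL bound; handling this requires combining the random-coding Markov structure with a localization argument identifying the ``good'' stages, in the same spirit as \cite[App. A.3.2]{LeTreustTomala17}, which treats the analogous property for Shannon lossy source coding with no decoder side information. The singleton best-reply hypothesis ensures that the joint distribution $\QQ(u,z,w)$ is robust enough that the perturbation step is harmless at the level of decoder actions. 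Once these pieces are assembled, $\PP_\sigma(B_{\alpha,\gamma,\delta}^c) \leq \varepsilon$ follows by a union bound for $\delta$ small and $n$ large.
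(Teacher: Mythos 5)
Your codebook construction and rate choices match the paper's (modulo superficial binning-index notation), and you are right that the key difficulty is passing from exact recovery of $W^n$ on the good event to a per-index KL bound. But the central step of your proposal is incorrect: the Markov chain $U_i \,-\!\!-\, (W_i,Z_i) \,-\!\!-\, (W^{-i},Y^n,Z^{-i})$ does \emph{not} hold under the random coding distribution. The codebook entries are drawn i.i.d., but the codeword that ends up being transmitted is \emph{selected} by joint typicality with the whole block $U^n$; this selection ties together all coordinates, so conditioning on $(W_i,Z_i)$ does not screen $U_i$ from the rest. No per-index factorization can be read off the ensemble, and the subsequent chain $\PP_\sigma(U_i \mid y^n,z^n) \approx \PP_\sigma(U_i \mid W_i=w_i, Z_i=z_i) \approx \QQ(U_i\mid w_i,z_i)$ collapses. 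Your suggestion to upgrade typicality-$L^1$ closeness to KL via a boundary-avoiding perturbation and a reverse-Pinsker bound is thus repairing a step that was never available to begin with; it also introduces an approximation argument on $\QQ(w\mid u)$ that is not needed in the paper and is not covered by Lemma~\ref{lemma:RestrictedSplitting}, which only cleans up the feasible set, not the coding analysis.

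The paper sidesteps the per-index issue entirely by controlling the \emph{average} K-L divergence
\[
\E_{\sigma}\Big[\tfrac{1}{n}\textstyle\sum_{i=1}^n D\big(\PP_\sigma(U_i\mid Y^n,Z^n,E_\delta=0)\,\big\|\,\QQ(U_i\mid W_i,Z_i)\big)\Big]
\]
via entropy chain rules: the sum is rewritten as $H(U\mid W,Z)$ minus normalized $n$-block conditional entropies, and bounded using (i) the Markov chain $Z^n -\!\!- U^n -\!\!- W^n -\!\!- Y^n$ at the \emph{block} level (which does hold by construction), (ii) Fano's inequality, (iii) the codebook cardinality from the rate choice, and (iv) counting bounds on conditional typical sets. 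This yields a bound of order $\delta + \eta + 1/n + \PP_\sigma(E_\delta=1)$. A double application of Markov's inequality (once over $(w^n,y^n,z^n)$, once over stages $i$) then converts the small average KL into $\PP_\sigma(B^c_{\alpha,\gamma,\delta}\mid E_\delta=0)\leq \frac{2\ln 2}{\alpha^2\gamma}\epsilon$. No per-index conditional independence, no Pinsker or reverse-Pinsker step, and no perturbation of $\QQ$ is used. The singleton best-reply hypothesis plays no role in Proposition~\ref{prop:WynerZivCoding} itself — it is only needed downstream in Proposition~\ref{prop:UtilityBound} to guarantee robustness of the decoder's action when the posterior deviates slightly from the target. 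To fix your proof you should replace the Markov-chain step by the entropy averaging argument outlined above.
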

The proof of proposition \ref{prop:WynerZivCoding} is stated in App. \ref{sec:ProofPropositionCode}.
\begin{proposition}\label{prop:UtilityBound}
For any encoding strategy $\sigma$, we have:
\begin{align}
\bigg| \min_{\tau \in \textsf{BR}_{\textsf{d}}(\sigma)}\Phi_{\textsf{e}}^n(\sigma, \tau) - \widehat{\Phi_{\textsf{e}}}\bigg| \leq (\alpha+ 2 \gamma + \delta)\cdot \bar{\phi_{\textsf{e}}} + (1 - \PP_{\sigma}(B_{\alpha,\gamma,\delta})) \cdot \bar{\phi_{\textsf{e}}},
\end{align}
where $ \bar{\phi_{\textsf{e}}} = \max_{u,z,v} \phi_{\textsf{e}}(u,z,v) $ is the largest absolute value of encoder's utility.
\end{proposition}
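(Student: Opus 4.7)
The plan is to fix a near-optimal splitting $(\lambda_w,p_w)_{w}$ attaining $\widehat{\Phi_{\textsf{e}}}$ in \eqref{eq:SplittingFormulationHat}, set $\QQ(w|u)=\lambda_w p_w(u)/\PP(u)$ so that $\QQ(u,z,w)=\PP(u,z)\cdot \QQ(w|u)$ is the joint target law, and read $\sigma$ as a Wyner--Ziv scheme that also produces an auxiliary codeword $W^n$---this is how $w^n$ enters the definition of $B_{\alpha,\gamma,\delta}$. By the singleton hypothesis built into \eqref{eq:SplittingFormulationHat}, the stage-wise target action at $(z,w)$ is the unique element $v^{\star}(z,w):=V^{\star}(z,\QQ(U|z,w))$ of $\mc{V}^{\star}(z,\QQ(U|z,w))$.

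First I reduce the block best-reply to a stage-wise one. Since $\phi_{\textsf{d}}$ is additive across stages and $\PP_\sigma(U_i\mid Y^n,Z^n)$ is a sufficient statistic for stage $i$'s expected decoder utility, every $\tau\in\textsf{BR}_{\textsf{d}}(\sigma)$ picks at stage $i$ some $v_i(Y^n,Z^n)\in\argmax_v \E_{\PP_\sigma(U_i|Y^n,Z^n)}[\phi_{\textsf{d}}(U,Z_i,v)]$, with the worst-for-encoder tie-break. By the tower rule,
\begin{align*}
\Phi_{\textsf{e}}^n(\sigma,\tau)=\E\Bigl[\tfrac{1}{n}\sum_{i=1}^n \E_{\PP_\sigma(U_i|Y^n,Z^n)}\bigl[\phi_{\textsf{e}}(U,Z_i,v_i(Y^n,Z^n))\bigr]\Bigr].
\end{align*}

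Second, split this expectation according to $B_{\alpha,\gamma,\delta}$. On $B_{\alpha,\gamma,\delta}^c$, bound the contribution crudely by $\bar{\phi_{\textsf{e}}}\,(1-\PP_\sigma(B_{\alpha,\gamma,\delta}))$. On $B_{\alpha,\gamma,\delta}$, split stages into $i\in T_\alpha(W^n,Y^n,Z^n)$ (fraction $\geq 1-\gamma$) and $i\notin T_\alpha$ (fraction $\leq \gamma$), the latter contributing at most $\gamma\bar{\phi_{\textsf{e}}}$. For $i\in T_\alpha$, Pinsker's inequality applied to \eqref{eq:SetTalpha} gives $\bigl\|\PP_\sigma(U_i\mid Y^n,Z^n)-\QQ(U\mid W_i,Z_i)\bigr\|_1\leq \alpha$; the singleton hypothesis forces $v_i(Y^n,Z^n)=v^{\star}(Z_i,W_i)$ once $\alpha$ is below the uniform strict-best-reply margin (for larger $\alpha$, the $\alpha\bar{\phi_{\textsf{e}}}$ slack already dominates the crude $2\bar{\phi_{\textsf{e}}}$ estimate), so by $L^1$ continuity
\begin{align*}
\Bigl|\E_{\PP_\sigma(U_i|Y^n,Z^n)}[\phi_{\textsf{e}}(U,Z_i,v^{\star}(Z_i,W_i))]-\E_{\QQ(U|W_i,Z_i)}[\phi_{\textsf{e}}(U,Z_i,v^{\star}(Z_i,W_i))]\Bigr|\leq \alpha\bar{\phi_{\textsf{e}}}.
\end{align*}

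Finally, since $(W^n,Y^n,Z^n)\in A_\delta$ on $B_{\alpha,\gamma,\delta}$, the empirical average $\tfrac{1}{n}\sum_{i=1}^n \E_{\QQ(U|W_i,Z_i)}[\phi_{\textsf{e}}(U,Z_i,v^{\star}(Z_i,W_i))]$ is within $\delta\bar{\phi_{\textsf{e}}}$ of $\E_{\QQ}[\phi_{\textsf{e}}(U,Z,V^{\star}(Z,\QQ(U|W,Z)))]=\widehat{\Phi_{\textsf{e}}}$; one more $\gamma\bar{\phi_{\textsf{e}}}$ enters when truncating this target average to $i\in T_\alpha$, producing the $2\gamma$. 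Adding the four terms---bad event, the two non-$T_\alpha$ truncations, the posterior-replacement and the typicality step---yields the stated bound. The main obstacle is the action-stability step: the singleton hypothesis is precisely what controls the non-continuous dependence of $v_i$ on the posterior and justifies why $\widehat{\Phi_{\textsf{e}}}$ is defined with strict singleton constraints in \eqref{eq:SplittingFormulationHat}; once this point is granted the rest is bookkeeping via Pinsker and the typicality estimates from $A_\delta$.
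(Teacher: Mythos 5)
Your overall architecture is the right one and almost certainly mirrors the proof this paper defers to (Corollary A.18 of \cite{LeTreustTomala17}): reduce the block best-reply to a per-stage best-reply via the sufficiency of the posterior $\PP_\sigma(U_i\mid Y^n,Z^n)$, split the expectation on $B_{\alpha,\gamma,\delta}$ versus its complement, on the good event partition stages into $T_\alpha$ and its complement, and on $T_\alpha$ combine Pinsker (converting the K--L bound $\alpha^2/(2\ln 2)$ into an $L^1$ bound $\alpha$), the singleton action-stability hypothesis, $L^1$-continuity of the linear expected utility, and the typicality estimate from $A_\delta$. The bookkeeping then gives $(\alpha + 2\gamma + \delta)\bar{\phi_{\textsf{e}}}$ plus the bad-event term.

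There is, however, one genuine flaw, and you yourself flag it as the crux. The parenthetical ``for larger $\alpha$, the $\alpha\bar{\phi_{\textsf{e}}}$ slack already dominates the crude $2\bar{\phi_{\textsf{e}}}$ estimate'' is false: it would require $\alpha\geq 2$, whereas the best-reply margin $\mu$ induced by the singleton hypothesis can be arbitrarily small. For $\alpha\in(\mu,2)$, the posterior $\PP_\sigma(U_i\mid Y^n,Z^n)$ can lie within $L^1$-distance $\alpha$ of $\QQ(U\mid W_i,Z_i)$ yet still cross the decoder's indifference boundary, switching the best reply to one the encoder dislikes; the per-stage error is then $2\bar{\phi_{\textsf{e}}}$, not $\alpha\bar{\phi_{\textsf{e}}}$, and your argument does not recover the stated inequality. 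The correct reading is that the proposition implicitly requires $\alpha$ to sit below the strict-best-reply margin determined by the fixed near-optimal $\QQ$---this is exactly the purpose of the singleton constraint in \eqref{eq:SplittingFormulationHat}, which guarantees a strictly positive margin so that the eventual choice $\alpha\to 0$ in Corollary~\ref{coro:AchievabilityProof} falls below it. You should state that threshold condition explicitly rather than hand-wave over it. A secondary, minor point: unless $\phi_{\textsf{e}}\geq 0$, the bad-event contribution to $\big|\min_\tau\Phi_{\textsf{e}}^n - \widehat{\Phi_{\textsf{e}}}\big|$ is $2\bar{\phi_{\textsf{e}}}\,\PP_\sigma(B^c_{\alpha,\gamma,\delta})$ rather than $\bar{\phi_{\textsf{e}}}\,\PP_\sigma(B^c_{\alpha,\gamma,\delta})$, and replacing $\widehat{\Phi_{\textsf{e}}}$ by the value at the fixed near-optimal (not exactly optimal) $\QQ$ introduces a further slack; these imprecisions appear already in the stated Proposition and do not affect Corollary~\ref{coro:AchievabilityProof}.
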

For the proof of Proposition \ref{prop:UtilityBound}, we refers directly to the similar proof of \cite[Corollary A.18, pp. 53]{LeTreustTomala17}. 

\begin{corollary}\label{coro:AchievabilityProof}
For any $\varepsilon>0$, there exists $\bar{n}\in \N^{\star}$ such that for all $n\geq \bar{n}$ there exists an encoding strategy $\sigma$ such that:
\begin{align}
\bigg| \min_{\tau \in \textsf{BR}_{\textsf{d}}(\sigma)}\Phi_{\textsf{e}}^n(\sigma, \tau) - \widehat{\Phi_{\textsf{e}}}\bigg| \leq \varepsilon.
\end{align}
\end{corollary}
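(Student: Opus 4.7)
The plan is to combine Proposition \ref{prop:WynerZivCoding} with Proposition \ref{prop:UtilityBound}, treating the corollary as essentially an accounting of error parameters once the two technical workhorses are in place. Since $\widehat{\Phi_{\textsf{e}}}$ is defined as a supremum (with a \emph{strict} inequality in the information constraint), for any $\varepsilon>0$ I would first select a target distribution $\PP(u,z)\times \QQ(w|u)$ whose objective exceeds $\widehat{\Phi_{\textsf{e}}}-\varepsilon/2$ and which satisfies both $\max_{\PP(x)} I(X;Y) - I(U;W|Z)>0$ and the singleton best-reply condition $\mc{V}^{\star}(z,\QQ(u|z,w))$ for every $(z,w)$. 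These are precisely the hypotheses that unlock Proposition \ref{prop:WynerZivCoding}.

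Next, using the constant $\bar{\phi_{\textsf{e}}}$ appearing in Proposition \ref{prop:UtilityBound}, I would split the remaining $\varepsilon/2$ error budget among the four terms of that bound. Concretely, fix $\alpha,\gamma>0$ so that $(\alpha+2\gamma)\cdot \bar{\phi_{\textsf{e}}}\leq \varepsilon/8$, then invoke Proposition \ref{prop:WynerZivCoding} with the tolerance $\varepsilon/(8\bar{\phi_{\textsf{e}}})$ to obtain thresholds $\bar{\delta}>0$ and $\bar{n}\in\N^{\star}$. Shrinking $\delta<\bar{\delta}$ further so that $\delta\cdot\bar{\phi_{\textsf{e}}}\leq \varepsilon/8$, the proposition supplies, for every $n\geq \bar{n}$, an encoding strategy $\sigma$ with $\PP_{\sigma}(B_{\alpha,\gamma,\delta}^c)\leq \varepsilon/(8\bar{\phi_{\textsf{e}}})$. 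Plugging these values into Proposition \ref{prop:UtilityBound} yields
\begin{align*}
\bigg|\min_{\tau\in\textsf{BR}_{\textsf{d}}(\sigma)}\Phi_{\textsf{e}}^{n}(\sigma,\tau)-\widehat{\Phi_{\textsf{e}}}\bigg|\leq (\alpha+2\gamma+\delta)\cdot\bar{\phi_{\textsf{e}}}+(1-\PP_{\sigma}(B_{\alpha,\gamma,\delta}))\cdot\bar{\phi_{\textsf{e}}}+\frac{\varepsilon}{2}\leq \varepsilon,
\end{align*}
the extra $\varepsilon/2$ coming from the initial sup-approximation of $\widehat{\Phi_{\textsf{e}}}$, and the rest from the three chosen error allocations.

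The main obstacle has already been absorbed into Proposition \ref{prop:WynerZivCoding}: the substantive task is to prove that Wyner-Ziv's source coding induces, for a fraction of indices $i$ arbitrarily close to $1$, a posterior $\PP_{\sigma}(U_i\mid y^n,z^n)$ that is K-L close to the target single-letter posterior $\QQ(U_i\mid w_i,z_i)$. Granting that proposition together with Proposition \ref{prop:UtilityBound}, the corollary reduces to quantitative bookkeeping. One conceptual point that deserves care is that the strict inequality $I(U;W|Z)<\max_{\PP(x)}I(X;Y)$ in the definition of $\widehat{\Phi_{\textsf{e}}}$ is exactly what allows the Wyner-Ziv codebook to be chosen with enough slack to drive the posterior-belief error to zero; this is why the detour through Lemma \ref{lemma:RestrictedSplitting}, which replaces $\Phi_{\textsf{e}}^{\star}$ by $\widehat{\Phi_{\textsf{e}}}$, is essential rather than cosmetic.
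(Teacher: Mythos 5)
Your proof is correct and follows essentially the same approach as the paper: combine Proposition~\ref{prop:WynerZivCoding} with Proposition~\ref{prop:UtilityBound}, then let $\alpha,\gamma,\delta\to 0$ and $n\to\infty$. The only refinement you add is the explicit $\varepsilon/2$ sup-approximation step, which reflects the observation that Proposition~\ref{prop:UtilityBound}, as literally stated ``for any $\sigma$'', is parameterized by a fixed target distribution $\QQ(u,z,w)$ (entering through $B_{\alpha,\gamma,\delta}$ and $\QQ(U_i|w_i,z_i)$) and hence should be read as controlling the distance to the expected utility at that $\QQ$ rather than to the supremum $\widehat{\Phi_{\textsf{e}}}$ itself; the paper's one-line proof silently absorbs this first approximation into the invocation of the cited corollary.
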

The proof of Corollary \ref{coro:AchievabilityProof} comes from combining Proposition \ref{prop:WynerZivCoding} with Proposition \ref{prop:UtilityBound} and choosing  parameters $\alpha$, $\gamma$, $\delta$ small and  $n\in \N^{\star}$ large. This concludes the achievability proof of Theorem \ref{theo:MaxMinStackelberg}.

%%%%%%%%%%%%%%%%%%%%%%%%%%%%%%%%%%%%%%%%%%%%%%%%%%%%%%%%%%%%%%%%%%%%%%%%%%%%%%%%%%%%%%%%%%%%%%%%%%%%%%%%%%%%%%%%%%%%%%%%%%%%%%%%%%%%%%%%%%%%%%%%%%%%%%%%%%%%%%%%%%%%%

\subsection{Proof of Proposition \ref{prop:WynerZivCoding}}\label{sec:ProofPropositionCode}

%\subsection{Wyner-Ziv source coding}\label{sec:WynerZivCoding}
We assume that the probability distribution $\PP(u,z)\times \QQ(w|u)$ satisfies the two following conditions:
\begin{align}
\begin{cases}
&\max_{\PP(x)}I(X;Y) - I(U;W|Z) >0,\\
&\mc{V}^{\star}(z,\QQ(u|z,w)) \text{ is a singleton }\forall (z,w)\in \mc{Z}\times\mc{W},
\end{cases}
\end{align}
The strict information constraint ensures there exists a small parameter $\eta>0$ and rates  $\textsf{R}\geq 0 $, $\textsf{R}_{\textsf{L}}\geq 0 $, such that:
\begin{eqnarray}
\textsf{R}  + \textsf{R}_{\textsf{L}}& =&       I( U;W )  + \eta  \label{eq:AchievabilityB1} , \\
\textsf{R}_{\textsf{L}}  &\leq &       I( Z;W )  - \eta  \label{eq:AchievabilityB2} , \\
\textsf{R} \; & \leq&   \max_{\PP(x)} I( X; Y )  -  \eta  \label{eq:AchievabilityB3}  .
\end{eqnarray}
We now recall the random coding construction of Wyner-Ziv \cite{wyner-it-1976} and we investigate the corresponding posterior beliefs. We note by $\Sigma$ the random encoding/decoding, defined as follows:
\begin{itemize}
\item[$\bullet$] \textit{Random codebook.} We defines the indices $m\in\mc{M}$ with $| \mc{M}|= 2^{n   \sf{R}} $ and $l\in\mc{M}_{\textsf{L}}$ with $| \mc{M}_{\textsf{L}}  |= 2^{n   \sf{R}_{\textsf{L}} } $. We draw $| \mc{M} \times \mc{M}_{\textsf{L}}   |= 2^{n  ( \sf{R}  + \textsf{R}_{\textsf{L}})  } $ sequences $W^n(m,l)$ with the i.i.d. probability distribution $\QQ^{\otimes n}(w) $, and $| \mc{M}  |= 2^{n   \sf{R}   } $ sequences $X^n(m)$,  with the i.i.d. probability distribution $\PP^{\otimes n}(x) $ that maximizes the channel capacity in \eqref{eq:AchievabilityB3}. 
\item[$\bullet$] \textit{Encoding function.} The encoder observes the sequence of symbols of source $U^n \in  \mc{U}^n$ and finds a pair of indices $(m,l)\in \mc{M} \times  \mc{M}_{\textsf{L}}$ such that the sequences  $\big(U^n,W^n(m,l)\big) \in A_{\delta}$ are jointly typical. It sends the sequence $X^n(m)$ corresponding to the index $m\in \mc{M}$.
\item[$\bullet$] \textit{Decoding function.} The decoder observes the sequence of channel output $Y^n\in\mc{Y}^n$. It returns an index $\hat{m}\in \mc{M}$ such that the sequences  $\big(Y^n,X^n(\hat{m})\big) \in A_{\delta}$ are jointly typical.  Then it observes the sequence of state information $Z^n\in\mc{Z}^n$ and returns an index $\hat{l}\in \mc{M}_{\textsf{L}}$ such that the sequences  $\big(Z^n,W^n(\hat{m},\hat{l})\big) \in A_{\delta}$ are jointly typical. \item[$\bullet$] \textit{Error Event.} We introduce the event of error $E_{\delta} \in \{0,1\}$ defined as follows:
%\begin{align}
%E_{\delta} = \Bigg\{
%\begin{array}{lll}
%0&\text{ if }&  (M,L)=( \hat{M},\hat{L})  \;\; \text{ and }\;\\\
%&&  \big(U^n ,  Z^n, W^n, X^n,Y^n \big)    \in A_{\delta} ,\\
%1 &\text{ if }&  (M,L)\neq( \hat{M},\hat{L})  \;\; \text{ or } \;\;\\
%&& \big(U^n , Z^,  W^n, X^n,Y^n \big)    \notin A_{\delta} .
%\end{array}
%\Bigg.
%\end{align}
%An error $E_{\delta}=1$ occurs in the coding process if: 1) the indexes $(M,L)\in \mc{M}$ and $\hat{M}\in \mc{M}$ are not equal or 2) the sequences of symbols $\big( U^n ,W^n , X^n,Y^n\big) \notin A_{\delta}$ are not jointly typical. 
\end{itemize}
\begin{align}
E_{\delta} = \Bigg\{
\begin{array}{lll}
0&\text{ if }&  (M,L)=( \hat{M},\hat{L})  \;\; \text{ and }\;  \big(U^n ,  Z^n, W^n, X^n,Y^n \big)    \in A_{\delta} ,\\
1 &\text{ otherwise.}& 
\end{array}
\Bigg.
\end{align}

\textit{Expected error probability of the random encoding/decoding $\Sigma$.}
For all $\varepsilon_2>0$, for all $\eta >0$, there exists a $\bar{\delta}>0$, for all $\delta \leq \bar{\delta}$  there exists $\bar{n}$ such that for all $n\geq\bar{n}$, the expected probability of the following error events are bounded by $\varepsilon_2$:
\begin{align}
&\E_{\Sigma}\bigg[ \PP\bigg( \forall  (m,l) ,\quad 
\big(U^n, W^n(m,l) \big) \notin A_{\delta} \bigg)\bigg]  \leq \varepsilon_2, \label{eq:AchievProbaB1} \\
&\E_{\Sigma}\bigg[ \PP\bigg(  \exists l'\neq  l  ,\text{ s.t. } 
\big(Z^n , W^n(m,l') \big) \in A_{\delta}\bigg)\bigg]   \leq \varepsilon_2, \label{eq:AchievProbaB2}\\
&\E_{\Sigma}\bigg[ \PP\bigg(  \exists m'\neq  m  ,\text{ s.t. } 
\big(Y^n , X^n(m') \big) \in A_{\delta}\bigg)\bigg]   \leq \varepsilon_2, \label{eq:AchievProbaB3}
\end{align}
Eq. \eqref{eq:AchievProbaB1} comes from \eqref{eq:AchievabilityB1} and the covering lemma \cite[pp. 208]{ElGammalKim(book)11}.\\
Eq. \eqref{eq:AchievProbaB2} comes from \eqref{eq:AchievabilityB2} and the packing lemma \cite[pp. 46]{ElGammalKim(book)11}.\\
Eq. \eqref{eq:AchievProbaB3} comes from \eqref{eq:AchievabilityB3} and the packing lemma \cite[pp. 46]{ElGammalKim(book)11}.

There exists a coding strategy $\sigma$ with small error probability:
\begin{align}
&\forall \varepsilon_2>0,\;  \forall \eta>0, \;  \exists \bar{\delta}>0,\;\forall \delta\leq \bar{\delta},  \; \exists \bar{n}>0,\;\forall n\geq \bar{n},\quad \exists \sigma,\qquad  \PP_{\sigma}\big(E_{\delta}=1 \big) \leq \varepsilon_2. \label{eq:BoundError0}
\end{align}

\textit{Control of the posterior beliefs.} We assume that the event $E_{\delta}=0$ is realized  and we investigate the posterior beliefs $\PP_{\sigma}(u_i|y^n,z^n,E_{\delta}=0)$ induced by Wyner-Ziv's encoding strategy $\sigma$.
\begin{align}
&  \E_{\sigma} \Bigg[ \frac{1}{n}  \sum_{i=1}^n D\bigg(  \PP_{\sigma}(U_i|Y^n,Z^n,E_{\delta}=0) \bigg| \bigg|   \QQ(U_i|W_i,Z_i) \bigg)\Bigg] \nonumber \\
=& \sum_{(w^n,z^n,y^n)\in A_{\delta} }\PP_{\sigma}(w^n,z^n,y^n|E_{\delta}=0) \times \frac{1}{n}  \sum_{i=1}^n D\bigg(  \PP_{\sigma}(U_i|y^n,z^n,E_{\delta}=0) \bigg| \bigg|   \QQ(U_i|w_i,z_i) \bigg) \label{eq:Beliefs1} \\
=&\frac{1}{n}   \sum_{(u^n,z^n,w^n,y^n)\in A_{\delta} }\PP_{\sigma}(u^n,z^n,w^n,y^n|E_{\delta}=0)    \times \log_2 \frac{1}{\prod_{i=1}^n \QQ(u_i|w_i,z_i)}  -   \frac{1}{n}  \sum_{i=1}^n H(U_i|Y^n,Z^n,E_{\delta}=0) \nonumber \\&& \label{eq:Beliefs2} \\
\leq&H(U|W,Z)   - \frac{1}{n}  H(U^n|W^n,Y^n,Z^n,E_{\delta}=0) + \delta  \label{eq:Beliefs3} \\
\leq&H(U|W,Z)  - \frac{1}{n}  H(U^n|W^n,Z^n,E_{\delta}=0) + \delta \label{eq:Beliefs4} \\
=&H(U|W,Z) -  \frac{1}{n}  H(U^n|E_{\delta}=0) +  \frac{1}{n}  I(U^n;W^n|E_{\delta}=0) \nonumber\\
+& \frac{1}{n}  H(Z^n|W^n,E_{\delta}=0) - \frac{1}{n}  H(Z^n|U^n,W^n,E_{\delta}=0) + \delta. \label{eq:Beliefs5} 
\end{align}
Eq. \eqref{eq:Beliefs1}-\eqref{eq:Beliefs2} come from the hypothesis $E_{\delta}=0$ of typical sequences $(u^n,z^n,w^n,y^n)\in A_{\delta} $ and the definition of the conditional K-L divergence \cite[pp. 24]{cover-book-2006}.\\
Eq. \eqref{eq:Beliefs3} comes from property of typical sequences \cite[pp. 26]{ElGammalKim(book)11} and the conditioning that reduces entropy.\\
Eq. \eqref{eq:Beliefs4} comes from the Markov chain $Z^n -\!\!\!\!\minuso\!\!\!\!- U^n -\!\!\!\!\minuso\!\!\!\!- W^n -\!\!\!\!\minuso\!\!\!\!- Y^n$ induced by the strategy $\sigma$, that implies $H(U^n|W^n,Z^n,E_{\delta}=0) =H(U^n|W^n,Y^n,Z^n,E_{\delta}=0)$.\\
Eq. \eqref{eq:Beliefs5} is a reformulation of \eqref{eq:Beliefs4}.\\
\begin{align}
&\frac{1}{n}  H(U^n|E_{\delta}=0)\geq H(U)  - \frac{1}{n}   - \log_2 |\mc{U}| \cdot \PP_{\sigma}\big(E_{\delta}=1 \big),  \label{eq:Control1}\\
&\frac{1}{n}  I(U^n;W^n|E_{\delta}=0) \leq \textsf{R}  + \textsf{R}_{\textsf{L}} =       I( U;W )  + \eta, \label{eq:Control2}\\
&\frac{1}{n}  H(Z^n|W^n,E_{\delta}=0) \leq \frac{1}{n} \log_2 |A_{\delta}(z^n|w^n)| \leq  H(Z|W) + \delta, \label{eq:Control3}\\
&\frac{1}{n}   H(Z^n|U^n,W^n,E_{\delta}=0) \geq H(Z|U,W) - \frac{1}{n}   - \log_2 |\mc{U}| \cdot \PP_{\sigma}\big(E_{\delta}=1 \big).  \label{eq:Control4}
\end{align}
Eq. \eqref{eq:Control1} comes from the i.i.d. source and Fano's inequality.\\
Eq. \eqref{eq:Control2} comes from the cardinality of codebook given by \eqref{eq:AchievabilityB1}. This argument is also used in \cite[Eq. (23)]{MerhavShamai(StateMasking)07}.\\
Eq. \eqref{eq:Control3} comes from the cardinality of $A_{\delta}(z^n|w^n)$, see also \cite[pp. 27]{ElGammalKim(book)11}.\\
Eq. \eqref{eq:Control4} comes from Fano's inequality $H(Z^n|U^n,W^n)$, and the Markov chain $Z^n -\!\!\!\!\minuso\!\!\!\!- U^n -\!\!\!\!\minuso\!\!\!\!- W^n$ $H(Z^n|U^n)$, the i.i.d. property of the source $(U,Z)$ that implies $H(Z|U)$ and the Markov chain $Z -\!\!\!\!\minuso\!\!\!\!- U -\!\!\!\!\minuso\!\!\!\!- W$ that implies  $H(Z|U,W)$.

Equations \eqref{eq:Beliefs5}-\eqref{eq:Control4} shows that on average, the posterior beliefs $ \PP_{\sigma}(u_i|y^n,z^n,E_{\delta}=0)$ induced by strategy $\sigma$ is close to the target probability distribution $\QQ(u|w,z)$.
\begin{align}
&  \E_{\sigma} \Bigg[ \frac{1}{n}  \sum_{i=1}^n D\bigg(  \PP_{\sigma}(U_i|Y^n,Z^n,E_{\delta}=0) \bigg| \bigg|   \QQ(U_i|W_i,Z_i) \bigg)\Bigg] \nonumber \\
\leq&2\delta + \eta + \frac{2}{n}   + 2 \log_2 |\mc{U}| \cdot \PP_{\sigma}\big(E_{\delta}=1 \big) := \epsilon. \label{eq:ControlFinal} 
\end{align}

Then we have: 
\begin{align}
\PP_\sigma(B^c_{\alpha,\gamma,\delta})=&1 - \PP_\sigma(B_{\alpha,\gamma,\delta})   \nonumber\\
=&\PP_\sigma(E_{\delta}=1) \PP_\sigma(B^c_{\alpha,\gamma,\delta}| E_{\delta}=1)  + \PP_\sigma(E_{\delta}=0) \PP_\sigma(B^c_{\alpha,\gamma,\delta}| E_{\delta}=0) \nonumber\\
\leq&\PP_\sigma(E_{\delta}=1)   +  \PP_\sigma(B^c_{\alpha,\gamma,\delta}| E_{\delta}=0) \nonumber\\
\leq&\varepsilon_2  +  \PP_\sigma(B^c_{\alpha,\gamma,\delta}| E_{\delta}=0) .\label{eq:ErrorTerm2}
\end{align}
Moreover:
\begin{align}
& \PP_\sigma(B^c_{\alpha,\gamma,\delta}| E_{\delta}=0)\nonumber\\
=&\sum_{w^n,y^n,z^n}\PP_{\sigma}\Big( (w^n,y^n,z^n)\in  B^c_{\alpha,\gamma,\delta} \Big| E_{\delta}=0\Big)  \label{eq:MarkovIneqB0} \\
=&\sum_{w^n,y^n,z^n}\PP_{\sigma}\Bigg( (w^n,y^n,z^n)\,\quad \text{ s.t. } \quad  \frac{|T_\alpha(w^n,y^n,z^n)|}{n}< 1-\gamma  \Bigg| E_{\delta}=0\Bigg)  \label{eq:MarkovIneqB1} \\
%=&\PP_{\sigma}\Bigg( \frac{\big|\;I^c_{\gamma}\;\big|}{n} >  \delta \Bigg| E_{\delta}=0\Bigg) \label{eq:MarkovIneq1} \\
=& \PP_{\sigma}\Bigg( \frac{1}{n} \cdot \bigg|\bigg\{i , \text{ s.t. } D\Big(\PP_{\sigma}(U_i|y^n,z^n)\Big|\Big|\QQ(U_i|w_i,z_i)\Big)\leq  \frac{\alpha^2}{2\ln 2}   \bigg\}\bigg| < 1 -\gamma \Bigg| E_{\delta}=0 \Bigg) \label{eq:MarkovIneqB2}\\
=& \PP_{\sigma}\Bigg( \frac{1}{n} \cdot \bigg| \bigg\{i , \text{ s.t. } D\Big(\PP_{\sigma}(U_i|y^n,z^n)\Big|\Big|\QQ(U_i|w_i,z_i)\Big)>  \frac{\alpha^2}{2\ln 2}   \bigg\}\bigg| \geq \gamma \Bigg| E_{\delta}=0 \Bigg)  \label{eq:MarkovIneqB2b}\\
\leq& \frac{2\ln 2}{\alpha^2\gamma}   \cdot \E_{\sigma}\bigg[  \frac{1}{n}  \sum_{i=1}^n   D\Big(\PP_{\sigma}(U_i|y^n,z^n)\Big|\Big|\QQ(U_i|w_i,z_i)\Big)\bigg] \label{eq:MarkovIneqB3}  \\
\leq&\frac{2\ln 2}{\alpha^2\gamma}   \cdot \bigg( \eta +  \delta + \frac{2}{n}   +2 \log_2 |\mc{U}| \cdot \PP_{\sigma}\big(E_{\delta}=1 \big)  \bigg) \label{eq:MarkovIneqB4}.
\end{align}
Eq. \eqref{eq:MarkovIneqB0} to \eqref{eq:MarkovIneqB2b} are simple reformulations.\\
Eq. \eqref{eq:MarkovIneqB3} comes from the double use of Markov's inequality as in \cite[Lemma A.22, pp.60]{LeTreustTomala17}. \\
Eq. \eqref{eq:MarkovIneqB4} comes from \eqref{eq:ControlFinal}.

Combining equations \eqref{eq:BoundError0}, \eqref{eq:ErrorTerm2}, \eqref{eq:MarkovIneqB4} and choosing $\eta>0$ small, we obtain the following statement:
\begin{align}
&\forall \varepsilon>0, \;\forall \alpha>0, \;\forall \gamma>0,\;\exists \bar{\delta}>0,\;\forall \delta< \bar{\delta}, \;\exists \bar{n}\in \N^{\star},\;\forall n\geq  \bar{n},  \exists \sigma, \text{ s.t. } \PP_{\sigma}(B_{\alpha,\gamma,\delta}^c) \leq \varepsilon.\label{eq:propWynerZivCoding}
\end{align}
This concludes the proof of Proposition \ref{prop:WynerZivCoding}.

%%%%%%%%%%%%%%%%%%%%%%%%%%%%%%%%%%%%%%%%%%%%%%%%%%%%%%%%%%%%%%%%%%%%%%%%%%%%%%%%%%%%%%%%%%%%%%%%%%%%%%%%%%%%%%%%%%%%%%%%%%%%%%%%%%%%%%%%%%%%%%%%%%%%%%%%%%%%%%%%%%%%%%%%%%%%%%%%%%%%%%%%%%%

%%%%%%%%%%%%%%%%%%%%%%%%%%%%%%%%%%%%%%%%%%%%%%%%%%%%%%%%%%%%%%%%%%%%%%%%%%%%%%%%%%%%%%%%%%%%%%%%%%%%%%%%%%%%%%%%%%%%%%%%%%%%%%%%%%%%%%%%%%%%%%%%%%%%%%%%%%%%%%%%%%%%%%%%%%%%%%%%%%%%%%%%%%%%%%%%%%%%%%%%%%%%%%%%%%%%%%%%%%%

%\newpage

\section{Converse Proof of Theorem \ref{theo:MaxMinStackelberg}}\label{sec:ConverseProof}

We consider an encoding strategy $\sigma$ of length $n\in\N$. We denote by $T$ the uniform random variable $\{1,\ldots,n\}$ and the notation $Z^{-T}$ stands for $(Z_1,\ldots,Z_{t-1},Z_{t+1},\ldots Z_n)$, where $Z_T$ has been removed. We introduce the auxiliary random variable $W=(Y^n,Z^{-T},T)$ whose joint probability distribution $\PP(u,z,w)$ with $(U,Z)$ is defined by:
\begin{align}
\PP(u,z,w) =& \PP_{\sigma}\big(u_T,z_T,y^n,z^{-T},T\big) \nonumber \\
=& \PP(T=i) \cdot  \PP_{\sigma}\big(u_T,z_T,y^n,z^{-T}\big|T=i\big) \nonumber \\
=& \frac{1}{n} \cdot  \PP_{\sigma}\big(u_i, z_i,y^n,z^{-i}\big) . \label{eq:distributionW}
\end{align}  
This identification ensures that the Markov chain $W -\!\!\!\!\minuso\!\!\!\!- U_T -\!\!\!\!\minuso\!\!\!\!- Z_T$ is satisfied. Let us fix a decoding strategy $\tau(v^n|y^n,z^n)$ and define $\tilde{\tau}(v|w,z) = \tilde{\tau}(v|y^n,z^{-i},i,z) = \tau_i(v_i|y^n,z^n)$ where $\tau_i$ denotes the $i$-th coordinate of  $\tau(v^n|y^n,z^n)$. The encoder's long-run utility writes:
\begin{align}
\Phi_{\textsf{e}}^n(\sigma,\tau)
=& \sum_{u^n,z^n,y^n}\PP_{\sigma}(u^n,z^n,y^n )  \sum_{v^n} \tau(v^n| y^n,z^n )  \cdot \Bigg[    \frac{1}{n} \sum_{i=1}^n \phi_{\textsf{e}}(u_i,z_i,v_i)\Bigg] \label{eq:Reformulation2} \\
=& \sum_{i=1}^n  \sum_{u_i,z_i,\atop z^{-i},y^n}  \frac{1}{n}  \cdot \PP_{\sigma}(u_i,z^n,y^n )  \sum_{v_i}\tau_i(v_i| y^n ,z^n)  \cdot     \phi_{\textsf{e}}(u_i,z_i,v_i)\label{eq:Reformulation3} \\
=& \sum_{u_i,z_i,y^n,\atop z^{-i},i} \PP_{\sigma}(u_i,z_i,y^n,z^{-i},i )  \sum_{v_i}\tau_i(v_i| z_i, y^n,z^{-i},i )  \cdot     \phi_{\textsf{e}}(u_i,z_i,v_i) \label{eq:Reformulation4} \\
=& \sum_{u,z,w} \PP(u,z,w )  \sum_{v}\tilde{\tau}(v| w,z ) \cdot     \phi_{\textsf{e}}(u,z,v).\label{eq:Reformulation5}
\end{align}
Eq. \eqref{eq:Reformulation2} - \eqref{eq:Reformulation4} are reformulations and re-orderings.\\
Eq. \eqref{eq:Reformulation5} comes from replacing the random variables $(Y^{n},Z^{-T},T)$ by $W$ whose distribution is defined in \eqref{eq:distributionW}.

Equations \eqref{eq:Reformulation2} - \eqref{eq:Reformulation5} are also valid for the decoder's utility $\Phi_{\textsf{d}}^n(\sigma,\tau)= \sum_{u,z,\atop w,v} \PP(u,z,w )\tilde{\tau}(v| w,z ) \cdot     \phi_{\textsf{d}}(u,z,v)$. A best-reply strategy $\tau\in \textsf{BR}_{\textsf{d}}(\sigma)$ reformulates as:
\begin{align}
& \tau \in \argmax_{\tau'(v^n|y^n,z^n)} \sum_{u^n,z^n,\atop x^n,y^n,v^n} \PP_{\sigma}(u^n,z^n,x^n,y^n) \cdot \tau'(v^n|y^n,z^n)\cdot  \Bigg[ \frac{1}{n} \sum_{i=1}^n \phi_{\textsf{d}}(u_i,z_i,v_i)\Bigg]\\
\Longleftrightarrow&\tilde{\tau}(v|w,z) \in \argmax_{\tilde{\tau}'(v|w,z)}  \sum_{u,z, w} \PP(u,z,w) \cdot \tilde{\tau}'(v|w,z) \cdot   \phi_{\textsf{d}}(u,z,v)\\
\Longleftrightarrow&\tilde{\tau}(v|w,z) \in \Q_2\big(\PP(u,z,w)\big).\label{eq:Identification}
\end{align}  
We now prove that the distribution $\PP(u,z,w)$ defined in \eqref{eq:distributionW}, satisfies the information constraint of the set ${\Q}_0$.
\begin{align}
0 \leq& I(X^n ; Y^n) - I(U^n,Z^n;Y^n) \label{eq:ConverseW1} \\
\leq& \sum_{i=1}^n H( Y_i)  -  \sum_{i=1}^n H(Y_i | X_i) -    I(U^n;Y^n | Z^n)   \label{eq:ConverseW2} \\
\leq& n \cdot \max_{\PP(x)} I(X ; Y)  -    \sum_{i=1}^n   I(U_i;Y^n | Z^n,U^{i-1}) \label{eq:ConverseW3} \\
=& n \cdot \max_{\PP(x)} I(X ; Y)  -    \sum_{i=1}^n   I(U_i;Y^n, Z^{-i},U^{i-1} | Z_i) \label{eq:ConverseW4} \\
\leq& n \cdot \max_{\PP(x)} I(X ; Y)  -    \sum_{i=1}^n   I(U_i;Y^n, Z^{-i} | Z_i) \label{eq:ConverseW5} \\
=& n \cdot \max_{\PP(x)} I(X ; Y)  -   n \cdot     I(U_T;Y^n, Z^{-T} | Z_T,T) \label{eq:ConverseW6} \\
=& n \cdot \max_{\PP(x)} I(X ; Y)  -   n \cdot     I(U_T;Y^n, Z^{-T},T | Z_T) \label{eq:ConverseW7} \\
%&=& n \cdot \max_{\PP(x)} I(X ; Y)  -   n \cdot     I(U_T,Z_T;Y^n, Z^{-T},T ) + n \cdot     I(Z_T;Y^n, Z^{-T},T) \label{eq:ConverseW8} \\
%&=& n \cdot \max_{\PP(x)} I(X ; Y)  -   n \cdot     I(U_T;Y^n, Z^{-T},T ) + n \cdot     I(Z_T;Y^n, Z^{-T},T) \label{eq:ConverseW8} \\
=& n \cdot \max_{\PP(x)} I(X ; Y)  -   n \cdot     I(U;W | Z) \label{eq:ConverseW8} \\
=& n \cdot  \bigg(\max_{\PP(x)} I(X ; Y)  -  I(U ; W) + I(Z ; W) \bigg) \label{eq:ConverseW9} .
\end{align}
Eq. \eqref{eq:ConverseW1} comes from the Markov chain $Y^n  -\!\!\!\!\minuso\!\!\!\!- X^n   -\!\!\!\!\minuso\!\!\!\!- (U^n,Z^n)$.\\
Eq. \eqref{eq:ConverseW2} comes from the memoryless property of the channel and from removing the positive term $I(U^n; Z^n)\geq0$.\\
Eq. \eqref{eq:ConverseW3} comes from taking the maximum $\PP(x)$ and chain rule.\\
Eq. \eqref{eq:ConverseW4} comes from the i.i.d. property of the source $(U,Z)$ that implies $I(U_i,Z_i;Z^{-i},U^{i-1})=I(U_i;Z^{-i},U^{i-1} | Z_i)=0$.\\
Eq. \eqref{eq:ConverseW5} comes from removing $I(U_i;U^{i-1} | Y^n, Z^{-i},Z_i)\geq0$.\\
Eq. \eqref{eq:ConverseW6} comes from the uniform random variable $T\in\{1,\ldots,n\}$.\\
Eq. \eqref{eq:ConverseW7} comes from the independence between $T$ and the source $(U,Z)$, that implies $I(U_T,Z_T;T)  = I(U_T;T|Z_T)  =0$.\\
Eq. \eqref{eq:ConverseW8} comes from the identification $W = (Y^{n},Z^{-T},T)$. \\
Eq. \eqref{eq:ConverseW9} comes from the  Markov chain $W -\!\!\!\!\minuso\!\!\!\!- U_T -\!\!\!\!\minuso\!\!\!\!- Z_T$. This proves that the distribution $\PP_{\sigma}(u,z,w)$  belongs to the set $\Q_0$.

Therefore, for any encoding strategy $\sigma$ and all $n$, we have:
\begin{align}
&\min_{\tau \in \textsf{BR}_{\textsf{d}}(\sigma)}  \Phi_{\textsf{e}}^n(\sigma,\tau)  \\
=&\min_{\tilde{\tau}(v|w,z) \in \atop \Q_2(\PP(u,z,w))}   \sum_{u,z,w} \PP(u,z,w )  \sum_{v}\tilde{\tau}(v| w,z ) \cdot     \phi_{\textsf{e}}(u,z,v)\\
=&\min_{\tilde{\tau}(v |z,w) \in  \atop \Q_2(\PP(u,z,w))} \E_{\PP(u,z,w) \atop \times \tilde{\tau}(v |z,w)} \bigg[\phi_{\textsf{e}}(U,Z,V)\bigg]\\
\leq&\sup_{ \QQ(u,z,w) \in \Q_0} \min_{\QQ(v |z,w) \in  \atop \Q_2(\QQ(u,z,w))} \E_{\QQ(u,z,w) \atop \times \QQ(v |z,w)} \bigg[\phi_{\textsf{e}}(U,Z,V)\bigg]= \Phi_{\textsf{e}}^{\star}.
\end{align}
The last inequality comes from the probability distribution $\PP(u,z,w)$ that satisfies the information constraint of the set ${\Q}_0$. The first cardinality bound $|\mc{W}| = |\mc{U}|+1$ comes from \cite[Lemma 6.1]{LeTreustTomala17}. The second cardinality bound for $|\mc{W}| = |\mc{V}|^{|\mc{Z}|}$ comes from \cite[Lemma 6.3]{LeTreustTomala17}, by considering the encoder tells to the decoder which action $v\in \mc{V}$ to play in each state $z\in \mc{Z}$.

%the cardinality bound  $|\mc{W}| = \min\big(|\mc{U}|+1, |\mc{V}|^{|\mc{Z}|}\big)$ comes from \cite[Lemma 6.1 and Lemma 6.3]{LeTreustTomala17}. 

This conclude the proof of \eqref{eq:Converse} in Theorem \ref{theo:MaxMinStackelberg}.

%%%%%%%%%%%%%%%%%%%%%%%%%%%%%%%%%%%%%%%%%%%%%%%%%%%%%%%%%%%%%%%%%%%%%%%%%%%%%%%%%%%%%%%%%%%%%%%%%%%%%%%%%%%%%%%%%%%%%%%%%%%%%%%%%%%%%%%%%%%%%%%%%%%%%%%%%%%%%%%%%%%%%%%%%%%%%%%%%%%%%%%%%%%%%%%%%%%%%%%%%%%%%%%%

%%%%%%%%%%%%%%%%%%%%%%%%%%%%%%%%%
%%%%%%%%%%%%%%%%%%%%%%%%%%%%%%%%%%%%%%%%%%%%%%%%%%%%%%%%%%%%%%%%%%%%%%%%%%%%%%%%%%%%%%%%%%%%%%%%%%%%%%%%%%%%%%%%%%%%%%%%%%%%%%%%%%%%%%%%%%%%%%%%%%%%%%%%%%%%%%%%%%%%%%%%%%%%%%%%%%%%%%%%%%%%%%%%%%%%%%%%%%%%%%%%%%%%%%%%%%%%%

\section{Proof of Lemma \ref{lemma:FeasiblePosteriors}}\label{sec:ProofLemmaPosteriors}

%\begin{proof}[Lemma \ref{lemma:FeasiblePosteriors}]
By inverting the system of equations, we have the following equivalence:
\begin{align}
\begin{cases}
q_1 &= \frac{p_0 \cdot \beta}{p_0 \cdot \beta + (1-p_0) \cdot (1- \alpha)}\\
q_2 &= \frac{p_0 \cdot (1-\beta)}{p_0 \cdot (1-\beta)  + (1-p_0) \cdot \alpha}
\end{cases}
&\Longleftrightarrow
%\begin{cases}
%q_1 \cdot (p_0 \cdot \beta + (1-p_0) \cdot (1- \alpha))&= p_0 \cdot \beta \\
%q_2 \cdot (p_0 \cdot (1-\beta)  + (1-p_0) \cdot \alpha)&= p_0 \cdot (1-\beta)
%\end{cases}\\
%&\Longleftrightarrow
%\begin{cases}
%q_1 \cdot  (1-p_0) \cdot (1- \alpha)&= p_0  \cdot (1 - q_1)\cdot \beta \\
%q_2 \cdot (1-p_0) \cdot \alpha&= p_0 \cdot(1 - q_2) \cdot (1-\beta)
%\end{cases}\\
%&\Longleftrightarrow
%\begin{cases}
%\alpha &= \frac{(q_1 \cdot  (1-p_0) - p_0  \cdot (1 - q_1))\cdot p_0 \cdot(1 - q_2)}{q_1 \cdot  (1-p_0) \cdot p_0 \cdot(1 - q_2) - p_0  \cdot (1 - q_1) \cdot q_2 \cdot (1-p_0)}\\
%\beta &= \frac{q_1 \cdot  (1-p_0) \cdot ( p_0 \cdot(1 - q_2)- q_2 \cdot (1-p_0)) }{q_1 \cdot  (1-p_0) \cdot p_0 \cdot(1 - q_2) - p_0  \cdot (1 - q_1) \cdot q_2 \cdot (1-p_0)}
%\end{cases}\\
%&\Longleftrightarrow
\begin{cases}
\alpha &= \frac{(1 - q_2) \cdot (q_1   - p_0 )}{(1-p_0) \cdot (q_1 - q_2) }\\
\beta &= \frac{q_1 \cdot ( p_0  - q_2 ) }{   p_0 \cdot(q_1 - q_2) }.\\
\end{cases}
\end{align}
Assume that  $q_1   \leq q_2$, then we have:
\begin{align}
0\leq \alpha &\Longleftrightarrow 0\leq \frac{(1 - q_2) \cdot (q_1   - p_0 )}{(1-p_0) \cdot (q_1 - q_2) } \Longleftrightarrow q_1   - p_0 \leq 0,\\
 \alpha \leq 1 &\Longleftrightarrow  \frac{(1 - q_2) \cdot (q_1   - p_0 )}{(1-p_0) \cdot (q_1 - q_2) } \leq 1 \Longleftrightarrow q_2   - p_0 \geq 0,\\
 0\leq \beta &\Longleftrightarrow 0\leq \frac{q_1 \cdot ( p_0  - q_2 ) }{   p_0 \cdot(q_1 - q_2) } \Longleftrightarrow p_0  - q_2 \leq 0,\\
 \beta \leq 1 &\Longleftrightarrow  \frac{q_1 \cdot ( p_0  - q_2 ) }{   p_0 \cdot(q_1 - q_2) } \leq 1 \Longleftrightarrow  p_0  - q_1\geq 0.\\
\end{align}
This proves the equivalence:
\begin{align}
\begin{cases}
q_1   \leq q_2\\
\alpha \in [0,1]\\
\beta \in [0,1]
\end{cases}
&\Longleftrightarrow q_1 \leq p_0 \leq q_2.
\end{align}
Assume that  $q_1   \geq q_2$, then we have:
\begin{align}
0\leq \alpha &\Longleftrightarrow 0\leq \frac{(1 - q_2) \cdot (q_1   - p_0 )}{(1-p_0) \cdot (q_1 - q_2) } \Longleftrightarrow q_1   - p_0 \geq 0,\\
 \alpha \leq 1 &\Longleftrightarrow  \frac{(1 - q_2) \cdot (q_1   - p_0 )}{(1-p_0) \cdot (q_1 - q_2) } \leq 1 \Longleftrightarrow q_2   - p_0 \leq 0,\\
 0\leq \beta &\Longleftrightarrow 0\leq \frac{q_1 \cdot ( p_0  - q_2 ) }{   p_0 \cdot(q_1 - q_2) } \Longleftrightarrow p_0  - q_2 \geq 0,\\
 \beta \leq 1 &\Longleftrightarrow  \frac{q_1 \cdot ( p_0  - q_2 ) }{   p_0 \cdot(q_1 - q_2) } \leq 1 \Longleftrightarrow  p_0  - q_1\leq 0.\\
\end{align}
This proves the equivalence:
\begin{align}
\begin{cases}
q_1   \geq q_2\\
\alpha \in [0,1]\\
\beta \in [0,1]
\end{cases}
&\Longleftrightarrow q_2 \leq p_0 \leq q_1.
\end{align}
Hence there exists probability parameters $\alpha \in [0,1]$ and $\beta \in [0,1]$ if and only if $q_1 \leq p_0 \leq q_2$ or $q_2 \leq p_0 \leq q_1$.
%\end{proof}

%%%%%%%%%%%%%%%%%%%%%%%%%%%%%%%%%%%%%%%%%%%%%%%%%%%%%%%%%%%%%%%%%%%%%%%%%%%%%%%%%%%%%%%%%%%%%%%%%%%%%%
%
%\bibliographystyle{ieeetr}
%\bibliography{/Users/maelletreust/Documents/Redaction/BiblioMael}

%%%%%%%%%%%%%%%%%%%%%%%%%%%%%%%%%%%%%%%%%%%%%%%%%%%%%%%%%%%%%%%%%%%%

%%%%%%%%%%%%%%%%%%%%%%%%%%%%%%%%%%%%%%%%%%%%%%%%%%%%%%%%%%%%%%%%%%%%%

\end{document}